\DeclareMathOperator\supp{supp}
\DeclareMathOperator\inter{int}
\DeclareMathOperator\conv{conv}
\DeclareMathOperator\epi{epi}
\newtheorem{theorem}{Theorem}[section]
\newtheorem{proposition}[theorem]{Proposition}
\newtheorem{lemma}[theorem]{Lemma}
\newtheorem{corollary}[theorem]{Corollary}
\newtheorem{claim}[theorem]{Claim}
\theoremstyle{definition}
\newtheorem{scenario}{Scenario}
\newtheorem{definition}[theorem]{Definition}
\newtheorem{remark}[theorem]{Remark}
\definecolor{backcolour}{rgb}{0.63, 0.79, 0.95}
\lstdefinestyle{mystyle}{
  backgroundcolor=\color{backcolour},
  basicstyle=\ttfamily\footnotesize,
  breakatwhitespace=false,         
  breaklines=true,                 
  captionpos=b,                    
  keepspaces=true,                 
  numbers=left,                    
  numbersep=5pt,                  
  showspaces=false,                
  showstringspaces=false,
  showtabs=false,                  
  tabsize=2
}
\providecommand{\keywords}[1]{\textbf{\textit{Keywords:}} #1}
\providecommand{\jel}[1]{\textbf{\textit{JEL Classifications:}} #1}
\begin{document}
\title{Making Information More Valuable}
\author{ Mark Whitmeyer\thanks{Arizona State University. Email: \href{mailto:mark.whitmeyer@gmail.com}{mark.whitmeyer@gmail.com}. Dedicated to Kylee Sutton. I am grateful to Simon Board, Costas Cavounidis, Gregorio Curello, Dana Foarta, Rosemary Hopcroft, Vasudha Jain, Doron Ravid, Eddie Schlee, Ludvig Sinander, Bruno Strulovici, Can Urgun, Tong Wang, Joseph Whitmeyer, Tom Wiseman, Renkun Yang, Kun Zhang, and various seminar and conference audiences for their feedback. I also thank the editor, Emir Kamenica, and four anonymous referees for their useful suggestions. This paper was formerly titled ``Flexibility and Information.''}}

\date{\today}

\maketitle

\begin{abstract}
We study what changes to an agent's decision problem increase her value for information. We prove that information becomes more valuable if and only if the agent's reduced-form payoff in her belief becomes more convex. When the transformation corresponds to the addition of an action, the requisite increase in convexity occurs if and only if a simple geometric condition holds, which extends in a natural way to the addition of multiple actions. We apply these findings to two scenarios: a monopolistic screening problem in which the good is information and delegation with information acquisition.
\end{abstract}
\keywords{Expected Utility, Selling Information, Information Acquisition, Delegation}\\
\jel{D81; D82; D83}

\newpage

\setlength{\epigraphwidth}{3.75in}\begin{epigraphs}
\qitem{When action grows unprofitable, gather information; when information grows unprofitable, sleep.}%
      {Ursula Le Guin, \textit{The Left Hand of Darkness}}
\end{epigraphs}

\section{Introduction}

To a rational, expected-utility maximizing decision-maker, information is always valuable. However, some kinds of information are more valuable than others. Likewise, information is more valuable in certain decision problems than in others. The economics literature, commencing with the seminal work of Blackwell (\cite{blackwell} and \cite{blackwell2}), has largely focused on the first comparison, between information structures. In this paper, we study the second comparison, between decision problems. When can we say that one agent values information more than another? Equivalently, suppose we alter an agent's decision problem. What sorts of modifications increase her value for information?

Blackwell's way of comparing information structures is relatively detail-free. In his ranking, information structure \(1\) is more valuable than information structure \(2\) if for \textit{any} prior held by the agent and \textit{any} decision problem, the agent prefers \(1\) to \(2\). In this paper, in specifying what it means for agent \(1\) to value information more than agent \(2\), we take a similarly broad approach. Namely, we require that \textit{any} information structure be more valuable to agent \(1\) than agent \(2\), \textit{no matter the prior}.

We begin by identifying that relative \textbf{convexity} distinguishes agents' comparative love of information. Agent \(1\) values information more than agent \(2\) if the difference in the agents' value functions\footnote{An agent's value function, \(V\left(\mu\right)\), is her maximal expected payoff at any belief \(\mu \in \Delta\left(\Theta\right)\), obtained by plugging in an optimizing action. We define this object formally on page \(6\) in Expression \ref{valuefunction}.} \(V_1-V_2\) is convex. Next, we turn our attention to the value functions themselves. What modifications to an agent's decision problem result in an increase in convexity? One natural way to alter an agent's decision problem is by allowing her an additional action. How does increased flexibility--a greater capacity to adapt her behavior to new information--change an agent's value for information?

In \(\S\)\ref{secmuch}, we carry this analysis further by allowing the agent not just one but potentially multiple additional actions. Next, in \(\S\)\ref{lessflexsec}, we remove actions. We then leave the set of actions unchanged (\(\S\)\ref{transsec}), but instead scale the agent's utility. This allows us to speak to the effects of repetition and aggregate risk on the value of information. In \(\S\)\ref{riskaverse}, we reveal that increased (or decreased) risk aversion has an ambiguous affect on an agent's value for information. 

Central to our study is the observation that a modification to an agent's decision problem alters her value for information through two channels. The first is the agent's sensitivity to information--if her value for information increases (in the manner defined in this paper) it must be that she is more reactive to information. The second is the value to the agent of distinguishing between actions. This must also increase if the agent's value for information is to increase. All in all, a transformation makes information more valuable if and only if the agent becomes more sensitive to information and the value of distinguishing between actions increases.

When the transformation to the decision problem is either the addition or subtraction of actions, the first channel is all-important. This is particularly stark when we modify the agent's decision problem by adding a single action (\(\S\)\ref{secmore}): the agent's value for information increases if and only if she becomes more sensitive to information. We uncover a simple geometric condition necessary and sufficient for this to transpire and show that an iterative version of this condition also guarantees an increase in the value of information when multiple actions are added. Moreover, although this condition is not necessary to make information more valuable when multiple actions are added, any failure of necessity is not robust--perturbing the utilities from the new actions slightly will make it so that the agent does not become more sensitive to information.

Perhaps unexpectedly, we discover that unless all of the remaining actions or all of the removed actions were initially dominated, taking away actions can never lead to a higher value for information. That is, it is only an elimination that results in a totally new decision problem (in effect) or the exact same decision problem, that can lead to an increase in an agent's value for information. Any removal other than these necessarily makes the agent less sensitive to information.



\subsection{Motivating Example}

The question under study has significant practical relevance. The job of a regulator is to enact policies that modify the incentives of agents in some environment. This typically entails the addition or subtraction of actions: there are contracts that an insurer may not offer, assets that an investment firm may not sell, and limits to how many fish a trawler may catch.\footnote{The verdicts that can be handed down in criminal cases are also legislated, so our results also speak to what kinds of verdicts improve incentives for information acquisition (cf. \cite{siegel2020economic}).} Insurers themselves change agents' payoffs by reducing their risk, flattening their payoffs. Firms do the opposite with their workers: bonus schemes tied to a worker's performance make her payoff steeper and more sensitive to randomness.

Consider for instance an insurance provider dictating what treatments it will cover; \textit{viz.,} what procedures a doctor may conduct. For simplicity, suppose there are three conditions a patient with an injured hand may have--three states of the world. In one state, state \(0\), the injury is just a sprain; in another, state \(1\), a bone is broken but not displaced; and in state \(2\), the fracture is displaced.

Suppose first the doctor may only offer one treatment: place a cast on the hand (action \(c\)). Accordingly, she has two possible actions, do nothing (action \(n\)), which is uniquely optimal if the injury is just a sprain; or cast the hand, which is uniquely optimal if the hand is broken. This decision problem is represented in Figure \ref{exsub1}: point \(\left(\mu_1,\mu_2\right)\) specifies the respective probabilities (beliefs) that the bone is broken but not displaced or broken and displaced. Accordingly, the plain blue region is the region of probabilities in which \(n\) is optimal; and the patterned red region are those probabilities for which \(c\) is optimal.

Let us now consider two possible new treatments afforded to the doctor. Suppose the provider now covers surgery (action \(s\)). This is relatively high-risk and is only optimal if the doctor is confident the bone is broken and displaced. Figure \ref{exsub2} represents this new decision problem: \(s\) is optimal if and only if the doctor's belief is in the purple triangle on the top left. On the other hand, suppose the provider instead allows a conservative treatment consisting of stretching and rehabilitating exercises (action \(r\)). This is better than nothing in the case of a fracture, but is inferior to rest for sprains. This scenario is Figure \ref{exsub3}, where \(r\) is optimal for beliefs in the central dark gray region.

Which of these new options, if either, does not dampen the doctor's enthusiasm for information, regardless of her prior or what that information may be? As we discover in this paper, the answer is simple, only the former of the two potential new procedures, surgery, makes information more valuable. Indeed, suppose that a sprain and a break are equally likely. With only the initial two treatments to choose from, the doctor strictly benefits from any information. If we gave the doctor the conservative option, this would clearly no longer be true: any information that doesn't move her beliefs much is now worthless, as the conservative treatment remains optimal at those beliefs. In contrast, the surgery option makes information weakly more valuable. 

The crucial difference between the two potential new actions is that the surgery option is \textbf{refining}: only the region of beliefs in which doing nothing is optimal shrinks. In contrast, the conservative treatment partially replaces each pre-existing treatment.

\begin{figure}
\centering
\begin{subfigure}{.5\textwidth}
  \centering
  \includegraphics[scale=.22]{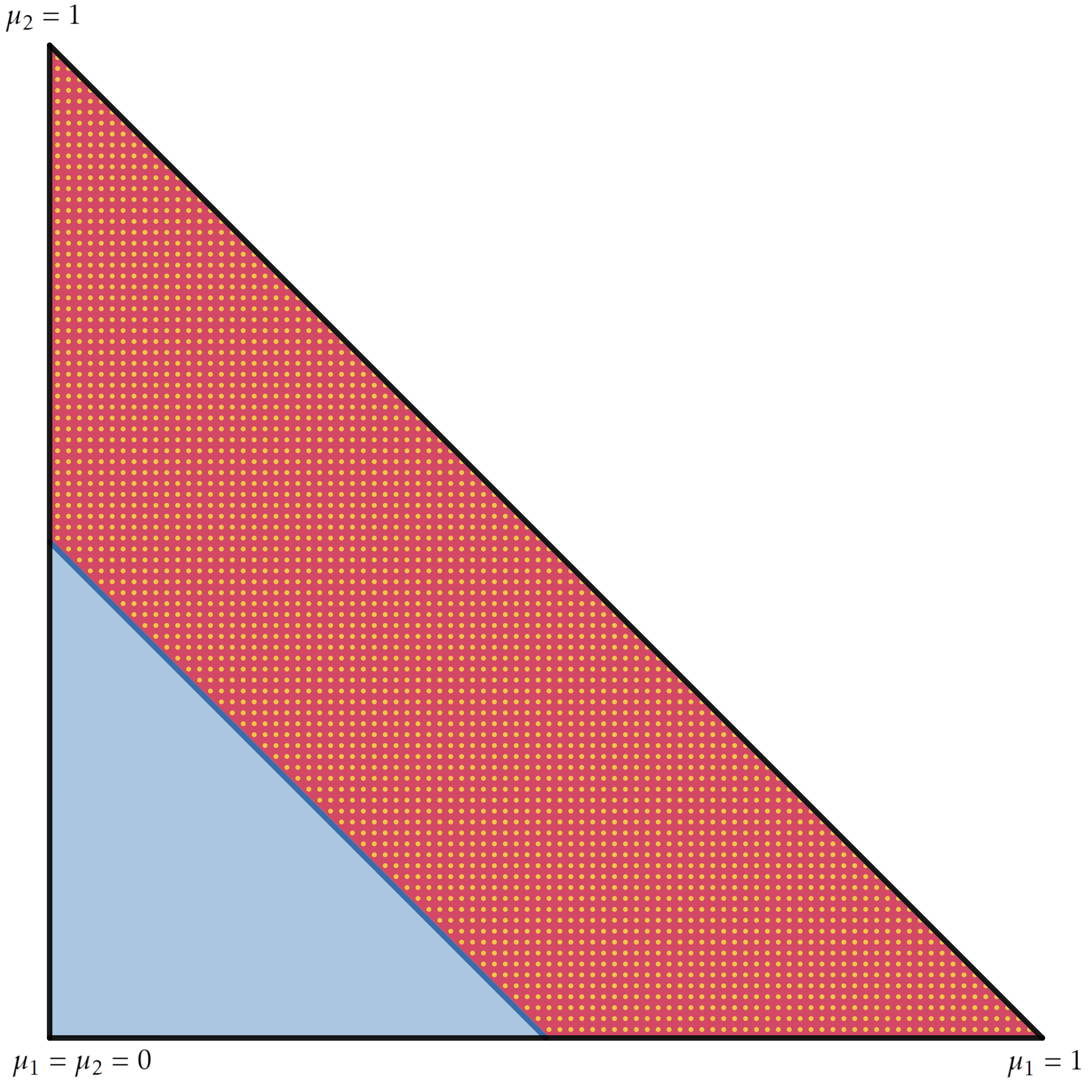}
  \caption{Choosing between \(c\) and \(n\)}
  \label{exsub1}
\end{subfigure}
\par
\bigskip
\par
\bigskip
\par
\begin{subfigure}{.5\textwidth}
  \centering
  \includegraphics[scale=.22]{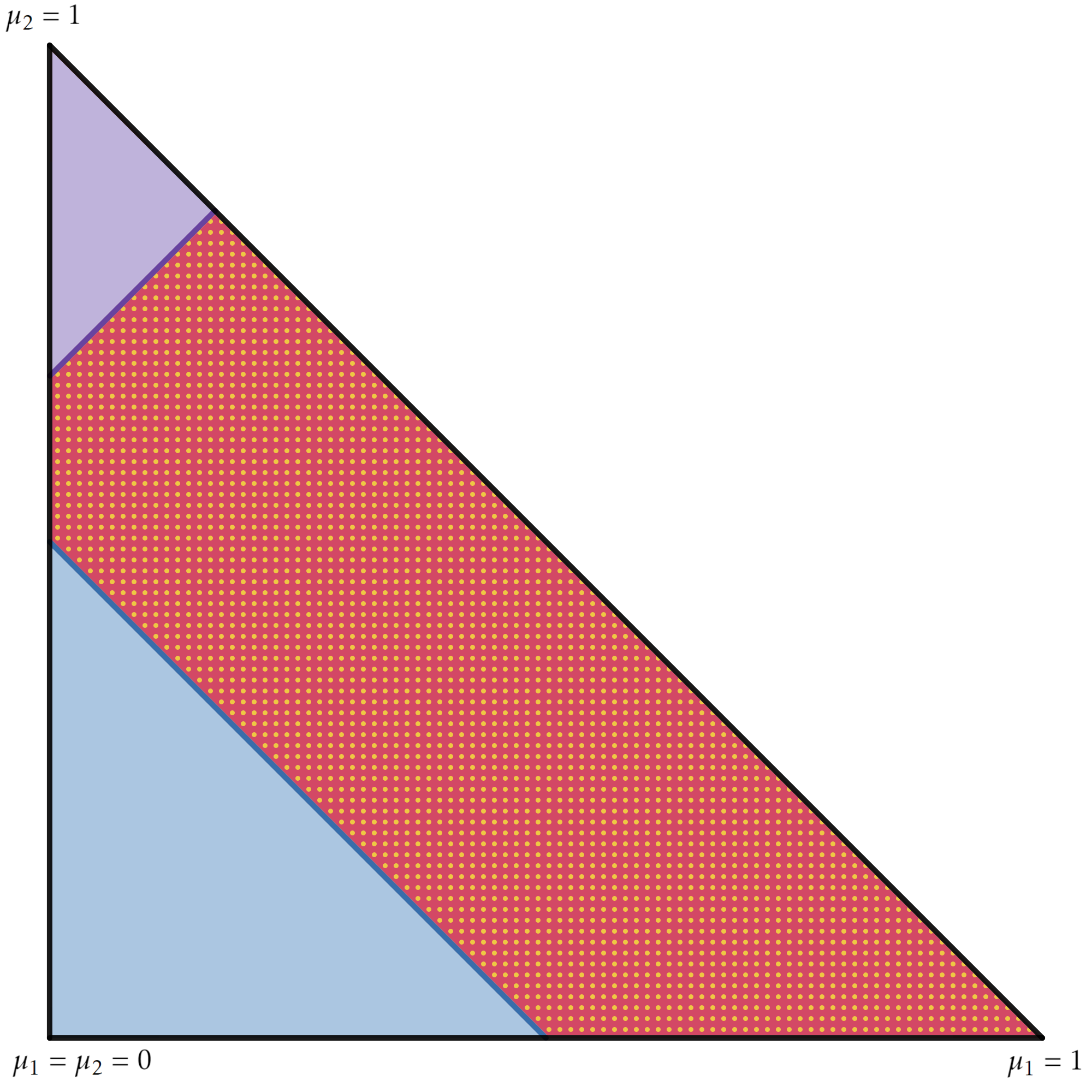}
  \caption{Choosing between \(c\), \(n\), and \(s
  \)}
  \label{exsub2}
\end{subfigure}%
\begin{subfigure}{.5\textwidth}
  \centering
  \includegraphics[scale=.22]{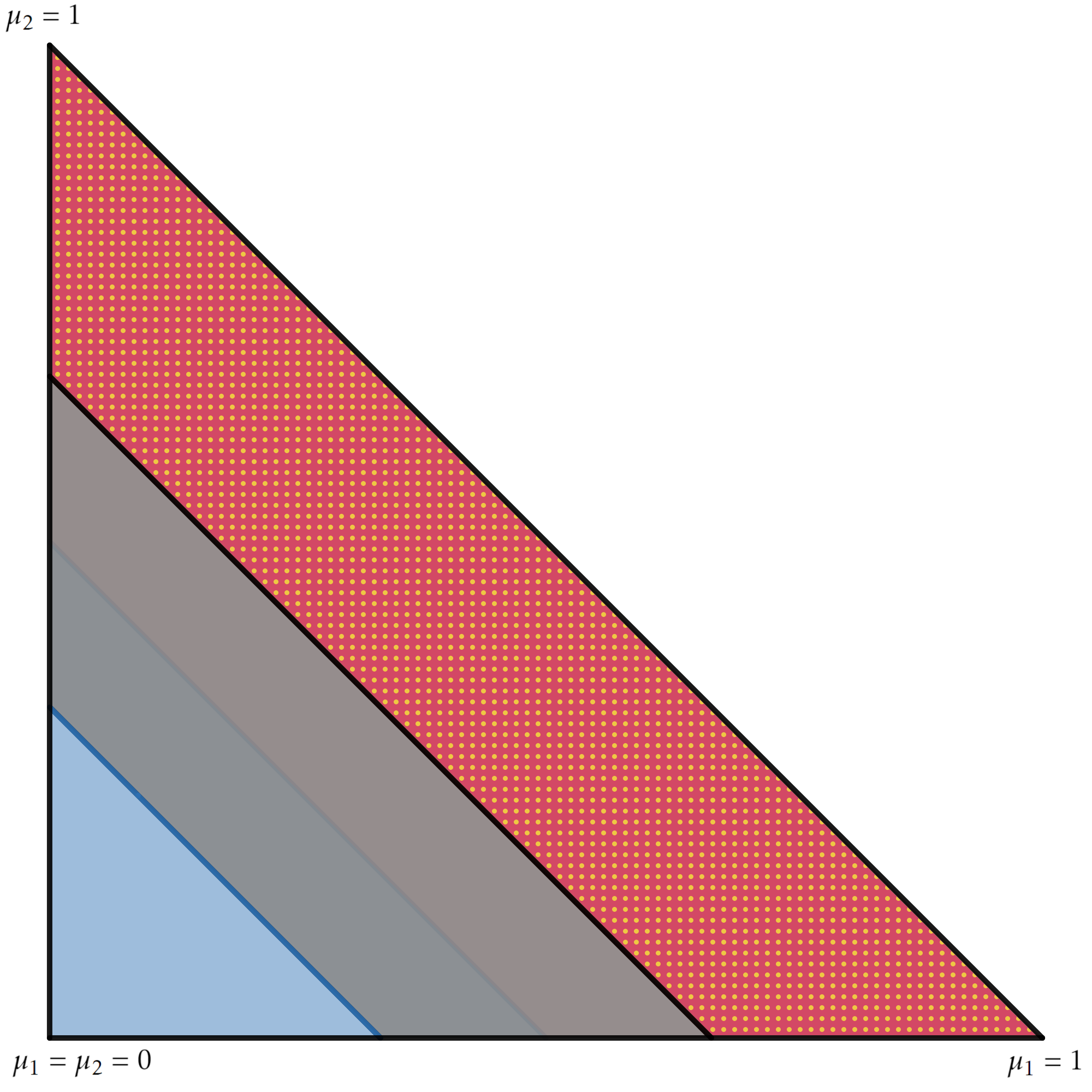}
  \caption{Choosing between \(c\), \(n\), and \(r\)}
  \label{exsub3}
\end{subfigure}
\caption{The subdivision of the \(2\)-simplex corresponding to the motivating example. There are three states \(\left\{0,1,2\right\}\), with \(\mu_1 \coloneqq \mathbb{P}\left(1\right)\) and \(\mu_2 \coloneqq \mathbb{P}\left(2\right)\).}
\label{exex}
\end{figure}

\section{The Model}

There is a grand set of actions \(\mathcal{A}\). Our protagonist is a decision-maker, an agent who initially possesses a compact set of actions \(A \subseteq \mathcal{A}\). There is an unknown state of the world \(\theta\), which is drawn according to some full-support prior \(\mu_{0}\) from some finite set of states \(\Theta\). Initially, the agent has some continuous utility function \(u \colon \mathcal{A} \times \Theta \to \mathbb{R}\).

\(\mathcal{D} \coloneqq \left(u,A,\Theta\right)\) denotes the agent's \emph{Initial Decision Problem}. We are studying the effect of a transformation of the decision problem on the agent's value for information.\footnote{Equivalently, we are comparing two different agents' values for information.} To that end, \(\hat{\mathcal{D}} \coloneqq \left(\hat{u},\hat{A},\Theta\right)\) denotes the agent's \emph{Transformed Decision Problem}. Here are a few leading examples of such transformations:

\begin{scenario}
    \textbf{Becoming More Flexible.} \(A\) is finite. In \(\hat{\mathcal{D}}\), the agent's utility function remains unchanged, \(\hat{u} = u\); and her new set of actions is \(\hat{A} \coloneqq A \cup  B\) for some additional finite set of actions \(B\in \mathcal{A} \setminus A\).
\end{scenario}

\begin{scenario}
    \textbf{Becoming A Little More Flexible.} This is the special case of the agent becoming more flexible in which the set of actions being added, \(B\), is a singleton. That, is \(B = \left\{\hat{a}\right\}\) for some \(\hat{a} \in \mathcal{A} \setminus A\).
\end{scenario}

\begin{scenario}
    \textbf{Becoming Less Flexible.} Yet again, \(A\) is finite and \(\hat{u} = u\). Now, however, the agent loses actions moving from \(\mathcal{D}\) to \(\hat{\mathcal{D}}\): \(\emptyset \neq \hat{A} \subset A\).
\end{scenario}

\begin{scenario}
    \textbf{Transforming the Agent's Utility Function.} In \(\hat{\mathcal{D}}\), \(\hat{u} = \phi \circ u\) for some strictly increasing, continuous \(\phi\); and her new set of actions is unaltered: \(\hat{A} = A\).
\end{scenario}

We say that \(\mathcal{D}\) (\(\hat{\mathcal{D}}\)) is \emph{Finite} if \(A\) (\(\hat{A}\)) is finite. \(\Delta \left(\Theta\right)\) denotes the simplex of probability distributions over \(\Theta\) and \(\inter \Delta \left(\Theta\right)\) its interior.\footnote{For a set \(Y\), \(\inter Y\) denotes its relative interior.} When the agent acquires information, she does so by observing the realization of an experiment, stochastic map \(\pi \colon \Theta \to \Delta\left(S\right)\), where \(S\) is a compact set of signal realizations. Equivalently (\cite*{kam}), she obtains a Bayes-plausible distribution over posterior beliefs (posteriors) \(\Phi \in \mathcal{F}\left(\mu_{0}\right) \subseteq \Delta \left(\Delta\left(\Theta\right)\right)\).\footnote{A distribution \(\Phi\) is Bayes-plausible if it is supported on a subset of \(\Delta \left(\Theta\right)\) and \(\mathbb{E}_{\Phi}\left(\mu\right) = \mu_{0}\).}

Given an initial decision problem \(\mathcal{D}\), the agent's \emph{Value Function}, in belief \(\mu \in \Delta\left(\Theta\right)\), is
\[\label{valuefunction}\tag{\(1\)} V\left(\mu\right) \coloneqq \max_{a \in A}
\mathbb{E}_{\mu}u\left(a,\theta\right) \text{.}\]
\(\hat{V}\) is the analogous object in the transformed decision problem, \(\hat{\mathcal{D}}\), and both functions are convex. There are two natural ways to understand an increase in an agent's value for information.

\begin{definition}[Exogenous Information]
    Given \(\mathcal{D}\) and \(\hat{\mathcal{D}}\), we say that \emph{The Transformation Generates a Greater Value for Information} if \[\mathbb{E}_{\Phi}\hat{V}\left(\mu\right) - \hat{V}\left(\mu_{0}\right) \geq \mathbb{E}_{\Phi}V\left(\mu\right) - V\left(\mu_{0}\right)\text{,}\] for all \(\Phi \in \mathcal{F}\left(\mu_{0}\right)\) and \(\mu_{0} \in \inter \Delta\left(\Theta\right)\).
\end{definition}
The transformation generates a greater value for information if the resulting expected payoff as a result of obtaining information is greater in the transformed decision problem than in the initial decision problem, no matter the information (for any Bayes-plausible \(\Phi\)) and no matter the prior (any \(\mu_{0} \in \inter \Delta \left(\Theta\right)\)).

Our second way of interpreting the value of information gives the agent greater control over information acquisition. Now, the experiment is not exogenous but instead an endogenous choice of the agent. Given an initial decision problem \(\mathcal{D}\) and a prior \(\mu_{0}\), in the agent's flexible information acquisition problem, she solves 
 \[\label{flex}\tag{\(2\)} \max_{\Phi \in \mathcal{F}\left(\mu_{0}\right)}\int_{\Delta\left(\Theta\right)}V\left(\mu\right)d\Phi\left(\mu\right) - D\left(\Phi\right)\text{,}\]
where \(D\) is a uniformly posterior-separable cost functional,\footnote{This family of costs is introduced in \cite*{caplin2022rationally}.} that for which there exists a strictly convex function \(c \colon \Delta\left(\Theta\right) \to \mathbb{R}\) such that
\[D\left(\Phi\right) = \int_{\Delta\left(\Theta\right)}c\left(\mu\right)d\Phi\left(\mu\right) - c\left(\mu_{0}\right) \text{.}\]
Similarly, in the transformed decision problem \(\hat{\mathcal{D}}\), the agent solves
 \[\label{flexhat}\tag{\(3\)} \max_{\Phi \in \mathcal{F}\left(\mu_{0}\right)}\int_{\Delta\left(\Theta\right)}\hat{V}\left(\mu\right)d\Phi\left(\mu\right) - D\left(\Phi\right)\text{,}\]
\begin{definition}[Endogenous Information]
    Given \(\mathcal{D}\) and \(\hat{\mathcal{D}}\), we say that \emph{The Transformation Does Not Generate Less Information Acquisition} if for any prior \(\mu_{0} \in \inter{\Delta\left(\Theta\right)}\), UPS cost functional \(D\), and solution to the agent's information acquisition problem in the initial decision problem (Problem \ref{flex}), \(\Phi^{*}\), there exists a solution to the agent's information acquisition problem in the transformed decision problem (Problem \ref{flexhat}), \(\hat{\Phi}^{*}\), that is not a strict mean-preserving contraction (MPC) of \(\Phi^{*}\).\footnote{For distributions \(P\) and \(Q\) supported on a compact, convex subset, \(X\), of a vector space, \(P\) is an MPC of \(Q\) if \(\int \phi dP \leq \int \phi dQ\) for all convex functions \(\phi \colon X \to \mathbb{R}\). \(P\) is a strict MPC of \(Q\) if \(P\) is an MPC of \(Q\) but \(Q\) is not an MPC of \(P\). \(Q\) is a mean-preserving spread (MPS) of \(P\) if \(P\) is an MPC of \(Q\).} 
\end{definition}
 
The transformation does not generate less information acquisition if for any optimal information acquisition strategy in \(\mathcal{D}\), there is an optimal information acquisition strategy in \(\hat{\mathcal{D}}\) in which the agent does not acquire strictly less information.

\section{Making Information More Valuable}\label{section2}

In this section, we begin by connecting the increased convexity of value function \(\hat{V}\) vis-a-vis \(V\) to an increased value for information. The remainder of the paper then studies what changes to the primitives of the decision problem lead to the requisite increased convexity. The central organizing result of the paper is the following theorem, which is at best a modest contribution in its own right.\footnote{As we shortly note, others have observed nearly-identical versions of the sufficiency results. Moreover, the necessity direction is closely related to classical results from the comparative-statics literature (e.g., Theorem 10 in \cite{monotonecomp}), once one realizes that ``adding a convex function is precisely what an `increasing-differences' shift means in this setting.'' I thank an anonymous referee for the quoted observation.}
\begin{theorem}\label{moreconvex}
    Given \(\mathcal{D}\) and \(\hat{\mathcal{D}}\), the following are equivalent:
    \begin{enumerate}[label={(\roman*)},noitemsep,topsep=0pt]
        \item \(\hat{V}-V\) is convex.
        \item The transformation does not generate less information acquisition.
        \item The transformation generates a greater value for information.
    \end{enumerate}
\end{theorem}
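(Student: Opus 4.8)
The plan is to establish the two equivalences (i)$\Leftrightarrow$(iii) and (i)$\Leftrightarrow$(ii) directly, since the exogenous-information statement is essentially Jensen's inequality in disguise while the endogenous-information statement needs a short argument through the objective functions $\int(V-c)\,d\Phi$ and $\int(\hat V-c)\,d\Phi$. For (i)$\Rightarrow$(iii): if $g:=\hat V-V$ is convex then, for any $\mu_0\in\inter\Delta(\Theta)$ and any $\Phi\in\mathcal F(\mu_0)$, Jensen's inequality for $g$ on the convex set $\Delta(\Theta)$ (with barycenter $\mu_0$ under $\Phi$) gives $\mathbb E_\Phi g(\mu)\ge g(\mu_0)$, which rearranges to exactly the inequality defining a greater value for information. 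For (iii)$\Rightarrow$(i): $V$ and $\hat V$ are continuous on $\Delta(\Theta)$, being maxima of the continuous affine maps $\mu\mapsto\mathbb E_\mu u(a,\theta)$ over compact action sets, so $g$ is continuous; given $\mu_1,\mu_2$ and $\lambda\in(0,1)$ with $\mu_0:=\lambda\mu_1+(1-\lambda)\mu_2\in\inter\Delta(\Theta)$, applying (iii) to the split $\Phi=\lambda\delta_{\mu_1}+(1-\lambda)\delta_{\mu_2}\in\mathcal F(\mu_0)$ yields $\lambda g(\mu_1)+(1-\lambda)g(\mu_2)\ge g(\mu_0)$, and the interiority restriction on $\mu_0$ is removed by perturbing $\mu_0,\mu_1,\mu_2$ toward the barycenter of the simplex and passing to the limit with continuity. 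Hence $g$ is convex.

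For (i)$\Rightarrow$(ii) I use that $\Phi$ solves Problem~\ref{flex} iff it maximizes $\int(V-c)\,d\Phi$ over $\mathcal F(\mu_0)$, and likewise $\Phi$ solves Problem~\ref{flexhat} iff it maximizes $\int(\hat V-c)\,d\Phi$, with a maximizer existing in each case by weak-$*$ compactness of $\mathcal F(\mu_0)$. Fix a solution $\Phi^*$ of Problem~\ref{flex} and suppose, for contradiction, that every solution of Problem~\ref{flexhat} is a strict MPC of $\Phi^*$. Pick such a solution $\hat\Phi^*$; then $\Phi^*$ is an MPS of $\hat\Phi^*$, so convexity of $g$ gives $\int g\,d\Phi^*\ge\int g\,d\hat\Phi^*$. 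Adding the optimality inequality $\int(V-c)\,d\Phi^*\ge\int(V-c)\,d\hat\Phi^*$ yields $\int(\hat V-c)\,d\Phi^*\ge\int(\hat V-c)\,d\hat\Phi^*$, so $\Phi^*$ is itself a solution of Problem~\ref{flexhat}; but $\Phi^*$ is not a strict MPC of itself, contradicting the supposition. Therefore some solution of Problem~\ref{flexhat} is not a strict MPC of $\Phi^*$, which is (ii).

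For (ii)$\Rightarrow$(i) I argue the contrapositive by exhibiting a prior and a UPS cost that defeat (ii). If $g$ is not convex then, by its continuity, there are $\mu_0\in\inter\Delta(\Theta)$, $\mu_1\ne\mu_2$ in $\Delta(\Theta)$, and $\lambda\in(0,1)$ with $\mu_0=\lambda\mu_1+(1-\lambda)\mu_2$ and $\eta:=g(\mu_0)-\lambda g(\mu_1)-(1-\lambda)g(\mu_2)>0$. Fix a continuous strictly convex $\phi_0\colon\Delta(\Theta)\to\mathbb R$ and put $c:=\hat V+\varepsilon\phi_0$ for $\varepsilon>0$ small; then $c$ is strictly convex, so $D(\Phi):=\int c\,d\Phi-c(\mu_0)$ is a legitimate UPS cost. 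In Problem~\ref{flexhat} the objective is $\int(\hat V-c)\,d\Phi=-\varepsilon\int\phi_0\,d\Phi$, strictly concave in $\Phi$, so $\delta_{\mu_0}$ is its unique solution. In Problem~\ref{flex} the objective is $\int(V-c)\,d\Phi=-\int(g+\varepsilon\phi_0)\,d\Phi$; comparing $\delta_{\mu_0}$ with $\lambda\delta_{\mu_1}+(1-\lambda)\delta_{\mu_2}$ shows the latter does strictly better once $\varepsilon\bigl(\lambda\phi_0(\mu_1)+(1-\lambda)\phi_0(\mu_2)-\phi_0(\mu_0)\bigr)<\eta$, which holds for small $\varepsilon$. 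Hence $\delta_{\mu_0}$ is not a solution of Problem~\ref{flex}, so every solution $\Phi^*$ of Problem~\ref{flex} differs from $\delta_{\mu_0}$, and then $\delta_{\mu_0}$---the unique solution of Problem~\ref{flexhat}---is a strict MPC of $\Phi^*$. Thus (ii) fails.

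The crux is the equivalence (i)$\Leftrightarrow$(ii): both directions depend on recasting ``the agent acquires (weakly) less information'' as a comparison of $\int(V-c)\,d\Phi$ and $\int(\hat V-c)\,d\Phi$ along mean-preserving spreads and contractions, and one must be careful with the precise meaning of ``strict MPC'' and with the existence of optimal experiments. The (ii)$\Rightarrow$(i) step is the most delicate, as it requires a cost that simultaneously forces the transformed agent's optimal experiment to be degenerate while keeping the original agent's optimal experiment non-degenerate; the choice $c=\hat V+\varepsilon\phi_0$ threads this needle, with the non-convexity slack $\eta$ dictating how small $\varepsilon$ must be.
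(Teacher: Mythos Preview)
Your proof is correct and follows essentially the same approach as the paper: Jensen/MPC for the sufficiency directions, the contradiction argument for (i)$\Rightarrow$(ii), and the key cost construction $c=\hat V+\varepsilon\phi_0$ for the contrapositive (ii)$\Rightarrow$(i). The only cosmetic difference is that in (i)$\Rightarrow$(ii) you conclude by showing $\Phi^*$ is itself optimal in Problem~\ref{flexhat} (hence a solution that is not a strict MPC of itself), whereas the paper derives a strict inequality contradicting the MPC definition; both routes are valid.
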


\medskip

First, we establish that the convexity of \(\hat{V}-V\) implies that a transformation generates a greater value for information. This is an implication of a stronger result that is a direct consequence of Blackwell's theorem. For a distribution \(\Phi\), \(MPC\left(\Phi\right)\) denotes the set of mean-preserving contractions (MPCs) of \(\Phi\).
\begin{lemma}\label{lemmany}
    If \(\hat{V} - V\) is convex, for any prior \(\mu_{0} \in \inter \Delta\left(\Theta\right)\) and distributions over posteriors \(\Phi, \Upsilon \in \mathcal{F}_{\mu_{0}}\) with \(\Upsilon \in MPC\left(\Phi\right)\),
    \[\mathbb{E}_{\Phi}\hat{V}\left(\mu\right) - \mathbb{E}_{\Upsilon}\hat{V}\left(\mu\right) \geq  \mathbb{E}_{\Phi}V\left(\mu\right) - \mathbb{E}_{\Upsilon}V\left(\mu\right)\text{.}\]
\end{lemma}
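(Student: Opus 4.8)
The plan is to reduce the desired two-sided comparison to a single statement about the convex function $g \coloneqq \hat V - V$ and then invoke the defining property of the mean-preserving-contraction order. First I would rearrange: by linearity of expectation, the claimed inequality
\[\mathbb{E}_{\Phi}\hat V\left(\mu\right) - \mathbb{E}_{\Upsilon}\hat V\left(\mu\right) \;\geq\; \mathbb{E}_{\Phi}V\left(\mu\right) - \mathbb{E}_{\Upsilon}V\left(\mu\right)\]
is equivalent to
\[\int_{\Delta\left(\Theta\right)} g\,d\Phi \;\geq\; \int_{\Delta\left(\Theta\right)} g\,d\Upsilon\text{,} \qquad g \coloneqq \hat V - V\text{.}\]

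Next I would observe that the hypotheses place us exactly in the setting of the MPC definition recalled in the footnote: $\Delta\left(\Theta\right)$ is a compact, convex subset of a finite-dimensional vector space; $\Phi$ and $\Upsilon$ are Borel probability measures on it; and $g$ is convex on $\Delta\left(\Theta\right)$ by assumption, and continuous (hence bounded and integrable) since both $V$ and $\hat V$ are continuous, being maxima of the continuous families $\mu \mapsto \mathbb{E}_{\mu}u\left(a,\theta\right)$ over the compact sets $A$ and $\hat A$. Since $\Upsilon \in MPC\left(\Phi\right)$, by definition $\int \phi\,d\Upsilon \leq \int \phi\,d\Phi$ for every convex $\phi \colon \Delta\left(\Theta\right) \to \mathbb{R}$; taking $\phi = g$ gives the displayed inequality, and hence the lemma.

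There is essentially no obstacle here beyond bookkeeping: the substance is the text's own observation that ``value for information''-type comparisons are governed by relative convexity, together with the standard fact (Blackwell, or Strassen's theorem) that the MPC order on distributions over posteriors is precisely the order under which convex functions integrate monotonically — which is how the footnote defines it. If one instead started from a garbling / Blackwell-experiment formulation of $\Upsilon \in MPC\left(\Phi\right)$, the only additional step would be to recall that equivalence; with the footnote's definition in hand it is immediate. Finally, I would note that the lemma yields the ``convexity $\Rightarrow$ greater value for information'' direction of Theorem \ref{moreconvex} as the special case $\Upsilon = \delta_{\mu_0}$, the Dirac mass at the prior, which lies in $MPC\left(\Phi\right)$ for every $\Phi \in \mathcal{F}_{\mu_0}$; the inequality then reads $\mathbb{E}_{\Phi}\hat V\left(\mu\right) - \hat V\left(\mu_0\right) \geq \mathbb{E}_{\Phi}V\left(\mu\right) - V\left(\mu_0\right)$.
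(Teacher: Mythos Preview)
Your proof is correct and takes essentially the same approach as the paper: both rearrange the inequality to $\mathbb{E}_{\Phi}[\hat V - V] \geq \mathbb{E}_{\Upsilon}[\hat V - V]$ and then invoke the definition of an MPC applied to the convex function $\hat V - V$. The only cosmetic difference is that the paper phrases this as a one-line contradiction argument, whereas you give the direct version with a bit more justification of continuity and integrability.
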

\begin{proof}
Suppose for the sake of contradiction that \[\mathbb{E}_{\Phi}\hat{V}\left(\mu\right) - \mathbb{E}_{\Upsilon}\hat{V}\left(\mu\right) <  \mathbb{E}_{\Phi}V\left(\mu\right) - \mathbb{E}_{\Upsilon}V\left(\mu\right)\text{,}\]
which holds if and only if
\[\mathbb{E}_{\Phi}\left[\hat{V}\left(\mu\right) - V\left(\mu\right)\right] < \mathbb{E}_{\Upsilon}\left[\hat{V}\left(\mu\right) - V\left(\mu\right)\right]\text{,}\]
which violates the definition of an MPC. \end{proof}
As no information corresponds to the degenerate distribution on the prior, \(\delta_{\mu_{0}}\), this lemma produces the desired implication, that \(\hat{V}-V\) being convex implies that a transformation increases an agent's value for information.

Second, we establish the sufficiency of \(\left(\hat{V}-V\right)\)'s convexity in the endogenous information case.
\begin{lemma}\label{endog}
Given \(\mathcal{D}\) and \(\hat{\mathcal{D}}\), if \(\hat{V} - V\) is convex, the transformation does not generate less information acquisition.
\end{lemma}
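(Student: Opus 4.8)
The plan is to reduce the claim to a short contradiction argument that exploits two facts: the UPS cost $D(\Phi) = \int c\,d\Phi - c(\mu_{0})$ is affine in $\Phi$, and, writing $g \coloneqq \hat V - V$, the hypothesis says exactly that $g$ is convex (and $g$ is automatically bounded and continuous, since $V$ and $\hat V$ are finite convex functions on the compact simplex $\Delta(\Theta)$ and hence continuous and bounded). The first fact yields the identity: for every $\Phi \in \mathcal F(\mu_{0})$, the objective of Problem \ref{flexhat} evaluated at $\Phi$ equals the objective of Problem \ref{flex} evaluated at $\Phi$, plus the ``bonus term'' $\int g\,d\Phi$.

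Then I would fix a prior $\mu_{0} \in \inter\Delta(\Theta)$, a UPS cost $D$, and an optimal solution $\Phi^{*}$ for Problem \ref{flex}, and suppose for contradiction that every optimal solution $\hat\Phi^{*}$ of Problem \ref{flexhat} is a strict MPC of $\Phi^{*}$; fix one such $\hat\Phi^{*}$. Since $\hat\Phi^{*}$ is in particular an MPC of $\Phi^{*}$ and $g$ is convex, the defining inequality of the MPC order gives $\int g\,d\hat\Phi^{*} \le \int g\,d\Phi^{*}$; and optimality of $\Phi^{*}$ in Problem \ref{flex} gives that the original objective is weakly larger at $\Phi^{*}$ than at $\hat\Phi^{*}$. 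Adding these two inequalities and using the identity above, the transformed objective is weakly larger at $\Phi^{*}$ than at $\hat\Phi^{*}$, so $\Phi^{*}$ itself solves Problem \ref{flexhat}. But $\Phi^{*}$ is trivially an MPC of itself and so is not a strict MPC of itself, contradicting the supposition. Hence some solution of Problem \ref{flexhat} is not a strict MPC of $\Phi^{*}$, which is exactly what the definition requires. (One can also phrase this without contradiction: if $\Phi^{*}$ already solves Problem \ref{flexhat}, take $\hat\Phi^{*} = \Phi^{*}$; otherwise the same two-line inequality shows that no optimizer of Problem \ref{flexhat} can be a strict MPC of $\Phi^{*}$.)

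I do not expect a genuine obstacle here; this is the ``easy'' sufficiency direction. The only points needing care are bookkeeping: verifying that $\int g\,d\Phi$, $\int V\,d\Phi$, and $\int \hat V\,d\Phi$ are finite and well defined (immediate from continuity and boundedness of $V$ and $\hat V$ on the compact simplex), keeping the direction of the MPC/MPS order straight when applying ``$\int\phi\,dP \le \int\phi\,dQ$ for all convex $\phi$'' with $P=\hat\Phi^{*}$, $Q=\Phi^{*}$, and $\phi = g$, and noting that existence of an optimizer of Problem \ref{flexhat} is already presupposed by the definition of ``does not generate less information acquisition'' (and in any event follows from weak-$*$ compactness of $\mathcal F(\mu_{0})$ together with semicontinuity of the objective).
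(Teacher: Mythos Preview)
Your proposal is correct and is essentially the paper's own proof: both fix $\Phi^{*}$, suppose every optimizer $\hat\Phi^{*}$ of the transformed problem is a strict MPC of $\Phi^{*}$, and combine optimality of $\Phi^{*}$ in Problem~\ref{flex} with the MPC inequality for the convex function $g=\hat V - V$ to reach a contradiction. The only cosmetic difference is the direction in which the two inequalities are assembled---the paper adds them to obtain $\mathbb{E}_{\hat\Phi^{*}}g > \mathbb{E}_{\Phi^{*}}g$, contradicting the MPC property, whereas you add them to show that $\Phi^{*}$ is itself optimal in Problem~\ref{flexhat}, contradicting the hypothesis that every optimizer is a \emph{strict} MPC of $\Phi^{*}$.
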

\begin{proof}
Fix an arbitrary prior \(\mu_{0} \in \inter \Delta\left(\Theta\right)\) and UPS cost functional \(D\). Let \(\Phi^*\) be an arbitrary solution to Problem \ref{flex}. Suppose for the sake of contradiction that in the transformed decision problem \(\hat{\mathcal{D}}\), every optimizer of Problem \ref{flexhat} is a strict MPC of \(\Phi\). Pick one, \(\hat{\Phi}^*\). The optimality of \(\hat{\Phi}^*\) and strict suboptimality of \(\Phi^*\) in \(\hat{\mathcal{D}}\) imply
\[\mathbb{E}_{\hat{\Phi}^*}\hat{V} - D\left(\hat{\Phi}^*\right) > \mathbb{E}_{\Phi^*}\hat{V} - D\left(\Phi^*\right)\text{.}\]
Analogously, the optimality of \(\Phi^*\) in \(\mathcal{D}\) implies
\[\mathbb{E}_{\Phi^*}V - D\left(\Phi^*\right) \geq \mathbb{E}_{\hat{\Phi}^*}V - D\left(\hat{\Phi}^*\right)\text{.}\]
Combining these two inequalities produces
\[\mathbb{E}_{\hat{\Phi}^*}\left[\hat{V} - V\right] > \mathbb{E}_{\Phi^*}\left[\hat{V} - V\right]\text{,}\]
contradicting that \(\hat{\Phi}^*\) is an MPC of \(\Phi^*\). \end{proof}
Note that in Lemma \ref{endog} we make no use of the fact that the cost functional is UPS; indeed, we do not even make use of the fact that it is monotone in the Blackwell order. All that is required is that the agent's payoff is additively separable in her value from the decision problem and her cost of acquiring information. Posterior separability of the cost function, instead, disciplines the necessity portion of Theorem \ref{moreconvex}, \textit{infra}. After all, our argument by contraposition would only be made easier by allowing for more general cost functionals (we already know a UPS one will do the trick).

This lemma is not new, though the proof is. The result first appears as Proposition 2 in \cite*{chambers2020costly}; and is, moreover, a corollary of stronger results in \cite{yoder2022designing} and \cite{denti2022posterior}.
\cite{denti2022posterior} and \cite{chambers2020costly} are similar in aims: both seek to understand which datasets are consistent with costly information acquisition. In his discussion of external validity, \cite{denti2022posterior} writes ``our analysis points to a general property of posterior separable costs that may inform the answer to this question: increasing the incentive to acquire information leads to more extreme beliefs.'' His ensuing proposition (3) states that if \(\hat{V} - V\) is convex, and strictly convex at the prior, then for any solution to the agent's information acquisition problem in the initial decision problem (Problem \ref{flex}), \(\Phi^*\), and any solution to the agent's information acquisition problem in the transformed decision problem (Problem \ref{flexhat}), \(\hat{\Phi}^*\), the support of \(\hat{\Phi}^*\) cannot lie in the relative interior of the convex hull of the support of \(\Phi^*\).

\cite{chambers2020costly} focus on the implications of additive separability in models of costly information acquisition. To that end, they note that the additively-separable model ``forbids an individual from choosing a less informative information structure when there are `higher gross return from information;' '' i.e., \(\hat{V}-V\) is convex.\footnote{Interestingly, both \cite{chambers2020costly} and \cite{denti2022posterior} observe the equivalence of \(\left(\hat{V}-V\right)\)'s convexity with a transformation generating a greater value for information, but do not give proofs.} In his screening model, in order to characterize the responses to menus of contracts by agents with varying abilities to acquire information, \cite{yoder2022designing} establishes a stronger version of Lemma \ref{lemmany}. His Proposition 4 states that \(\left(\hat{V}-V\right)\)'s convexity implies that the intersection of the support of any \(\hat{\Phi}^*\) with the convex hull of the support of any \(\Phi^*\) is a (possibly empty) subset of the extreme points of the convex hull of the support of \(\Phi^*\).

Third, we turn our attention to necessity. It is easiest to start with the endogenous information case.
\begin{lemma}\label{newendoginfo}
    Given \(\mathcal{D}\) and \(\hat{\mathcal{D}}\), if the transformation does not generate less information acquisition, \(\hat{V} - V\) is convex.
\end{lemma}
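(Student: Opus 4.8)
\emph{Proof plan.} I will argue by contraposition: assuming $g:=\hat V-V$ is not convex, I will exhibit a prior $\mu_0\in\inter\Delta(\Theta)$, a UPS cost functional, and an optimal information-acquisition strategy $\Phi^*$ in $\mathcal D$ such that the \emph{unique} optimum in $\hat{\mathcal D}$ is $\delta_{\mu_0}$, which is a strict MPC of $\Phi^*$. This contradicts the hypothesis and forces $g$ to be convex.

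\emph{Step 1: push the non-convexity witness into the interior.} Since $V$ and $\hat V$ are continuous and convex, $g$ is continuous on $\Delta(\Theta)$; failure of convexity gives $x,y\in\Delta(\Theta)$ and $\lambda\in(0,1)$ with $g(\lambda x+(1-\lambda)y)>\lambda g(x)+(1-\lambda)g(y)$. By continuity this strict inequality is preserved under small perturbations of $x,y$, and since $\inter\Delta(\Theta)$ is dense I may take $x,y\in\inter\Delta(\Theta)$. Then $\mu_0:=\lambda x+(1-\lambda)y\in\inter\Delta(\Theta)$, and the binary distribution $\Phi_0:=\lambda\delta_x+(1-\lambda)\delta_y\in\mathcal F(\mu_0)$ satisfies $\mathbb E_{\Phi_0}g<g(\mu_0)$.

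\emph{Step 2: engineer the cost so that $\hat{\mathcal D}$ becomes trivial but $\mathcal D$ does not.} Fix a continuous, strictly convex $\psi\colon\Delta(\Theta)\to\mathbb R$ (e.g.\ $\mu\mapsto\sum_\theta\mu(\theta)^2$), and for $\epsilon>0$ set $c_\epsilon:=\hat V+\epsilon\psi$. This is strictly convex, hence induces a legitimate UPS cost $D_\epsilon$. In $\hat{\mathcal D}$ the objective collapses: $\mathbb E_\Phi\hat V-D_\epsilon(\Phi)=-\epsilon\,\mathbb E_\Phi\psi+c_\epsilon(\mu_0)$, so by Jensen and strict convexity of $\psi$, $\delta_{\mu_0}$ is the \emph{unique} maximizer of Problem \ref{flexhat}. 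In $\mathcal D$, the gain of $\Phi_0$ over $\delta_{\mu_0}$ is $\mathbb E_{\Phi_0}[V-c_\epsilon]-(V-c_\epsilon)(\mu_0)=\big(g(\mu_0)-\mathbb E_{\Phi_0}g\big)-\epsilon\big(\mathbb E_{\Phi_0}\psi-\psi(\mu_0)\big)$, which is strictly positive once $\epsilon$ is small. Since $V-c_\epsilon$ is continuous on the compact simplex, Problem \ref{flex} attains a maximizer $\Phi^*$ (its concave envelope at $\mu_0$ is attained, by Carathéodory on at most $|\Theta|$ posteriors), and $\Phi^*$ strictly beats $\delta_{\mu_0}$, hence $\Phi^*\neq\delta_{\mu_0}$.

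\emph{Step 3: conclude.} The unique $\hat{\mathcal D}$-optimum $\delta_{\mu_0}$ is an MPC of $\Phi^*$ (Jensen), while $\Phi^*$ is not an MPC of $\delta_{\mu_0}$ (test against a strictly convex function, using $\Phi^*\neq\delta_{\mu_0}$), so $\delta_{\mu_0}$ is a strict MPC of $\Phi^*$; thus the transformation generates less information acquisition, a contradiction. The only genuinely delicate point is Step 1 — relocating the non-convexity witness to the interior and realizing it as a Bayes-plausible distribution; Steps 2–3 are the sign computation plus routine verification that $c_\epsilon$ yields a valid UPS cost and that the relevant maxima are attained.
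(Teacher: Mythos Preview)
Your proof is correct and follows essentially the same route as the paper's: contrapose, move the non-convexity witness into $\inter\Delta(\Theta)$, take the UPS cost $c_\epsilon=\hat V+\epsilon\psi$ so that $\hat{\mathcal D}$ has the unique optimum $\delta_{\mu_0}$, and then observe that for small $\epsilon$ the binary witness distribution strictly beats $\delta_{\mu_0}$ in $\mathcal D$. Your write-up is slightly more explicit about existence of the $\mathcal D$-optimum and about why $\delta_{\mu_0}$ is a \emph{strict} MPC, but the construction and the key inequality are identical to the paper's.
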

\begin{proof}
    Please visit Appendix \ref{newendoginfoproof}.
\end{proof}
We prove this result by contraposition. If \(\hat{V}-V\) is convex, we can construct a cost function that is such that no matter her prior, an agent with value function \(\hat{V}\) strictly prefers to acquire no information. In contrast, there are some priors at which an agent with value function \(V\) strictly prefers to acquire some information, yielding the result.

Fourth, necessity in the exogenous information case is an easy consequence of the endogenous information lemma. Indeed, we may just take the specific distributions generated in the previous lemma's proof ``off-the-shelf.''
\begin{lemma}\label{newexoginfo}
Given \(\mathcal{D}\) and \(\hat{\mathcal{D}}\), if the transformation generates a greater value for information, \(\hat{V} - V\) is convex.
\end{lemma}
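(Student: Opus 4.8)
The plan is to prove the contrapositive and to exploit the fact that the defining inequality of ``generates a greater value for information'' is, after rearrangement, exactly Jensen's inequality for the function \(\hat V - V\). Suppose \(\hat V - V\) is not convex on \(\Delta(\Theta)\). Both \(V\) and \(\hat V\) are continuous — each is the pointwise maximum of the continuous family \(\mu \mapsto \mathbb{E}_{\mu}u(a,\theta)\) over a compact action set, hence continuous by the maximum theorem — so \(\hat V - V\) is continuous. The first step is to argue that the failure of convexity is already witnessed inside \(\inter \Delta(\Theta)\): since the relative interior is dense in \(\Delta(\Theta)\) and \(\hat V - V\) is continuous, if \(\hat V - V\) were convex on \(\inter \Delta(\Theta)\) a limiting argument along interior segments would force convexity on all of \(\Delta(\Theta)\); contrapositively, non-convexity on \(\Delta(\Theta)\) yields non-convexity on the interior. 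Hence there are \(\mu_1, \mu_2 \in \inter \Delta(\Theta)\) and \(\lambda \in (0,1)\) such that, writing \(\mu_0 \coloneqq \lambda \mu_1 + (1-\lambda)\mu_2 \in \inter \Delta(\Theta)\),
\[\lambda\big[\hat V(\mu_1) - V(\mu_1)\big] + (1-\lambda)\big[\hat V(\mu_2) - V(\mu_2)\big] < \hat V(\mu_0) - V(\mu_0)\text{.}\]

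Next, I would feed the two-point distribution \(\Phi \coloneqq \lambda \delta_{\mu_1} + (1-\lambda)\delta_{\mu_2}\) into the definition. By construction \(\mathbb{E}_{\Phi}\mu = \mu_0\), so \(\Phi \in \mathcal{F}(\mu_0)\), and the displayed inequality is precisely
\[\mathbb{E}_{\Phi}\hat V(\mu) - \hat V(\mu_0) < \mathbb{E}_{\Phi}V(\mu) - V(\mu_0)\text{,}\]
i.e., the negation of ``the transformation generates a greater value for information'' at the prior \(\mu_0\) and experiment \(\Phi\). This completes the contrapositive and hence the lemma.

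Alternatively, and matching the sentence preceding the statement, one can route through Lemma \ref{newendoginfo}: the pair consisting of this same two-point \(\Phi\) and the degenerate \(\delta_{\mu_0}\) is exactly the ``off-the-shelf'' family appearing in that lemma's proof, and it simultaneously defeats both notions of an increased value for information whenever \(\hat V - V\) is not convex. Either way there is essentially no obstacle here — the only step with any content is the reduction to an interior barycenter \(\mu_0\), where continuity of the value functions (and thus of \(\hat V - V\)) and density of \(\inter \Delta(\Theta)\) are doing the work; the remainder is a single rearrangement of the definition.
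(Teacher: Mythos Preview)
Your proposal is correct and follows essentially the same route as the paper's proof: both argue by contraposition, reduce to a failure of convexity witnessed by two interior posteriors \(\mu_1,\mu_2\) with barycenter \(\mu_0\), and then take the binary Bayes-plausible distribution on \(\{\mu_1,\mu_2\}\) to violate the defining inequality. The only cosmetic difference is that the paper literally borrows the pair of interior points from the proof of Lemma \ref{newendoginfo} (its ``off-the-shelf'' remark), whereas you supply the short density-plus-continuity justification directly.
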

\begin{proof}
Contained in Appendix \ref{newexoginfoproof}.\end{proof}

If there are just two states, a stronger statement holds concerning an agent's optimal information acquisition as a result of a transformation. Given an initial decision problem \(\mathcal{D}\) and a transformed decision problem \(\hat{\mathcal{D}}\), the transformation \emph{Generates More Information Acquisition} if for any prior \(\mu_{0} \in \inter{\Delta\left(\Theta\right)}\), UPS cost functional \(D\), and solution to the agent's information acquisition problem in the initial decision problem (Problem \ref{flex}), \(\Phi^{*}\), there exists a solution to the agent's information acquisition problem in the transformed decision problem (Problem \ref{flexhat}), \(\hat{\Phi}^{*}\), that is a mean-preserving spread of \(\Phi^{*}\).
    
\begin{proposition}\label{twostates} Given \(\mathcal{D}\) and \(\hat{\mathcal{D}}\), if \(\left|\Theta\right| = 2\), \(\hat{V}-V\) is convex if and only if the transformation generates more information acquisition.
\end{proposition}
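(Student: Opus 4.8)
\textit{Proof plan.} I would prove the two implications separately. The reverse implication---``the transformation generates more information acquisition'' \(\Rightarrow\) \(\hat V - V\) convex---requires no assumption on \(\Theta\) and is immediate from Lemma \ref{newendoginfo}: if for every prior, UPS cost, and optimal \(\Phi^{*}\) there is an optimal \(\hat\Phi^{*}\) that is a mean-preserving spread of \(\Phi^{*}\)---equivalently, \(\Phi^{*}\) is an MPC of \(\hat\Phi^{*}\)---then that \(\hat\Phi^{*}\) is not a strict MPC of \(\Phi^{*}\), so the transformation does not generate less information acquisition, and hence \(\hat V - V\) is convex. All the work is in the forward implication, and this is where I would use \(|\Theta|=2\).

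Identify \(\Delta(\Theta)\) with \([0,1]\) and \(\inter\Delta(\Theta)\) with \((0,1)\), so distributions over posteriors are distributions on the line and ``MPS'' is the usual convex order. Fix \(\mu_{0}\in(0,1)\), a UPS cost \(c\), and an optimal \(\Phi^{*}\) of Problem \ref{flex}; write \(g\coloneqq\hat V-V\), which is convex. The initial and transformed problems amount to maximizing \(\mathbb{E}_{\Phi}[V-c]\) and \(\mathbb{E}_{\Phi}[(V-c)+g]\) over Bayes-plausible \(\Phi\). By the standard concavification characterization, \([\ell,r]\coloneqq\conv(\supp\Phi^{*})\) is an interval on which the concave envelope of \(V-c\) equals an affine function \(L\), with \((V-c)(\ell)=L(\ell)\) and \((V-c)(r)=L(r)\); if \(\ell=r\) then \(\Phi^{*}=\delta_{\mu_{0}}\) and every Bayes-plausible distribution is trivially an MPS of it, so I may assume \(\ell<\mu_{0}<r\). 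On \([\ell,r]\) we then have \(\hat V-c=(V-c)+g\le L+g\), and \(L+g\) is convex, hence lies weakly below its chord \(\widetilde L\) on \([\ell,r]\) while agreeing with it (and with \(\hat V-c\)) at \(\ell\) and \(r\). Given any optimal \(\hat\Phi^{*}\) of Problem \ref{flexhat} (one exists by the usual compactness/continuity argument), replacing the part of \(\hat\Phi^{*}\) on the open interval \((\ell,r)\) by the two-point distribution on \(\{\ell,r\}\) with the same mass and barycenter does not decrease \(\mathbb{E}_{\Phi}[\hat V-c]\)---because \(\hat V-c\le\widetilde L\) on \([\ell,r]\) with equality at the endpoints and \(\widetilde L\) is affine---so one obtains an \emph{optimal} \(\Phi'\) with \(\supp\Phi'\cap(\ell,r)=\emptyset\). (Alternatively, since \(g\) is convex, Proposition 4 of \cite{yoder2022designing} already gives \(\supp\hat\Phi^{*}\cap[\ell,r]\subseteq\{\ell,r\}\) for every optimal \(\hat\Phi^{*}\).) It then remains to invoke a purely measure-theoretic fact: any distribution \(Q\) on \([0,1]\) with mean \(\mu_{0}\) and \(\supp Q\cap(\ell,r)=\emptyset\) is an MPS of any distribution \(P\) on \([\ell,r]\) with mean \(\mu_{0}\). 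I would prove this by collapsing the mass of \(Q\) on \([0,\ell]\) to a point at its barycenter \(m_{L}\le\ell\) and its mass on \([r,1]\) to a point at its barycenter \(m_{R}\ge r\); by Jensen, \(Q\) is an MPS of the resulting two-point distribution \(P_{2}=p_{L}\delta_{m_{L}}+p_{R}\delta_{m_{R}}\), and since \(p_{L}m_{L}+p_{R}m_{R}=\mu_{0}\), \(P_{2}\) is the endpoint two-point distribution on \(\{m_{L},m_{R}\}\) with mean \(\mu_{0}\), which maximizes the convex order among distributions on \([m_{L},m_{R}]\supseteq[\ell,r]\) with mean \(\mu_{0}\); chaining, \(Q\) is an MPS of \(P\). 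Applying this with \(Q=\Phi'\) and \(P=\Phi^{*}\) shows \(\Phi'\) is an optimal transformed strategy that is an MPS of \(\Phi^{*}\), and since \(\mu_{0},c,\Phi^{*}\) were arbitrary, the transformation generates more information acquisition.

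The main obstacle is this forward direction, and within it the crux is the two intertwined claims: (i) some optimal transformed strategy can be taken to avoid the open ``pooling interval'' \((\ell,r)\) of the initial optimum---where convexity of \(g\) is exactly what makes pushing mass to the endpoints profitable---and (ii) the convex-order lemma that turns this one-dimensional support property into a mean-preserving spread relation. The reverse direction and the degenerate case \(\Phi^{*}=\delta_{\mu_{0}}\) are routine. I would also note that via the Yoder route the conclusion is in fact stronger: \emph{every} optimal \(\hat\Phi^{*}\) avoids \((\ell,r)\), hence every optimal transformed strategy is an MPS of \(\Phi^{*}\).
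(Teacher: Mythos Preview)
Your proposal is correct and follows essentially the same route as the paper: necessity via the easy implication (``generates more'' \(\Rightarrow\) ``does not generate less'' \(\Rightarrow\) \(\hat V-V\) convex by Theorem~\ref{moreconvex}/Lemma~\ref{newendoginfo}), and sufficiency via Yoder's Proposition~4. The paper's proof is a two-line appeal to those results; you add value by (i) spelling out the one-dimensional convex-order lemma that turns Yoder's support conclusion \(\supp\hat\Phi^{*}\cap(\ell,r)=\emptyset\) into the MPS statement, which the paper leaves implicit, and (ii) giving a self-contained concavification argument that bypasses Yoder entirely---but these are elaborations of the same idea rather than a different approach.
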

\begin{proof}
    Theorem \ref{moreconvex} implies the necessity portion of the result. Sufficiency is a consequence of the aforementioned Proposition 4 in \cite{yoder2022designing}.
\end{proof}

\subsection{``More Information'' Only in Trivial Cases For Three or More States}

When \(\left|\Theta\right| > 2\), the convexity of \(\hat{V}-V\) does not mean that the agent will acquire more information in the transformed decision problem. This is because the new and old experiments may not be (Blackwell) comparable. As we discuss above, \cite{yoder2022designing} and \cite{denti2022posterior} reveal that slightly stronger statements can be made about any solutions to Problems \ref{flex} and \ref{flexhat}, but these results are still weaker than saying the agent must acquire more information. In this subsection, we show that when there are three or more states, a transformation must lead to more information acquisition only in trivial cases.

\begin{figure}
\centering
\begin{subfigure}{.5\textwidth}
  \centering
  \includegraphics[scale=.35]{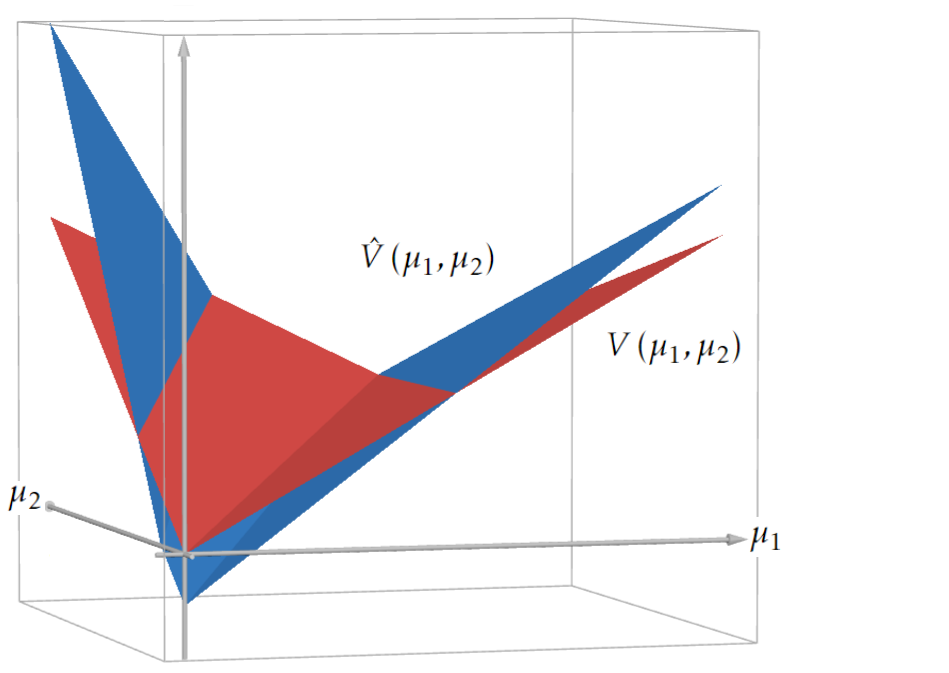}
  \caption{\(V\) (in light red) and \(\hat{V}\) (in dark blue).}
  \label{57figsub1}
\end{subfigure}%
\begin{subfigure}{.5\textwidth}
  \centering
  \includegraphics[scale=.45]{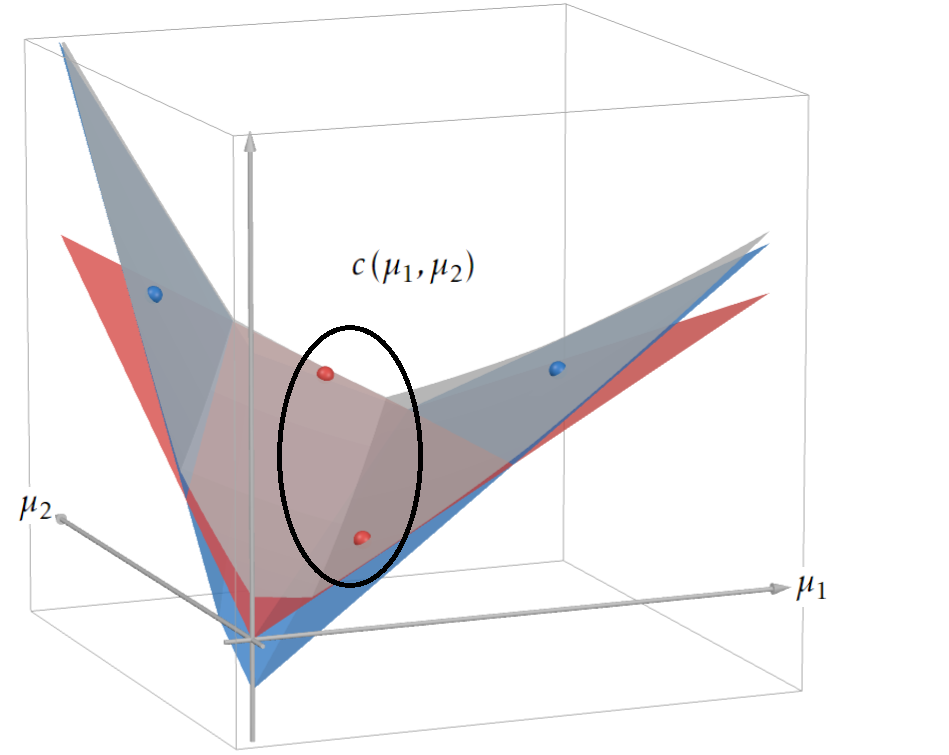}
  \caption{The desired cost function (in light grey).}
  \label{57figsub2}
\end{subfigure}
\caption{Proving Proposition \ref{nomoreprop}. The distributions supported on the points within the dark oval and the points not in the oval are unique solutions to Problems \ref{flex} and \ref{flexhat}, respectively, yet are Blackwell incomparable.}
\label{fig57}
\end{figure}
\begin{proposition}\label{nomoreprop}
Given finite \(\mathcal{D}\) and \(\hat{\mathcal{D}}\), if \(\left|\Theta\right| \geq 3\), the transformation generates more information acquisition if and only if \(\hat{V}-V\) or \(V\) is affine.  
\end{proposition}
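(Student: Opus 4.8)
The plan is to prove the two implications separately: the ``if'' direction is a direct observation, while the ``only if'' direction is the substance and is best argued by contraposition through an explicit construction.

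For sufficiency, suppose first that \(V\) is affine. Then \(\mathbb{E}_{\Phi}V(\mu)=V(\mu_{0})\) for every \(\Phi\in\mathcal{F}_{\mu_{0}}\), so Problem \ref{flex} amounts to minimising \(D(\Phi)\); strict convexity of \(c\) gives \(D(\Phi)\geq 0\) with equality only at \(\Phi=\delta_{\mu_{0}}\), which is therefore the unique optimum, and \emph{every} Bayes-plausible \(\hat{\Phi}^{*}\) is an MPS of the point mass at its barycenter. If instead \(\hat{V}-V\) is affine, then \(\mathbb{E}_{\Phi}\hat{V}(\mu)=\mathbb{E}_{\Phi}V(\mu)+(\hat{V}-V)(\mu_{0})\) for every \(\Phi\in\mathcal{F}_{\mu_{0}}\), so Problems \ref{flex} and \ref{flexhat} have the same objective up to an additive constant, hence the same solution set; taking \(\hat{\Phi}^{*}=\Phi^{*}\)---any distribution being an MPS of itself---shows that the transformation generates more information acquisition.

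For necessity I would prove the contrapositive: assuming \(V\) and \(\hat{V}-V\) are both non-affine, I produce a prior \(\mu_{0}\), a UPS cost \(D\), and a solution \(\Phi^{*}\) of Problem \ref{flex} no MPS of which solves Problem \ref{flexhat}. First, I may assume \(\hat{V}-V\) is convex: generating more information acquisition implies the transformation does not generate less, because an MPS of \(\Phi^{*}\) is never a strict MPC of \(\Phi^{*}\), so Theorem \ref{moreconvex} would force convexity anyway; and then \(\hat{V}=V+(\hat{V}-V)\) is convex and, since \(V\) is not affine, not affine either (otherwise \(V\) would be concave, hence affine). Since \(\mathcal{D}\) and \(\hat{\mathcal{D}}\) are finite, \(V\) and \(\hat{V}-V\) are piecewise affine, and each, being convex and non-affine, is non-differentiable along some polytope of dimension \(|\Theta|-2\); passing to a generic two-dimensional affine slice of \(\Delta(\Theta)\) through the interior that meets the kink loci of both \(V\) and \(\hat{V}-V\)---possible because \(|\Theta|-1\geq 2\)---I reduce to \(|\Theta|=3\). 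The construction then picks an interior \(\mu_{0}\) and a strictly convex \(c\) so that: (i) Problem \ref{flex} has a \emph{unique} solution \(\Phi^{*}\), a two-atom distribution splitting \(\mu_{0}\) across one kink of \(V\), so \(\supp\Phi^{*}\) spans a segment through \(\mu_{0}\) in a fixed direction; and (ii) Problem \ref{flexhat} has a unique solution \(\hat{\Phi}^{*}\) which, because the added convexity of \(\hat{V}-V\) introduces a kink running in a transverse direction, splits \(\mu_{0}\) transversally as well, with \(\conv\left(\supp\hat{\Phi}^{*}\right)\) kept narrow in the direction of \(V\)'s kink by the choice of \(c\), so that it fails to contain at least one atom of \(\Phi^{*}\). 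Since \(\hat{\Phi}^{*}\) being an MPS of \(\Phi^{*}\) would force \(\supp\Phi^{*}\subseteq\conv\left(\supp\hat{\Phi}^{*}\right)\), this is the desired contradiction (cf.\ Figure \ref{fig57}).

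The hard part will be the construction itself---producing a single cost that works for an arbitrary given pair \((V,\hat{V})\) and verifying uniqueness of both optimizers with the claimed transverse supports. The most delicate case is when the kinks of \(V\) and of \(\hat{V}-V\) are ``aligned,'' i.e.\ both functions depend on \(\mu\) only through a single linear functional: then any quadratic cost---indeed any cost separable in suitable coordinates---yields collinear, hence comparable, optimal supports, and one must instead use a genuinely non-quadratic strictly convex cost whose curvature places the two problems' optimal posteriors on distinct curves, prying \(\Phi^{*}\) and \(\hat{\Phi}^{*}\) apart.
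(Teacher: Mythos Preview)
Your sufficiency direction is correct and is exactly the paper's argument.

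For necessity, the outline---contraposition, assume \(\hat{V}-V\) convex via Theorem \ref{moreconvex}, then construct a UPS cost and prior making the two optimizers Blackwell-incomparable---matches the paper. But two points in your execution are genuine gaps, not just missing details.

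First, the ``pass to a generic two-dimensional affine slice'' reduction is not valid as stated. Problems \ref{flex} and \ref{flexhat} are posed over all of \(\Delta(\Theta)\): even if \(\mu_0\) lies in a chosen \(2\)-plane \(L\), nothing forces the optimal posteriors to stay in \(L\), so a construction carried out ``in the slice'' does not settle the full-dimensional problem. You would need to engineer \(c\) so that optimizers lie in \(L\), which is precisely the kind of construction you are trying to defer. The paper does \emph{not} reduce dimension.

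Second---and this is the idea you are missing---the paper does not try to align one optimizer with a kink of \(V\) and the other with a kink of \(\hat{V}-V\). Instead it (i) adds an affine function to \(V\) (harmless for both problems) so that the set \(\{\hat V>V\}\) is non-convex and \(V\) is non-affine on \(\{\hat V\le V\}\), and (ii) applies Lemma \ref{nonredundancy} to the auxiliary value function \(\tilde W\coloneqq\max\{V,\hat V\}\). One then picks four posteriors forming a ``cross'' through a common interior \(\mu_0\): two in \(\{V>\hat V\}\) lying in distinct cells of \(V\), and two in \(\{\hat V>V\}\) (possible exactly because that set is non-convex). The lemma delivers a UPS cost touching \(\tilde W\) only at those four points; it follows that with value function \(V\) the unique optimum is the first pair and with \(\hat V\) the second, and non-collinearity makes the two binary distributions Blackwell-incomparable. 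Note that this construction is indifferent to whether the kink loci of \(V\) and \(\hat V-V\) are ``aligned''---the separation is achieved by the regions \(\{V>\hat V\}\) and \(\{\hat V>V\}\), not by kink directions---so the delicate case you flag simply does not arise.
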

\begin{proof}
Sufficiency is almost immediate. For necessity, there is a simple proof via contraposition, illustrated in Figure \ref{fig57}. As neither \(\hat{V}-V\) nor \(V\) is affine, and as decision problems are unaltered by the addition of affine functions, we specify without loss of generality that there is a region where \(V\) lies strictly above \(\hat{V}\) (and is not affine on that region), and that the two curves (\(\hat{V}\) and \(V\)) intersect in such a way that the region where \(\hat{V}\) lies strictly above \(V\) is not convex. This is illustrated in Figure \ref{57figsub1}. Note that we can still do this when there are just two states; in fact, this is one way to deliver a concise proof of Proposition \ref{twostates}.

Next, we use Lemma \ref{nonredundancy} to claim the existence of a cost function that intersects \(\max \left\{\hat{V},V\right\}\) at two pairs of points, the line segments between which form a ``cross.'' Furthermore, one of these pairs lies in the region where \(\hat{V}\) is strictly larger than \(V\) and the the other pair in the region where \(V\) is strictly larger than \(\hat{V}\). ``Cross'' refers to the fact that there is a unique intersection point between the line segments connecting each point in a pair; \textit{viz.,} a common point in the (interior of the) convex hull of each pair of points. Note that this also means that the four points do not lie on the same line. This construction is illustrated in Figure \ref{57figsub2}. This portion of the argument is where three or more states is essential. With just two states, all points are collinear.

By design, when the prior is the meeting point of the two beams of the ``cross'' and the value function is \(\hat{V}\), one of the pairs is uniquely optimal; and when the value function is \(V\), the other is uniquely optimal. As the supports of these binary distributions are not collinear, they are Blackwell incomparable. The details lie in Appendix \ref{nomorepropproof}.
\end{proof}

\subsection{The Geometry of the Finite-Action Problem}\label{geometry}

In this section we introduce a particular geometric object corresponding to a value function, a polyhedral subdivision, as a representation of a decision problem. This representation meaningfully captures the essentials of how transformations make information more valuable and, therefore, is useful [for the analyses] in subsequent sections.  To that end, in understanding what changes to a decision problem increase an agent's value for information, it is helpful to categorize the effects of an alteration according to the classic extensive- intensive-margin dichotomy from labor economics. On the one hand, if the transformation increases an agent's value for information, it must be that the agent's behavior is more responsive to information: if the same action is optimal at two beliefs that led to distinct pre-transformation actions, then the transformation cannot increase the value of information. We can think of an increase in responsiveness as a positive change along the extensive margin. On the other hand, it may be that the agent's behavior is more responsive to information, yet the value of information decreases. This occurs when the actions become more similar payoff-wise--the value of distinguishing between actions decreases. In a sense, such a decline is a negative change along the intensive margin. 

Now let us formalize the informal argument from the previous paragraph regarding the extensive-margin effects of a change to the agent's decision problem. We will show the necessity of increased sensitivity to information for the value of information to increase. In particular, when the agent has finitely many actions, we note that polyhedral subdivisions are exact representations of the agent's sensitivity to information. Then, there is a natural (partial) ordering of subdivisions that corresponds exactly to this reactivity.

When \(A\) is finite, we say an action \(a_i \in A\) is not \emph{Weakly Dominated}, or is \emph{Undominated}, if there exists a belief \(\mu\in \Delta\left(\Theta\right)\) such that \[\mathbb{E}_{\mu}u\left(a_i,\theta\right) > \max_{a \in A\setminus\left\{a_i\right\}}\mathbb{E}_{\mu}u\left(a,\theta\right)\text{.}\]
If \(A\) is finite, the agent's value function, \(V\), is piecewise affine, and its graph is a polyhedral surface in \(\mathbb{R}^{n}\), where \(n\) is the number of states. Associated with \(V\) is the projection of its epigraph onto \(\Delta\left(\Theta\right)\), which yields a finite collection \(C\) of polytopes of full dimension \(C_i\) (\(i = 1, \dots, m\)), where \(m\) is the number of undominated actions in \(A\). Formally, for each undominated \(a_i \in A\) (\(i = \left\{1,\dots, m\right)\)), 
\[C_i \coloneqq \left\{\mu\in \Delta\left(\Theta\right) \ \vert \ \mathbb{E}_\mu\left(a_i,\theta\right) = V\left(\mu\right)\right\} = \left\{\mu \in \Delta\left(\Theta\right) \ \left| \ \mathbb{E}_{\mu}u\left(a_i,\theta\right) \geq \max_{a \in A\setminus\left\{a_i\right\}}\right.\mathbb{E}_{\mu}u\left(a,\theta\right)\right\}\text{.}\]
By construction, action \(a_i\) is optimal for any belief \(\mu \in C_i\) and uniquely optimal for any belief \(\mu \in \inter C_i\). The collection \(C\) is a \emph{Regular Polyhedral Subdivision} (henceforth, \emph{Subdivision}) of \(\Delta\left(\Theta\right)\). Figure \ref{fig1} illustrates two pairs of value functions and subdivisions when the agent has three actions. Each \(C_i\) is a \emph{Cell} of \(C\).  

\begin{figure}
\centering
\begin{subfigure}{.5\textwidth}
  \centering
  \includegraphics[scale=.18]{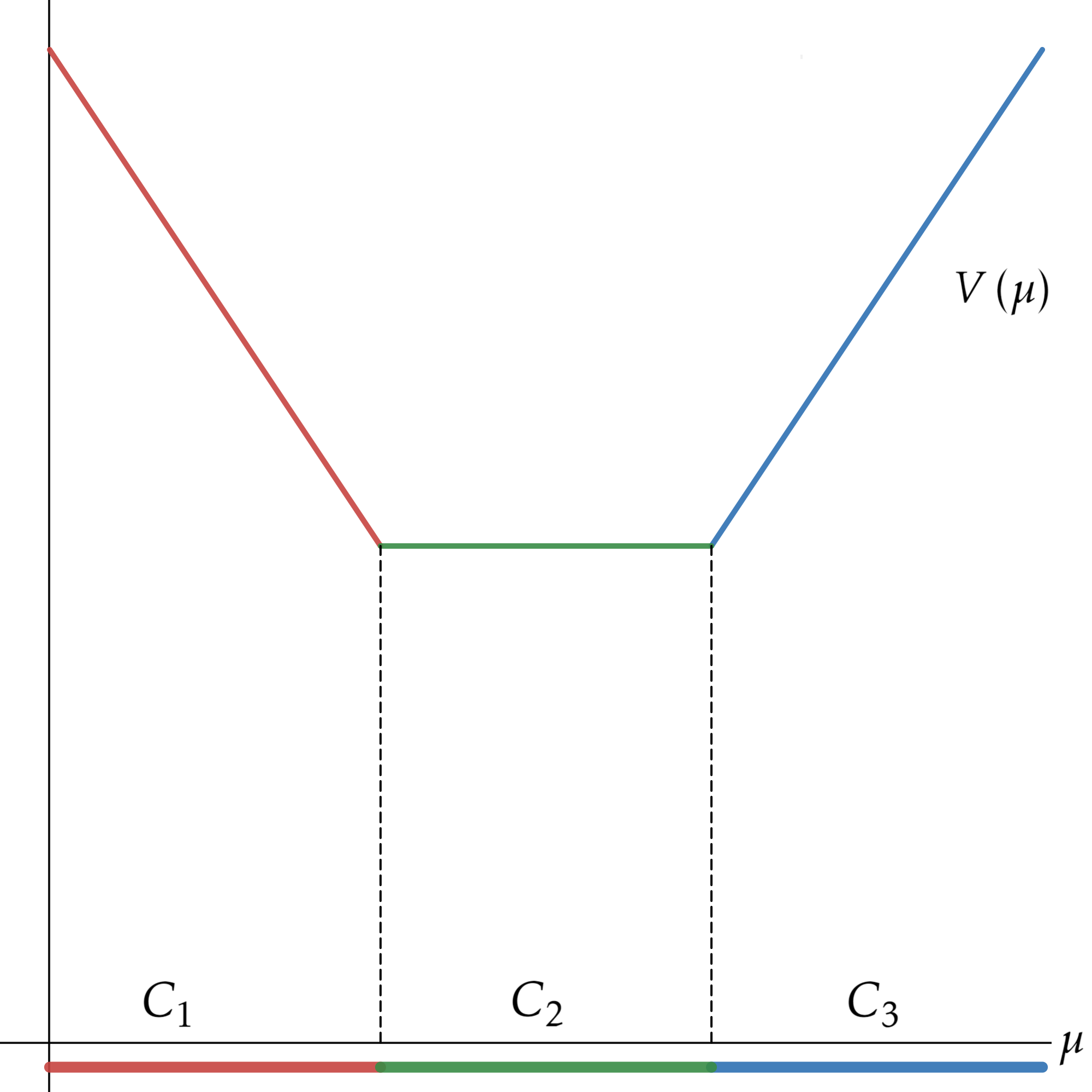}
  \caption{\(V\) and \(C\), two states.}
  \label{figsub12}
\end{subfigure}%
\begin{subfigure}{.5\textwidth}
  \centering
  \includegraphics[scale=.35]{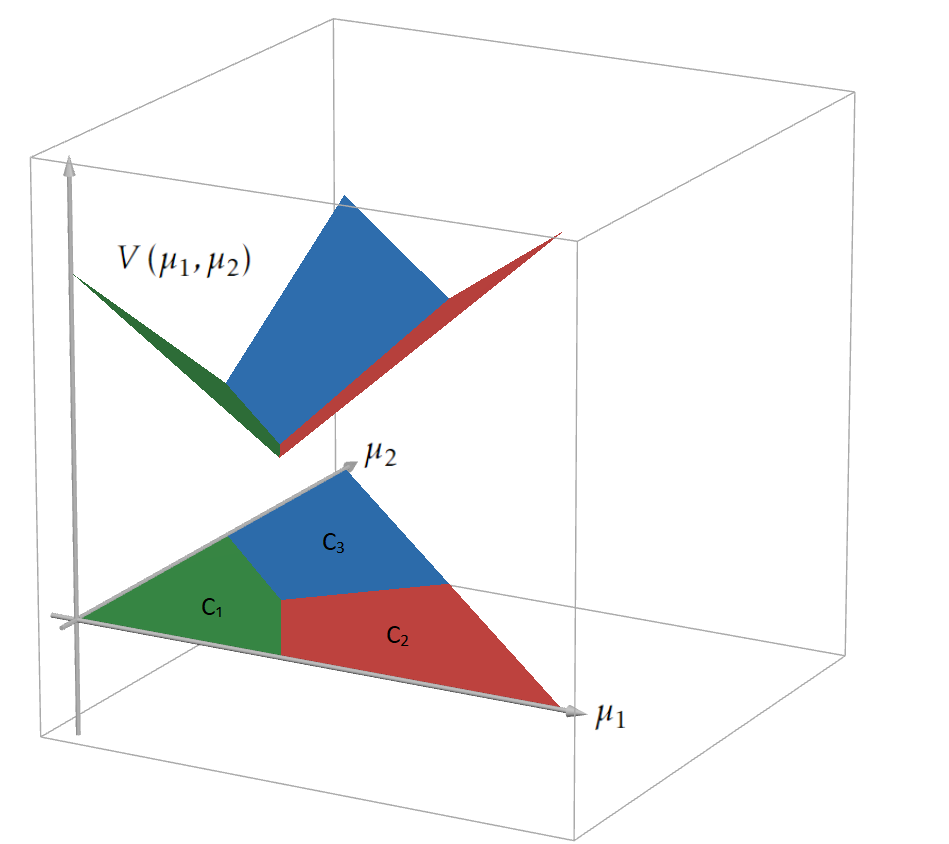}
  \caption{\(V\) and \(C\), three states.}
  \label{figsub22}
\end{subfigure}
\caption{}
\label{fig1}
\end{figure}

If the set of actions in the transformed decision problem, \(\hat{A}\), is also finite, the new value function \(\hat{V}\), itself, has a corresponding subdivision, \(\hat{C}\). There is a natural way of comparing subdivisions that is useful for our purposes: a subdivision \(P = \left\{P_1, \dots, P_l\right\}\) is \emph{Finer} than (or \emph{Refines}) a subdivision \(Q = \left\{Q_1, \dots, Q_m\right\}\) if for each \(j \in \left\{1, \dots, l\right\}\), there exists \(i \in \left\{1, \dots, m\right\}\) such that \(P_j \subseteq Q_i\) (\cite{lee201716}). We write this \(P \succeq Q\) (\(\succ\) when the relation is strict). In anticipation of our later results, we note a tight connection between the refinement order and relative convexity of the value functions:
\begin{lemma}\label{finernecessitylemma}
    Let \(A\) and \(\hat{A}\) be finite. Then, \(\hat{V}-V\) is convex only if \(\hat{C} \succeq C\).
\end{lemma}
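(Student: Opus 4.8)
The plan is to prove the contrapositive: assuming \(\hat C\) does not refine \(C\), I will produce a segment in \(\Delta\left(\Theta\right)\) along which \(\hat V - V\) fails to be convex, hence \(\hat V - V\) is not convex. If \(\hat C \not\succeq C\), then there is a cell \(\hat C_j\) of \(\hat C\) — the cell on which some undominated action \(\hat a_j \in \hat A\) is optimal, with \(\hat V\left(\mu\right) = \mathbb{E}_\mu \hat u\left(\hat a_j, \theta\right)\) for \(\mu \in \hat C_j\) — that is not contained in any single cell of \(C\). First I would locate the relevant beliefs. Since the relative interiors of the cells of \(C\) cover \(\Delta\left(\Theta\right)\) up to a finite union of polytopes of dimension less than \(\dim \Delta\left(\Theta\right)\), their union is dense, so the nonempty relatively open set \(\inter \hat C_j\) meets \(\inter C_i\) for some \(i\); fix \(\mu_1 \in \inter C_i \cap \hat C_j\). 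Because \(\hat C_j \not\subseteq C_i\), there is a belief \(\nu \in \hat C_j \setminus C_i\). Since \(C_i\) is closed, a whole neighborhood of \(\nu\) avoids \(C_i\), and since \(\hat C_j\) is convex it contains the segment \(\left[\mu_1, \nu\right]\); hence any point \(\mu_2\) on that segment sufficiently close to \(\nu\) lies in \(\hat C_j \setminus C_i\). Fix such a \(\mu_2\), noting \(\mu_2 \neq \mu_1\).

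Next I would compare the two value functions on the segment \(\left[\mu_1, \mu_2\right] \subseteq \hat C_j\). On \(\hat C_j\) we have \(\hat V\left(\mu\right) = \mathbb{E}_\mu \hat u\left(\hat a_j, \theta\right)\), so \(\hat V\) restricted to \(\left[\mu_1, \mu_2\right]\) is affine. For \(V\): at \(\mu_1 \in \inter C_i\) the action \(a_i\) is uniquely optimal, and because \(A\) is finite and \(u\) continuous it remains uniquely optimal on a neighborhood of \(\mu_1\); thus \(V\) agrees with the affine map \(\ell_i\left(\mu\right) \coloneqq \mathbb{E}_\mu u\left(a_i, \theta\right)\) on a nondegenerate initial sub-segment of \(\left[\mu_1, \mu_2\right]\). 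On the other hand \(\mu_2 \notin C_i\) gives \(\ell_i\left(\mu_2\right) < V\left(\mu_2\right)\). A function that is affine on \(\left[\mu_1, \mu_2\right]\) and coincides with the affine map \(\ell_i\) on a sub-segment must coincide with \(\ell_i\) throughout, contradicting \(\ell_i\left(\mu_2\right) < V\left(\mu_2\right)\); hence the (convex) restriction of \(V\) to \(\left[\mu_1, \mu_2\right]\) is not affine. Therefore \(\hat V - V\) restricted to \(\left[\mu_1, \mu_2\right]\) equals an affine function minus a convex-but-not-affine function; if it were convex, then \(V\) restricted to \(\left[\mu_1, \mu_2\right]\) would be both convex and concave, hence affine — a contradiction. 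So \(\hat V - V\) is not convex, which is what the contrapositive requires.

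I expect the genuinely substantive step to be the geometric one in the first paragraph: showing that a cell of \(\hat C\) not swallowed by any cell of \(C\) must in fact straddle the boundary of some \(C_i\) in a way that \(V\) itself detects, i.e., extracting the pair \(\mu_1, \mu_2 \in \hat C_j\) with \(a_i\) uniquely optimal at \(\mu_1\) but strictly suboptimal at \(\mu_2\). The density-of-interiors claim and the continuity argument giving unique optimality of \(a_i\) near \(\mu_1\) are routine, but they must be combined carefully, and the main pitfall to avoid is conflating weak optimality (membership in a closed cell) with strict optimality (membership in a cell interior) — it is precisely the strict optimality of \(a_i\) at \(\mu_1\) that forces \(V\) to be locally affine there, and the failure of even weak optimality of \(a_i\) at \(\mu_2\) that then breaks affineness of \(V\) on the segment.
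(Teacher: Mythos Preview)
Your proof is correct and follows essentially the same contrapositive route as the paper: locate a cell \(\hat C_j\) not contained in any \(C_i\), pick two points of \(\hat C_j\) straddling a boundary of \(C\), and use affineness of \(\hat V\) versus non-affineness of \(V\) on that segment to break convexity of \(\hat V - V\). You are in fact more careful than the paper in insisting that \(\mu_1 \in \inter C_i\) (via the density-of-interiors argument), which cleanly justifies the strict inequality \(\ell_i(\mu_2) < V(\mu_2)\); the paper simply asserts the strict convexity of \(V\) on the segment ``by construction.'' (Minor note: your detour to pick \(\mu_2\) near \(\nu\) rather than \(\mu_2 = \nu\) is unnecessary, since \(\nu\) itself already lies in \(\hat C_j \setminus C_i\).)
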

\begin{proof}
    Please visit Appendix \ref{finernecessitylemmaproof}.\end{proof}

It is easy to see that even if the agent only gains one additional action, when we compare the subdivisions \(C\) and \(\hat{C}\), we can say nothing in general about their relationship in the finer-than partial order. For instance, if the new action, \(\hat{a}\), strictly dominates all of the actions in \(A\), \(\hat{C}\) has a single cell, \(\Delta\left(\Theta\right)\), so \(C \succeq \hat{C}\) (Figure \ref{figsub3}). Moreover, the new action can be such that \(C\) and \(\hat{C}\) are incomparable (Figure \ref{figsub1}) or such that \(\hat{C} \succeq C\) (Figure \ref{figsub2}).

\begin{figure}
\centering
\begin{subfigure}{.5\textwidth}
  \centering
  \includegraphics[scale=.2]{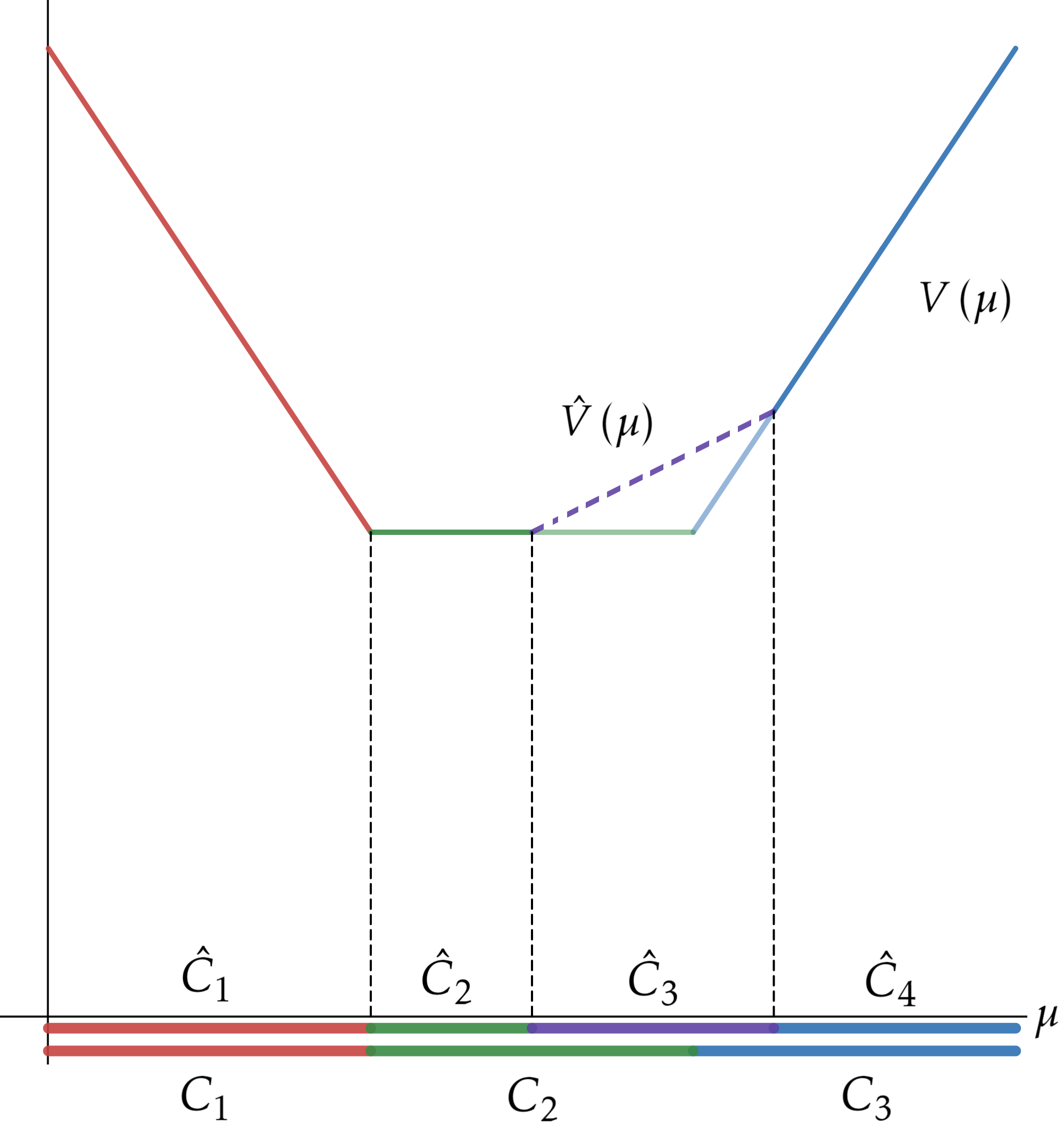}
  \caption{Incomparable \(C\) and \(\hat{C}\)}
  \label{figsub1}
\end{subfigure}%
\begin{subfigure}{.5\textwidth}
  \centering
  \includegraphics[scale=.2]{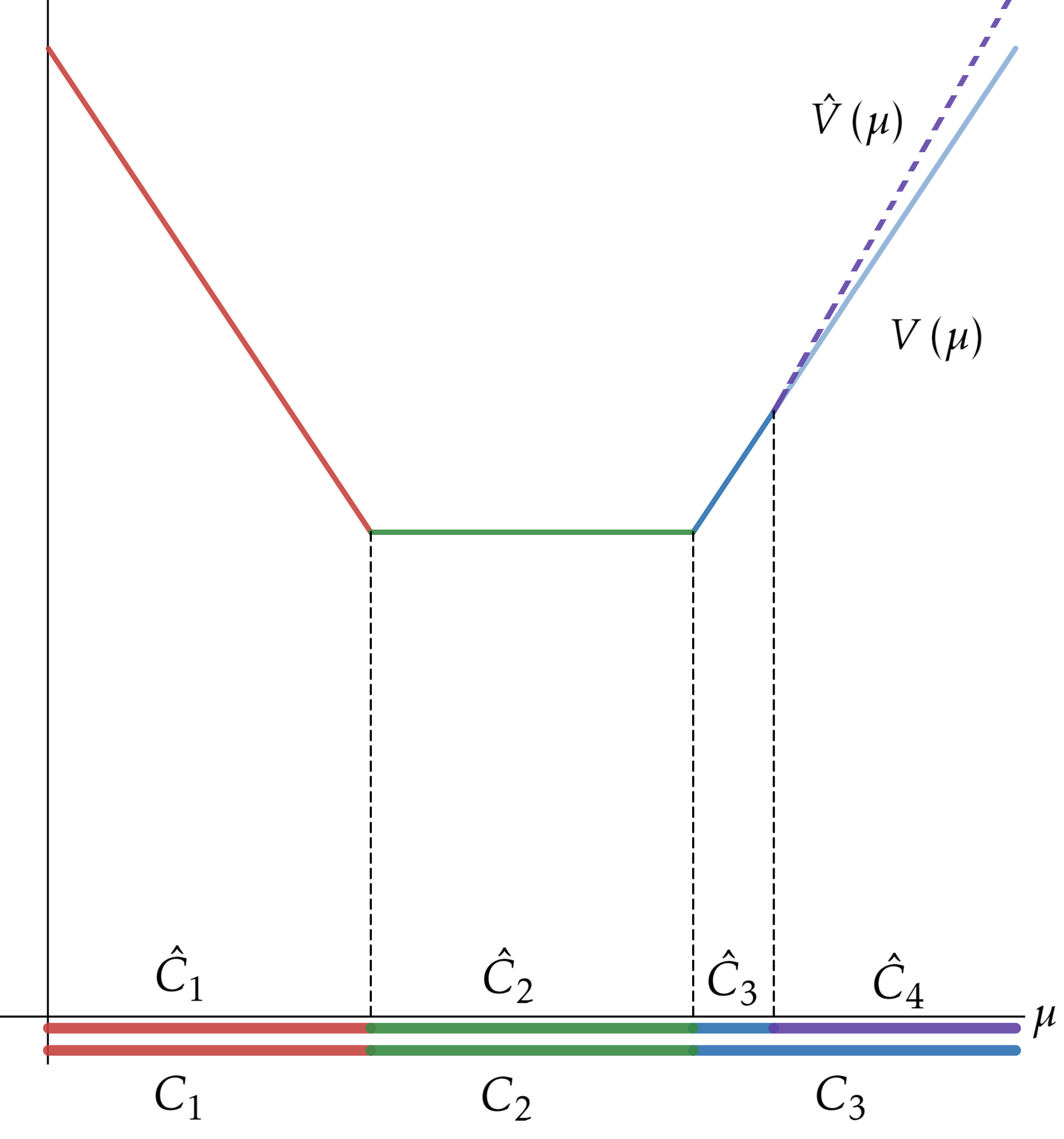}
  \caption{\(\hat{C} \succ C\)}
  \label{figsub2}
\end{subfigure}
\par
\bigskip
\par
\bigskip
\par
\begin{subfigure}{.5\textwidth}
  \centering
  \includegraphics[scale=.2]{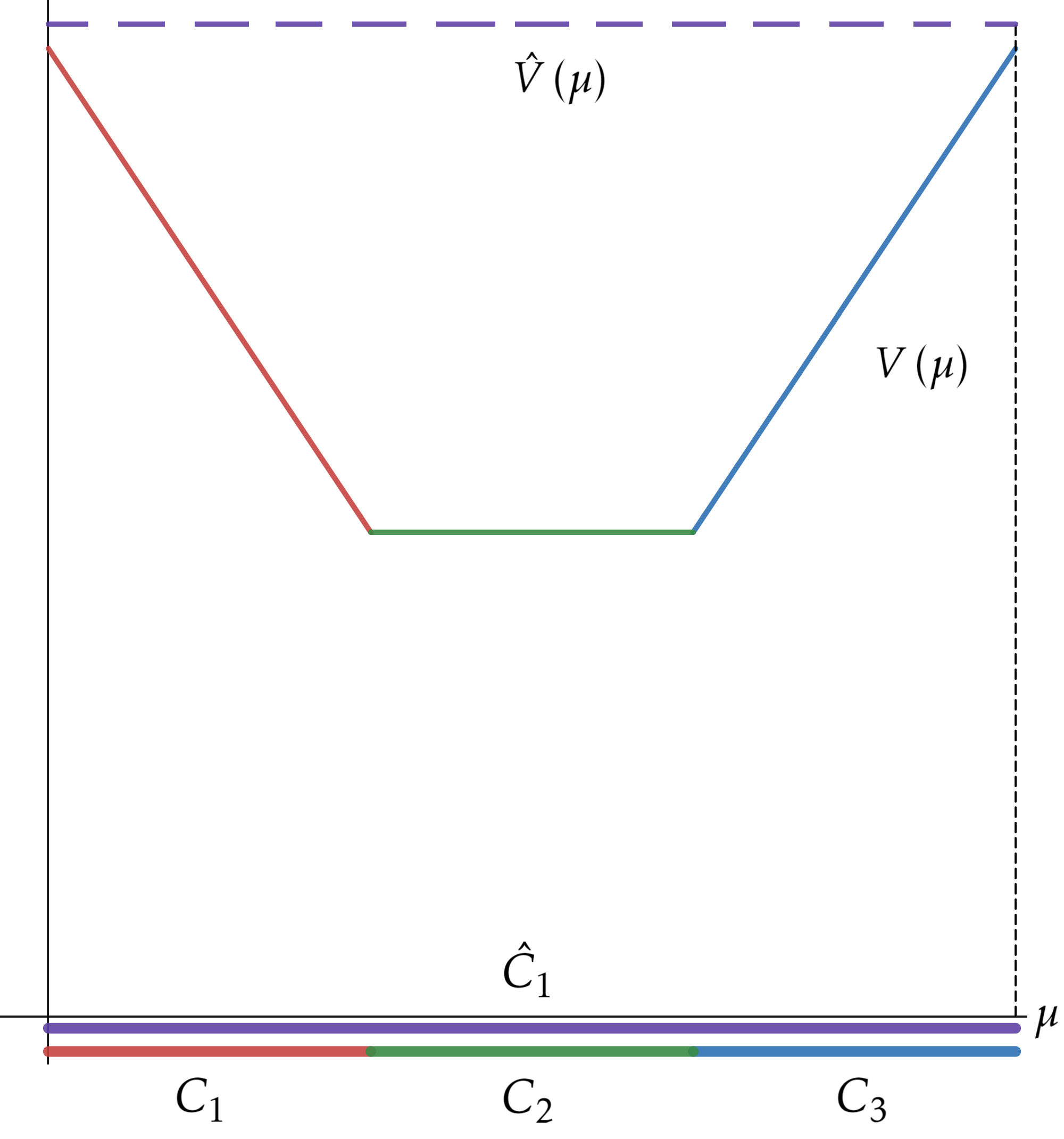}
  \caption{\(C \succ \hat{C}\)}
  \label{figsub3}
\end{subfigure} 
\caption{The three possible kinds of new action. There are two states and (initially) three actions. The payoff to the new, fourth, action is depicted in (dashed) purple. Only in \ref{figsub2} is the new action refining.}
\label{fig2}
\end{figure}

The last outcome is special. We say that a new action \(\hat{a}\) is \emph{Refining} if \(\hat{a}\) is not weakly dominated in \(A \cup \left\{\hat{a}\right\}\) and \(\hat{C} \succeq C\). That is, there exists \(\mu \in \Delta\left(\Theta\right)\) for which \(\mathbb{E}_{\mu}u\left(\hat{a},\theta\right) > V\left(\mu\right)\) and \[\left\{\mu \in \Delta\left(\Theta\right) \ \vert \ \mathbb{E}_{\mu}u\left(\hat{a},\theta\right) \geq V\left(\mu\right)\right\} \subseteq C_i\text{,}\]for some \(C_i \in C\). Refining actions are those that are good--uniquely optimal in at least one state of the world--but not too good--it is only one undominated action in \(A\) whose region of unique optimality shrinks as a result of adding the refining action. In other words, refining actions are precisely those whose addition to the decision problem lead to a positive change along the extensive margin--the agent must become more reactive.

To sum up, we began \(\S\)\ref{section2} by connecting an increase in an agent's value for information with an increase in the convexity of her value function. We then finished \(\S\)\ref{section2} with \(\S\)\ref{geometry}, where we observed that the means through which the transformation makes information more valuable can be decomposed into two channels. First, it must be that the agent becomes more sensitive or reactive to information--the extensive margin. Second, the value of distinguishing between different actions must also increase--the intensive margin. Moreover, we observed that the polyhedral subdivisions corresponding to the value functions capture any changes along the extensive margin exactly. In the next section, we make use of this representation to explore the effects of various alterations to the agent's decision problem on her value for information.

\section{Five Important Classes of Transformations}

In this section, we explore a variety of modifications to an agent's decision problem. Notably, we show that for several classes of transformation--adding and removing actions, and making the agent more or less risk-averse--increased sensitivity to information is also sufficient, or at least generically sufficient, for the transformation to make information more valuable. For these varieties, any transformation that does not make information more valuable must (generically) produce a negative change along the extensive margin.

\subsection{Becoming A Little More Flexible}\label{secmore}

We begin by transforming an agent's decision problem by adding a single action. We call this \emph{Making the Agent A Little More Flexible}. In this instantiation, we assume the set of actions initially available to the agent is finite and that in the transformed decision problem, the agent's utility function remains unchanged, \(\hat{u} = u\), but her new set of actions is \(\hat{A} \coloneqq A \cup \left\{\hat{a}\right\}\) for some \(\hat{a} \in \mathcal{A} \setminus A\). Our main result of this subsection reveals that subdivisions are central in understanding an agent's comparative value for information.
\begin{theorem}\label{moreflex}
Given finite \(\mathcal{D}\) and \(\hat{\mathcal{D}}\), obtained by making the agent a little more flexible, the following are equivalent:
    \begin{enumerate}[label={(\roman*)},noitemsep,topsep=0pt]
        \item \(\hat{a}\) is refining.
        \item The transformation does not generate less information acquisition.
        \item The transformation generates a greater value for information.
    \end{enumerate}
\end{theorem}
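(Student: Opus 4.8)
The plan is to reduce the three-way equivalence to a statement purely about value functions, and then to characterise exactly when that statement holds by analysing the single new affine piece. By Theorem \ref{moreconvex}, items (ii) and (iii) are each equivalent to the convexity of \(\hat V - V\) (the wording of those two conditions is identical to the wording here), so it suffices to show that \(\hat a\) is refining if and only if \(\hat V - V\) is convex. I will write \(\ell(\mu) \coloneqq \mathbb{E}_\mu u(\hat a,\theta)\), an affine function; since \(\hat u = u\), we have \(\hat V = \max\{V,\ell\}\), hence \(\hat V - V = (\ell - V)^+\). I set aside the degenerate case \(\hat V = V\) (i.e.\ \(\hat a\) weakly dominated in \(\hat A\)), in which (ii) and (iii) are immediate; in the rest I assume \(\hat a\) is undominated, so \(\hat V \neq V\).

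For the direction ``convex \(\Rightarrow\) refining'' I would simply invoke Lemma \ref{finernecessitylemma}: convexity of \(\hat V - V\) yields \(\hat C \succeq C\). The cell of \(\hat C\) corresponding to \(\hat a\) is exactly \(D \coloneqq \{\mu : \ell(\mu) \ge V(\mu)\}\), so \(\hat C \succeq C\) says precisely that \(D \subseteq C_i\) for some cell \(C_i\) of \(C\); as \(\hat a\) is undominated, this is the definition of \(\hat a\) being refining.

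The substance is the converse, ``refining \(\Rightarrow\) convex''. Fix the cell \(C_i \supseteq D\) and let \(v_i\) be the affine function agreeing with \(V\) on \(C_i\); since \(V\) is the upper envelope of its affine pieces, \(V \ge v_i\) everywhere, with equality exactly on \(C_i\). Put \(h \coloneqq \ell - v_i\), which is affine. The crux is the claim that \(h \le 0\) on \(\Delta(\Theta) \setminus C_i\). I would prove this by a crossing/continuity argument: if \(h(\mu^\ast) > 0\) for some \(\mu^\ast \notin C_i\), then — using that \(\hat a\) is undominated and \(C_i\) is full-dimensional — one can pick \(\mu'\) in the relative interior of \(C_i\) with \(h(\mu') > 0\); the segment from \(\mu'\) to \(\mu^\ast\) lies in the convex set \(\{h > 0\}\) and exits \(C_i\) at a point \(\mu''\) strictly interior to the segment, so \(\mu'' \in \partial C_i\) and \(\ell(\mu'') = v_i(\mu'') + h(\mu'') > v_i(\mu'') = V(\mu'')\); by continuity \(\ell > V\) on a neighbourhood of \(\mu''\), which therefore lies inside \(D\) while containing points of the segment outside \(C_i\), contradicting \(D \subseteq C_i\). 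Granting the claim, on \(C_i\) we have \(\hat V - V = \max\{v_i,\ell\} - v_i = h^+\), and off \(C_i\) we have \(\mu \notin D\) so \(\hat V - V = 0 = h^+\); thus \(\hat V - V = \max\{h,0\}\) on all of \(\Delta(\Theta)\), a maximum of two affine functions, hence convex.

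I expect the crossing claim — that \(h\) cannot be positive outside \(C_i\) — to be the one genuinely non-routine step; everything else is bookkeeping with piecewise-affine functions plus the citations of Theorem \ref{moreconvex} and Lemma \ref{finernecessitylemma}. A secondary point to handle with care is that \(D\) is genuinely a full-dimensional cell of \(\hat C\) precisely when \(\hat a\) is undominated, which is why the weakly-dominated case is flagged separately at the outset.
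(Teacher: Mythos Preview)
Your proof is correct and mirrors the paper's route: reduce via Theorem \ref{moreconvex} to convexity of \(\hat V-V\), invoke Lemma \ref{finernecessitylemma} for one direction, and show \(\hat V-V=\max\{h,0\}\) with \(h\) affine for the other (this last step is precisely the paper's Lemma \ref{convexandfiner}); your crossing argument for \(h\le 0\) outside \(C_i\) just spells out a step the paper's proof of that lemma asserts in a single line. Your flag on the weakly-dominated edge case is apt---there (ii) and (iii) hold trivially while (i) fails by the definition of ``refining,'' a wrinkle in the theorem's wording that the paper's own proof does not address either.
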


\medskip

This theorem is a consequence of Theorem \ref{moreconvex}, Lemma \ref{finernecessitylemma}, and Lemma \ref{convexandfiner}, below. Recall that Lemma \ref{finernecessitylemma} states that \(\hat{V}-V\) is convex only if \(\hat{C} \succeq C\), which can be understood as highlighting the necessity of increased sensitivity to information for information to become more valuable. But by definition, a refining action is precisely one that makes an agent more responsive to information--it, at most, ``replaces'' a single previous action. Accordingly, a refining action must produce an increase along the extensive margin. The other pertinent channel is the intensive margin. However, a refining action clearly must also improve things along the intensive margin: for any of the previously held actions, the value of distinguishing between them remains the same. Thus, adding a refining action must make information more valuable. The next lemma states this formally, and, therefore, produces the theorem.
\begin{lemma}\label{convexandfiner}
Given finite \(\mathcal{D}\) and \(\hat{\mathcal{D}}\), obtained by making the agent a little more flexible, \(\hat{V}-V\) is convex if \(\hat{C} \succeq C\).
\end{lemma}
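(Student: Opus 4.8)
The plan is to show that the refinement hypothesis forces $\hat V - V$ to coincide with the positive part of a single affine function, whence convexity is immediate. Write $\ell(\mu) \coloneqq \mathbb{E}_{\mu} u(\hat a,\theta)$, which is affine in $\mu$; then $\hat V = \max\{V,\ell\}$, so $\hat V - V = (\ell - V)^{+} \geq 0$. If $\hat a$ is weakly dominated in $\hat A$, then $\hat V = V$ and there is nothing to prove, so assume $\hat a$ is undominated. Then $\bar U \coloneqq \hat C_{\hat a} = \{\mu : \ell(\mu) \geq V(\mu)\}$ is a full-dimensional cell of $\hat C$, and since $\hat C \succeq C$ it lies inside a single cell $C_j$ of $C$. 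Let $v_j(\mu) \coloneqq \mathbb{E}_{\mu} u(a_j,\theta)$ be the affine function attached to $C_j$ (well defined, since on $\inter C_j$ the maximizing action is unique), so that $V = v_j$ on $C_j$ and $v_j \leq V$ on all of $\Delta(\Theta)$.

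The crux is the claim that $\ell(\mu) > v_j(\mu)$ implies $\mu \in C_j$; I would prove it by contradiction. Suppose $\mu_0 \notin C_j$ with $\ell(\mu_0) > v_j(\mu_0)$, and pick $\mu_1$ with $\ell(\mu_1) > V(\mu_1)$, which exists since $\hat a$ is undominated, so that $\mu_1 \in \bar U \subseteq C_j$ and $\ell(\mu_1) > V(\mu_1) = v_j(\mu_1)$. Along the segment $[\mu_1,\mu_0]$ the affine map $\ell - v_j$ is strictly positive at both endpoints, hence strictly positive throughout. On the other hand $\ell - V$ is concave (affine minus convex), strictly positive at $\mu_1$ and strictly negative at $\mu_0$ (because $\mu_0 \notin \bar U$). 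Letting $\mu^{*}$ be the point of $[\mu_1,\mu_0]$ farthest from $\mu_1$ with $\ell(\mu^{*}) \geq V(\mu^{*})$, concavity gives $\ell \geq V$ on $[\mu_1,\mu^{*}]$, with $\mu^{*}$ strictly between $\mu_1$ and $\mu_0$ and $\ell(\mu^{*}) = V(\mu^{*})$. Then $[\mu_1,\mu^{*}] \subseteq \{\ell \geq V\} = \bar U \subseteq C_j$, so $V(\mu^{*}) = v_j(\mu^{*})$; combined with $\ell(\mu^{*}) = V(\mu^{*})$ this yields $\ell(\mu^{*}) = v_j(\mu^{*})$, contradicting that $\ell - v_j$ is strictly positive on the whole segment. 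This segment argument is where the real content sits; everything else is bookkeeping.

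Granting the claim, I would conclude as follows. On $C_j$ we have $V = v_j$, so $\hat V - V = \max\{v_j,\ell\} - v_j = \max\{0,\ell - v_j\}$; off $C_j$ the claim gives $\ell \leq v_j \leq V$, so $\hat V = V$ and $\hat V - V = 0 = \max\{0,\ell - v_j\}$ there as well. Hence $\hat V - V = \max\{0,\ell - v_j\}$ on all of $\Delta(\Theta)$, a pointwise maximum of two affine functions, and therefore convex. The only extra care I anticipate is in dispatching the degenerate case ($\hat a$ dominated) at the outset and in confirming that $\bar U$ is a genuine cell of $\hat C$ contained in a single cell of $C$ — both immediate from the definitions of ``undominated'' and of the refinement order $\succeq$.
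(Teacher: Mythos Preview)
Your proof is correct and follows the same overall route as the paper's: both establish that $\hat V - V = \max\{0,\ell - v_j\}$ globally, where $\ell$ is the affine payoff of $\hat a$ and $v_j$ is the affine payoff attached to the cell $C_j$ containing $\hat C_{\hat a}$. The paper's argument is shorter because it simply asserts that the formula $\max\{0,(\hat\alpha-\alpha)\cdot\mu + \hat\beta-\beta\}$, derived on $C_j$, holds on all of $\Delta(\Theta)$; implicitly this requires $\ell \leq v_j$ outside $C_j$, which the paper does not verify. Your segment argument---exploiting concavity of $\ell-V$ along a line joining a point where $\ell>V$ to a hypothetical point where $\ell>v_j$ outside $C_j$---supplies exactly that missing step, so your write-up is in fact more complete than the paper's on this point.
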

\begin{proof}
    Please visit Appendix \ref{convexandfinerproof}.
\end{proof}


\subsection{Becoming More Flexible}\label{secmuch}
Now we transform an agent's decision problem by adding multiple actions. We call this \emph{Making the Agent More Flexible}. We maintain the assumptions that \(A\) is finite and that in the transformed decision problem, \(\hat{u} = u\). Now, her new set of actions is \(\hat{A} \coloneqq A \cup B\) for some additional finite set of actions \(B\in \mathcal{A} \setminus A\).

\begin{figure}
    \centering
    \includegraphics[scale=.25]{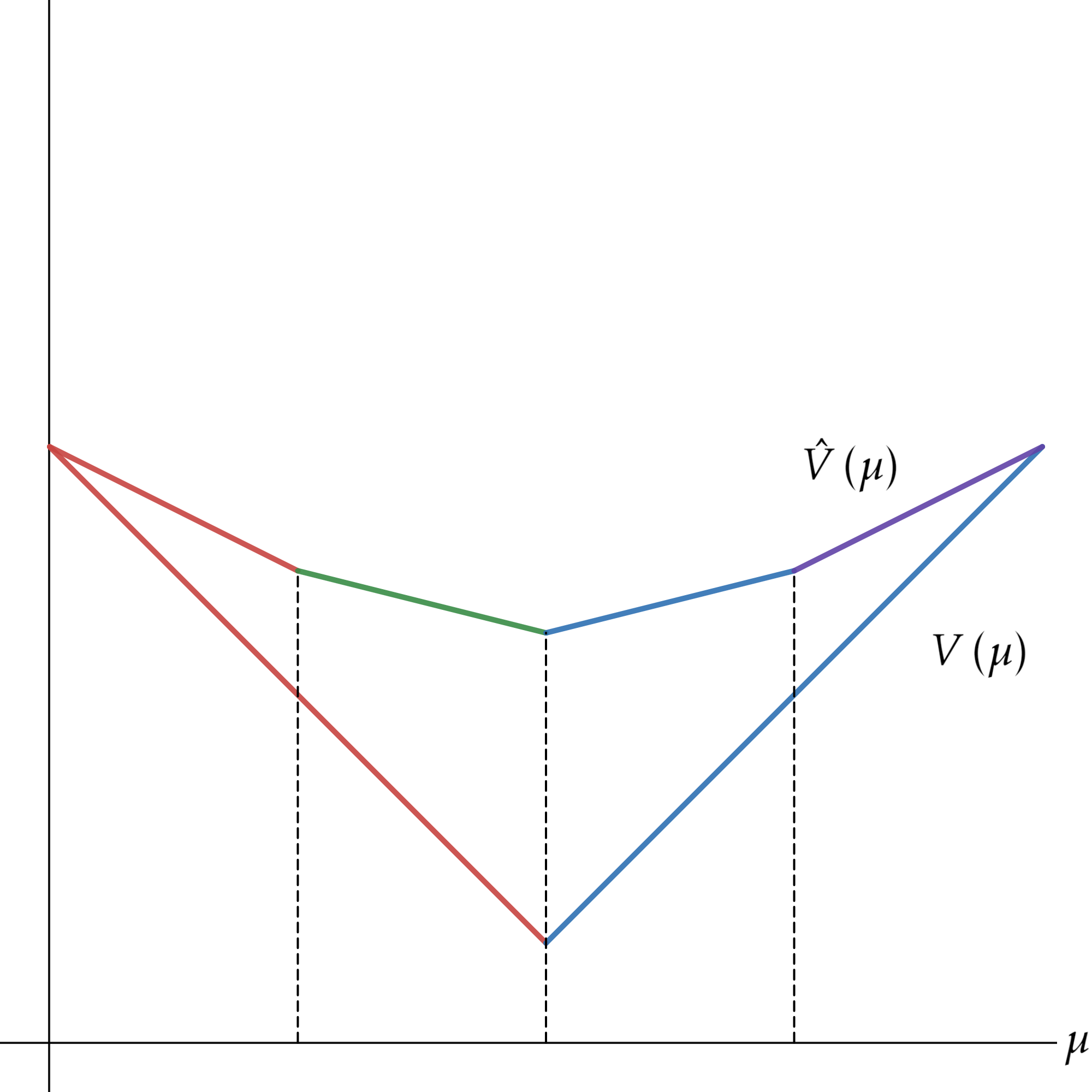}
    \caption{\(\hat{C} \succ C\) but \(\hat{V}-V\) is not convex.}
    \label{counterfigure1}
\end{figure}

With multiple new actions, we lose the equivalence of the convexity of \(\hat{V} - V\) and \(\hat{C}\)'s dominance of \(C\) in the refinement order. In particular, although Lemma \ref{finernecessitylemma} states that \(\hat{V} - V\) being convex implies \(\hat{C} \succeq C\), the converse is false when we make the agent more flexible. Intuitively, the value function \(\hat{V}\) can correspond to a finer subdivision than \(V\) but be shallower. The issue with adding multiple new actions is that now it is possible to ``replace'' previous sections of the value function in which the value of distinguishing between different actions is strictly lower. Such an occurrence is depicted in Figure \ref{counterfigure1}. So, even though the transformation makes information more valuable on the extensive margin, it produces a decrease on the intensive margin.

In order to preclude such an occurrence, we introduce another term. We say that a new set of actions is \emph{Totally Refining} if each \(b \in B\) is either weakly dominated or refining.

\begin{lemma}\label{seqextremal}
Given finite \(\mathcal{D}\) and \(\hat{\mathcal{D}}\), obtained by making the agent more flexible, \(\hat{V} - V\) is convex if \(B\) is totally refining.
\end{lemma}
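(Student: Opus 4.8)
The plan is to make two reductions and then verify convexity directly along line segments, rather than trying to add the actions of $B$ one at a time.

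\textit{Reductions.} First, deleting the weakly dominated members of $B$ leaves $\hat V$ unchanged, so assume every $b \in B$ is refining. For such $b$ the set $R_b := \{\mu \in \Delta(\Theta) : \mathbb{E}_\mu u(b,\theta) \geq V(\mu)\}$ lies in a single cell of $C$, and since $b$ is undominated in $A \cup \{b\}$ the set $R_b$ has nonempty (relative) interior, so that cell is unique; call it $C_{\iota(b)}$. Write $B = \bigsqcup_i B_i$ with $B_i := \{b \in B : \iota(b) = i\}$, let $V^{(i)}$ be the value function of $A \cup B_i$, and note $\hat V = \max_i V^{(i)}$. I claim moreover that $\hat V - V = \sum_i (V^{(i)} - V)$: each summand is nonnegative, and at any $\mu$ at most one is positive, for if $V^{(i)}(\mu) > V(\mu)$ then $\mu$ lies in the open set $\{\nu : \mathbb{E}_\nu u(b,\theta) > V(\nu)\}$ for some $b \in B_i$, a set contained in $C_i$ by refiningness and hence in the relative interior of $C_i$, so $\mu$ lies in no other full-dimensional cell and $V^{(j)}(\mu) = V(\mu)$ for $j \neq i$. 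A finite sum of convex functions is convex, so it suffices to treat the case $B = B_{i^*}$ for one cell $K := C_{i^*}$; write $a := a_{i^*}$, so that $V(\mu) = \mathbb{E}_\mu u(a,\theta)$ is affine on $K$.

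\textit{Convexity in the single-cell case.} Here $\hat V(\mu) = \max\{V(\mu), \max_{b \in B} \mathbb{E}_\mu u(b,\theta)\}$. Off $K$ every $b$ has $\mathbb{E}_\mu u(b,\theta) < V(\mu)$, since $\mu \notin R_b$, so $\hat V = V$ there. On $K$, using $V(\mu) = \mathbb{E}_\mu u(a,\theta)$,
\[\hat V(\mu) - V(\mu) = \max\Big\{0,\ \max_{b \in B}\big(\mathbb{E}_\mu u(b,\theta) - \mathbb{E}_\mu u(a,\theta)\big)\Big\} =: \psi(\mu)\text{,}\]
a maximum of the zero function and finitely many affine functions, hence convex on all of $\Delta(\Theta)$. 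Thus $\hat V - V$ equals the convex function $\psi$ on $K$ and equals $0$ off $K$; since $\hat V - V$ is continuous, $\psi$ vanishes on $\partial K \cap \overline{\Delta(\Theta)\setminus K}$. To finish, I would check that the restriction of $\hat V - V$ to an arbitrary segment $[\mu,\mu'] \subseteq \Delta(\Theta)$ is convex. As $K$ is convex, $[\mu,\mu'] \cap K$ is a subsegment $[\nu,\nu']$; on $[\nu,\nu']$ the restriction equals $\psi$ (convex), and on the remaining parts it equals $0$ (affine). At each junction $\nu$ or $\nu'$ — which lies in $\partial K \cap \overline{\Delta(\Theta)\setminus K}$, where $\psi = 0$ — the one-sided derivatives line up correctly because $\psi \geq 0$: at $\nu$ the derivative coming from the flat side is $0$ while the derivative heading into $K$ is $\geq 0$, and at $\nu'$ the derivative reaching it from within $K$ is $\leq 0$ while the one leaving into the flat side is $0$. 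Hence the restriction is convex on each piece and across each junction, so it is convex, and therefore $\hat V - V$ is convex.

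\textit{The main obstacle.} The tempting route is to enumerate $B = \{b_1,\dots,b_k\}$, add the $b_j$ one at a time, and invoke Lemma \ref{convexandfiner} at each step; this fails, because refiningness relative to $A$ need not survive the addition of other refining actions into the same cell — the region of $b_j$ relative to the already-updated value function can straddle two newly created cells — and no ordering repairs this in general. The order-free decomposition above circumvents the issue, and the genuinely delicate point is the claim that $\hat V - V$ vanishes off the target cell and on its shared boundary: this is exactly where one uses that each $b$ is \emph{refining} rather than merely that $\hat C \succeq C$ (cf.\ the failure of the converse of Lemma \ref{finernecessitylemma} when several actions are added).
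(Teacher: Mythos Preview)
Your proof is correct, but it takes a markedly longer route than the paper's. The paper does not add the elements of $B$ sequentially, and it does not group them by cell; instead it observes that for each $b\in B$ the value function $V_b$ of $A\cup\{b\}$ satisfies $\hat V=\max_{b\in B}V_b$, whence
\[\hat V - V \;=\; \max_{b\in B}V_b - V \;=\; \max_{b\in B}\bigl(V_b - V\bigr)\text{.}\]
Each $V_b-V$ is convex by Lemma~\ref{convexandfiner} (or identically zero if $b$ is weakly dominated), and a pointwise maximum of convex functions is convex. That is the entire argument.

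Your ``main obstacle'' paragraph correctly diagnoses why the \emph{sequential} approach fails---refiningness relative to $A$ need not survive once other new actions have been folded in---but the paper's trick sidesteps this completely by comparing every $b$ to the \emph{original} $V$ simultaneously and exploiting the trivial identity $\max_b f_b - g = \max_b(f_b-g)$. Your cell-by-cell sum decomposition $\hat V - V=\sum_i(V^{(i)}-V)$ is a nice structural observation in its own right (it says the bumps live in disjoint cells), and your segment-by-segment convexity check is sound, but both are unnecessary once one sees the max identity.
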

\begin{proof}
For any \(b \in B\), let \(V_b\) denote the agent's value function when the set of actions is \(A \cup \left\{b\right\}\). Since each \(b\) is refining or weakly dominated, \(V_b - V\) is convex for all \(b \in B\). Finally, \(\hat{V} - V = \max\left(V_b\right)_{b \in B} - V = \max\left\{\left(V_b - V\right)_{b \in B}\right\}\) is convex. \end{proof}

Intuitively, any collection of refining actions must increase the agent's responsiveness to information. Theorem \ref{moreconvex} and Lemma \ref{seqextremal} produce
\begin{corollary}\label{seqeqcorr}
Given finite \(\mathcal{D}\) and \(\hat{\mathcal{D}}\), obtained by making the agent more flexible, the transformation generates a greater value for information and does not generate less information acquisition if the set of additional actions is totally refining.
\end{corollary}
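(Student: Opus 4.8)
The plan is to obtain the corollary as an immediate splice of the two results that precede it, with no new argument required. First I would invoke Lemma \ref{seqextremal}: because the set of additional actions \(B\) is totally refining, \(\hat{V} - V\) is convex. (Concretely, writing \(V_b\) for the agent's value function with action set \(A \cup \{b\}\), each \(b \in B\) is either refining or weakly dominated; in the first case \(V_b - V\) is convex by Lemma \ref{convexandfiner} since \(\hat{C}_b \succeq C\), and in the second case \(V_b = V\) so \(V_b - V \equiv 0\) is convex. Then \(\hat{V} - V = \max_{b \in B}(V_b - V)\) is a pointwise maximum of convex functions, hence convex — this is exactly the content of Lemma \ref{seqextremal}.)

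Second, I would feed this into Theorem \ref{moreconvex}, which asserts the equivalence of (i) convexity of \(\hat{V} - V\), (ii) the transformation not generating less information acquisition, and (iii) the transformation generating a greater value for information. Having just established (i), both (ii) and (iii) follow, which is precisely the conjunction claimed in the corollary. The only thing to verify is that the hypotheses line up — \(\mathcal{D}\) and \(\hat{\mathcal{D}}\) are finite (assumed), the state space \(\Theta\) is common to both problems, and ``making the agent more flexible'' keeps \(\hat{u} = u\) — but these are all in place, so Theorem \ref{moreconvex} applies verbatim.

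I do not expect any genuine obstacle here: the corollary is a bookkeeping consequence of Lemma \ref{seqextremal} and Theorem \ref{moreconvex}. If there is a delicate step anywhere in the chain, it is not in this proof but in the substantive inputs behind it — in particular Lemma \ref{convexandfiner} (and, supporting it, Lemma \ref{finernecessitylemma}), which does the real geometric work of tying a single refining action to convexity of \(V_{\hat{a}} - V\) — and those are taken as given.
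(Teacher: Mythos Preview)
Your proposal is correct and matches the paper's own proof exactly: the corollary is stated as an immediate consequence of Lemma \ref{seqextremal} (yielding convexity of \(\hat{V}-V\)) together with Theorem \ref{moreconvex} (delivering both conclusions from that convexity). The parenthetical unpacking of Lemma \ref{seqextremal} you give is also faithful to the paper's argument for that lemma.
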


The converse to Lemma \ref{seqextremal} is false, and Figure \ref{counterfig2} illustrates this. There, the agent gets access to two new actions, which increases her value of information. Note that the addition of these two actions leaves her subdivision unchanged, i.e., \(\hat{C} = C\) but \(\hat{V}-V\) is convex.\footnote{I am grateful to Gregorio Curello for suggesting this example.} Moreover, the addition of just one of these actions would not increase her value for information.
\begin{figure}
    \centering
    \includegraphics[scale=.25]{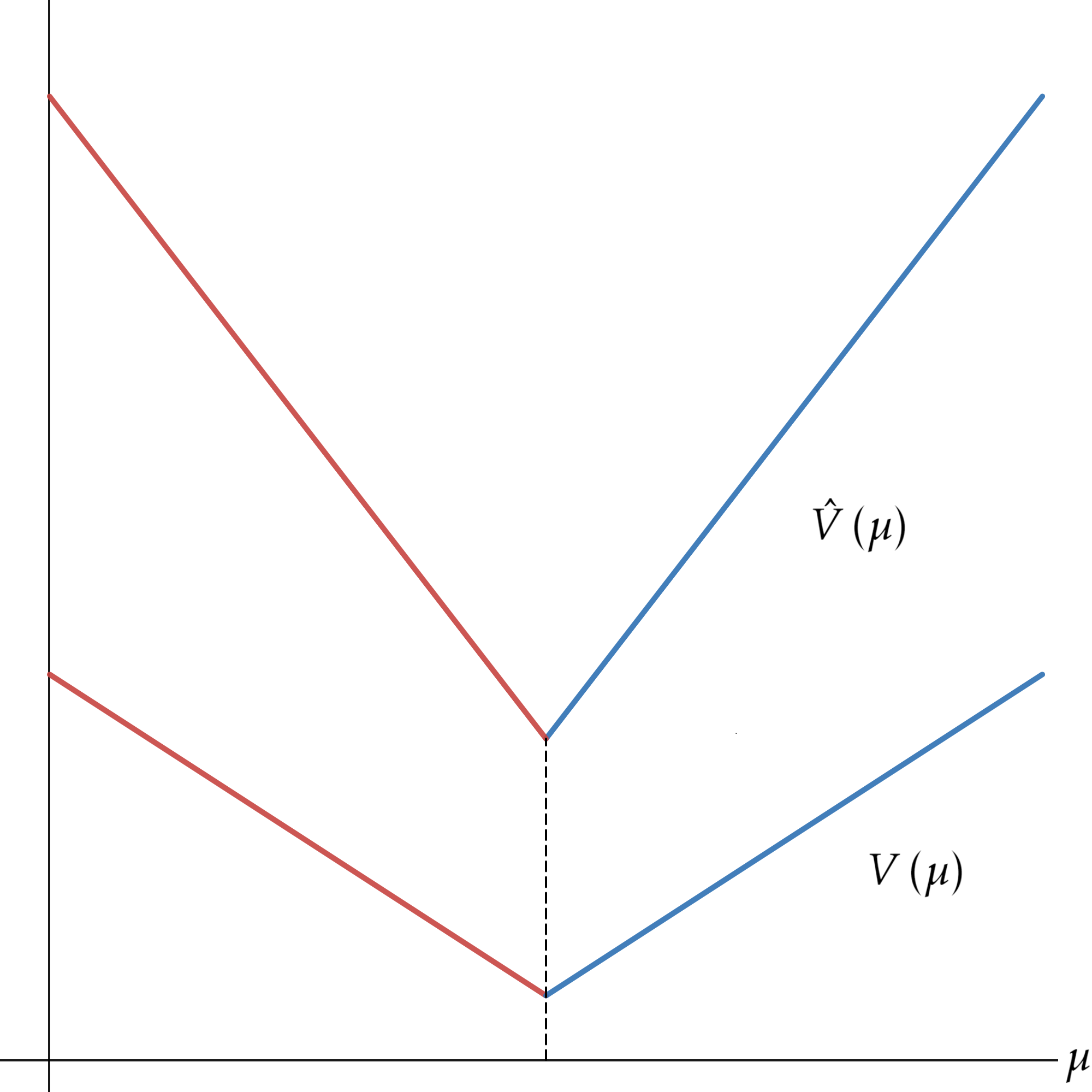}
    \caption{\(\hat{C} = C\) and \(\hat{V}-V\) is convex. However, \(B\) is not totally refining.}
    \label{counterfig2}
\end{figure}
On the other hand, the converse is almost true in the following sense.
\begin{figure}
\centering
\begin{subfigure}{.5\textwidth}
  \centering
  \includegraphics[scale=.18]{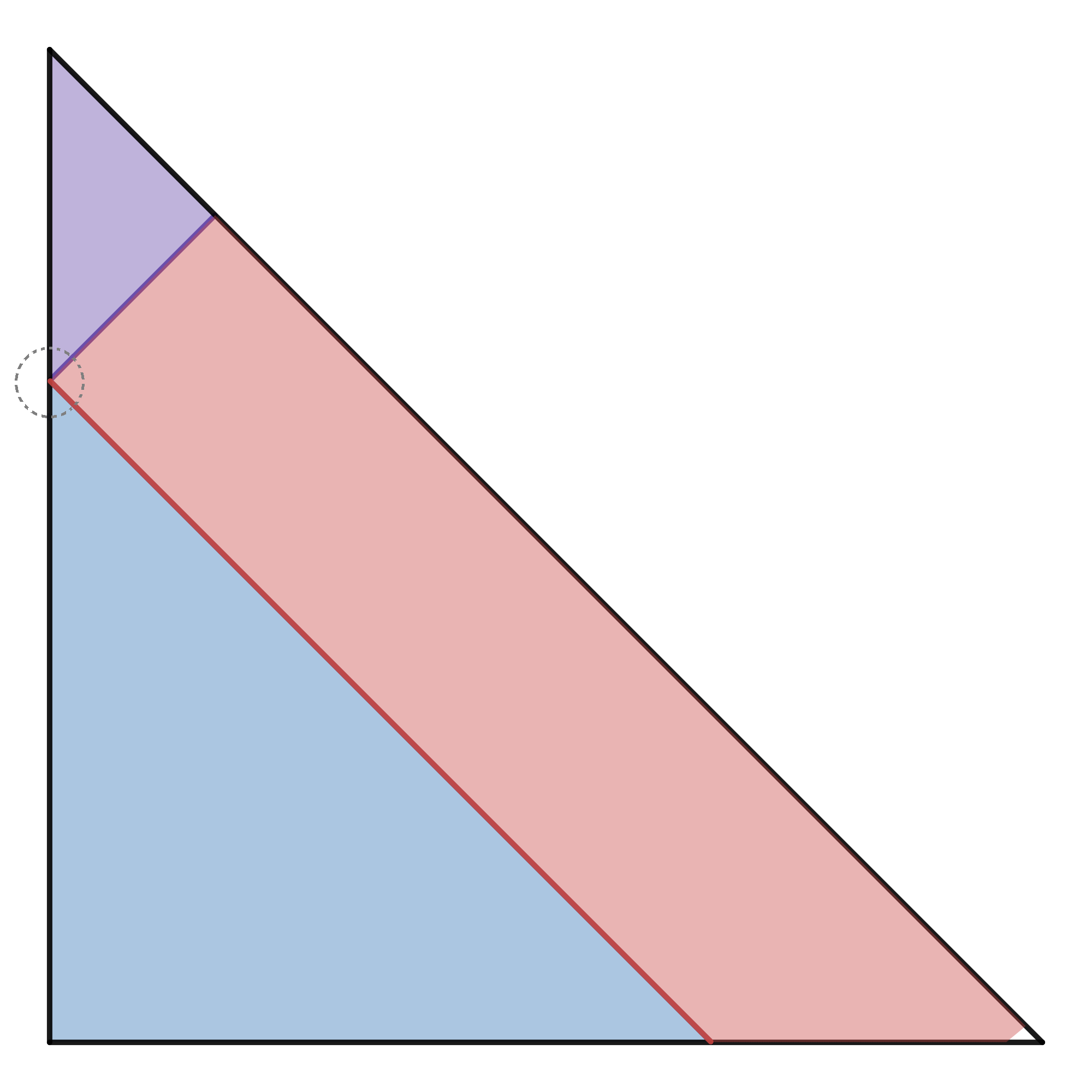}
  \caption{Not strictly refining.}
  \label{figsub0289}
\end{subfigure}%
\begin{subfigure}{.5\textwidth}
  \centering
  \includegraphics[scale=.18]{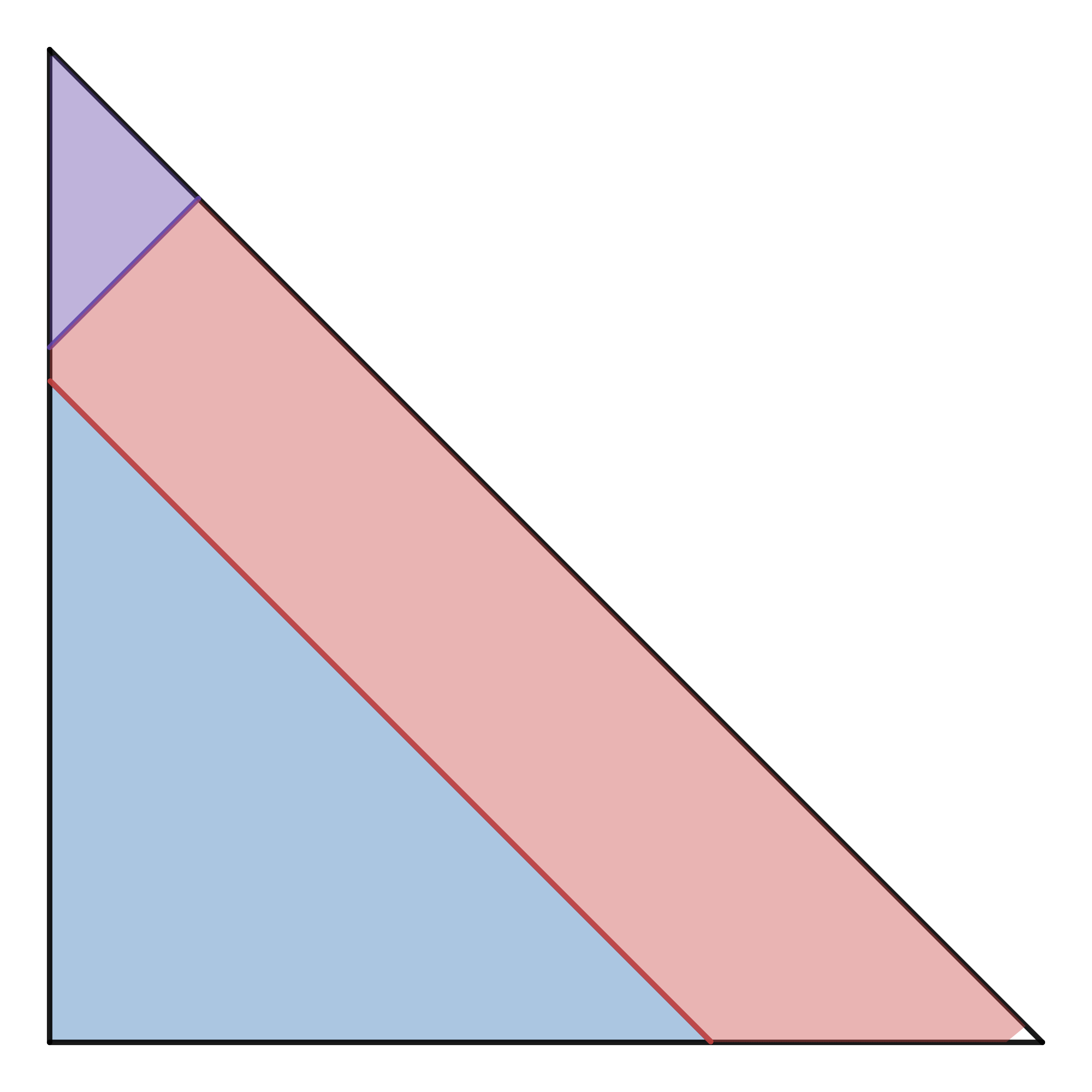}
  \caption{Strictly refining.}
  \label{figsub1289}
\end{subfigure}
\caption{The new action \(\hat{a}\) corresponds to the top left (purple) triangle.}
\label{fig1289}
\end{figure}

We say a new action \(\hat{a}\) is \emph{Strictly Refining} if the corresponding element \(\hat{C}_i \in \hat{C}\) is a subset of some element \(C_i \in C\)--i.e., is refining--and
\(\hat{C}_i \cap C_j = \emptyset\) for all \(C_j \in C \setminus \left\{C_i\right\}\). See Figure \ref{fig1289}. A new set of actions is \emph{Totally Strictly Refining} if each \(b \in B\) is either strictly dominated or strictly refining. For an additional set of actions \(B\), we understand the agent's utility to be an element, \(u\), of the Euclidean space \(\mathbb{R}^{B \times \Theta}\) (equipped with the Euclidean metric). Given an initial finite decision problem \(\mathcal{D}\) and a transformed finite decision problem \(\hat{\mathcal{D}}\), obtained by making the agent more flexible, we denote an agent's value function in the transformed decision problem by \(\hat{V}^u\) and say that \emph{The Transformation Generically Generates a Greater Value for Information and Does Not Generate Less Information} if \(\hat{V}^{\tilde{u}} - V\) is convex for all \(\tilde{u}\) in an open ball around \(u\).
\begin{proposition}\label{generic}
    Given finite \(\mathcal{D}\) and \(\hat{\mathcal{D}}\), obtained by making the agent more flexible, the transformation generically generates a greater value for information and does not generate less information acquisition if and only if the set of additional actions is totally strictly refining.
\end{proposition}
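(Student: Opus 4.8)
The plan is to establish the two implications separately. \emph{Sufficiency} --- $B$ totally strictly refining implies the transformation generically generates a greater value for information and does not generate less information --- follows from Lemma~\ref{seqextremal} once one checks that ``totally strictly refining'' is an \emph{open} condition on the payoff vector $u\in\mathbb{R}^{B\times\Theta}$. Fixing $b$ and writing $R_b\coloneqq\{\mu:\mathbb{E}_\mu u(b,\theta)\ge V(\mu)\}$ and $N\coloneqq\Delta(\Theta)\setminus\bigcup_k\inter C_k$ (the set of beliefs at which two or more undominated actions of $A$ tie for optimal, equivalently where $V$ is non-affine in every neighborhood), the condition ``$b$ strictly dominated'' reads $\max_\mu\bigl(\mathbb{E}_\mu u(b,\theta)-V(\mu)\bigr)<0$, and ``$b$ strictly refining'' reads (outside the trivial one-cell case) ``$R_b$ has nonempty interior and $\max_{\mu\in N}\bigl(\mathbb{E}_\mu u(b,\theta)-V(\mu)\bigr)<0$''; compactness of $\Delta(\Theta)$ and of $N$ makes both into strict inequalities with a uniform gap, hence robust to small perturbations of $u(b,\cdot)$. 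So if $B$ is totally strictly refining at $u$ it is totally refining on a whole Euclidean ball around $u$, and Lemma~\ref{seqextremal} gives convexity of $\hat{V}^{\tilde{u}}-V$ throughout that ball, which is the claim.

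For \emph{necessity} I would argue the contrapositive: if $B$ is not totally strictly refining, then arbitrarily close to $u$ there is a $\tilde u$ with $\hat{V}^{\tilde{u}}-V$ non-convex. Fix $b^*\in B$ that is neither strictly dominated nor strictly refining and set $R^*\coloneqq R_{b^*}$. The crux is a geometric lemma: $R^*$ is nonempty (since $b^*$ is not strictly dominated), it is convex --- a sublevel set of the convex map $\mu\mapsto V(\mu)-\mathbb{E}_\mu u(b^*,\theta)$ --- hence connected, and it must meet $N$. If it did not, connectedness would force $R^*\subseteq\inter C_k$ for a single $k$; when $b^*$ is undominated this contradicts non--strict-refinement (which yields $R^*\not\subseteq\inter C_k$ for all $k$), and when $b^*$ is weakly dominated it is impossible because $\mathbb{E}_\mu u(b^*,\theta)-\mathbb{E}_\mu u(a_k,\theta)$ is affine and $\le 0$ on $\inter C_k$ with an interior zero, hence vanishes on all of $C_k$, making $R^*$ full-dimensional --- a contradiction. (If $A$ has a single undominated action then $N=\emptyset$, $B$ is automatically totally strictly refining, and both sides of the equivalence hold vacuously, so that case can be set aside.)

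Given $\mu^*\in R^*\cap N$, the second step produces the non-convexity, and the trick is to perturb not $b^*$ itself but the highest-valued new action at $\mu^*$. Let $b^\dagger\in\argmax_{b\in B}\mathbb{E}_{\mu^*}u(b,\theta)$, so that $\mathbb{E}_{\mu^*}u(b^\dagger,\theta)\ge\mathbb{E}_{\mu^*}u(b^*,\theta)\ge V(\mu^*)\ge\mathbb{E}_{\mu^*}u(a,\theta)$ for every $a\in A$. Let $\tilde u$ agree with $u$ except that $u(b^\dagger,\cdot)$ is raised by a uniform constant $\epsilon>0$. Then at $\mu^*$ the action $b^\dagger$ strictly dominates every other action of $A\cup B$, so by continuity of the finitely many affine payoff maps it is uniquely optimal on a convex relative neighborhood $N_0\ni\mu^*$; on $N_0$, $\hat{V}^{\tilde{u}}$ coincides with the affine map $\mu\mapsto\mathbb{E}_\mu u(b^\dagger,\theta)+\epsilon$, whereas $V$ is non-affine on $N_0$ because $\mu^*\in N$. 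Hence $(\hat{V}^{\tilde{u}}-V)|_{N_0}$ is affine minus (convex, non-affine), i.e.\ concave and non-affine, so it is not convex on the convex set $N_0$, and therefore $\hat{V}^{\tilde{u}}-V$ is not convex. Since $\epsilon>0$ is arbitrary, $u$ is not interior to the set of perturbations yielding convexity, which is the contrapositive.

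I expect the main obstacle to be the geometric lemma in the necessity direction --- in particular the relative-interior versus relative-boundary bookkeeping needed to exclude the degenerate configuration of a weakly dominated new action that is tangent to $V$ along a set buried in the relative interior of a single cell, and the careful (but harmless) handling of the trivial single-undominated-action case. Once $R^*\cap N\neq\emptyset$ is secured, the remaining argument is elementary, using only that a function which is simultaneously convex and concave is affine and that a convex function restricted to a convex set is convex.
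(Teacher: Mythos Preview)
Your approach is essentially the paper's: for sufficiency you show ``totally strictly refining'' is an open condition (strict inequalities on compact sets) and invoke Lemma~\ref{seqextremal}; for necessity you locate a point $\mu^*$ on the cell boundary $N$ where some new action weakly beats $V$, bump the best new action there upward, and conclude non-convexity. The only cosmetic differences are that you add a uniform constant whereas the paper perturbs a single coordinate $u(b,\theta')$, and you argue non-convexity directly (affine minus non-affine convex is non-convex) whereas the paper routes through Lemma~\ref{finernecessitylemma}.

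You correctly flag the delicate spot, and it bites: your two descriptions of $N$---the formula $\Delta(\Theta)\setminus\bigcup_k\inter C_k$ versus the tie-set where $V$ is locally non-affine---do not coincide (the former also contains simplex-boundary points in $\partial\Delta(\Theta)$), and with the tie-set definition the implication ``$R^*\cap N=\emptyset\Rightarrow R^*\subseteq\inter C_k$'' can fail when $R^*$ sits on $\partial\Delta(\Theta)$. This breaks both your weakly-dominated subcase and your single-undominated-action remark. That said, the paper's own proof has exactly the same lacuna: it simply asserts that ``$B$ not totally strictly refining'' yields a $\mu$ at which two undominated actions $a_1,a_2$ tie and some $b$ weakly beats $V$, which need not hold when the offending $b$ is weakly-but-not-strictly dominated with $R_b$ confined to the simplex boundary away from $N$.
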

\begin{proof}
    The proof lies in Appendix \ref{genericproof}.
\end{proof}
This implies
\begin{corollary}
    Given finite \(\mathcal{D}\) and \(\hat{\mathcal{D}}\), obtained by making the agent more flexible, the transformation generically generates a greater value for information and does not generate less information acquisition only if the set of additional actions is totally refining.
\end{corollary}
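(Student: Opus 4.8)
The plan is to prove the two implications separately, reducing in each case to the refinement-order characterization of convexity already in hand: Lemma \ref{finernecessitylemma} (\(\hat{V}-V\) convex \(\Rightarrow\) \(\hat{C}\succeq C\)), Lemmas \ref{convexandfiner} and \ref{seqextremal} (\(B\) totally refining \(\Rightarrow\) \(\hat{V}-V\) convex), and Theorem \ref{moreconvex}. It is convenient to note at the outset that \(V\) and its subdivision \(C\) do not depend on the utilities of the new actions, so \(C\) is fixed while \(\hat{V}^{\tilde{u}}\) and \(\hat{C}^{\tilde{u}}\) vary with \(\tilde{u}\). Write \(\mathcal{W}\) for the union of the pairwise intersections \(C_i\cap C_j\) (\(i\neq j\)) of cells of \(C\), and for a candidate utility \(u(b,\cdot)\) put \(R_b := \{\mu\in\Delta(\Theta) : \mathbb{E}_\mu u(b,\cdot)\geq V(\mu)\}\); then \(b\) is strictly dominated iff \(R_b=\emptyset\), refining iff \(b\) is undominated and \(R_b\) lies in a single cell, and strictly refining iff in addition \(R_b\cap\mathcal{W}=\emptyset\).

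\textbf{Sufficiency.} The key point is that strict domination and strict refining are \emph{open} conditions on \(u(b,\cdot)\), uniformly over \(b\in B\). If \(b\) is strictly dominated, then \(\max_{\mu\in\Delta(\Theta)}\bigl(\mathbb{E}_\mu u(b,\cdot)-V(\mu)\bigr)<0\) by compactness, and this survives small perturbations. If \(b\) is strictly refining, then \(R_b\) and \(\mathcal{W}\) are disjoint compacta, so \(\max_{\mu\in\mathcal{W}}\bigl(\mathbb{E}_\mu u(b,\cdot)-V(\mu)\bigr)<0\); hence for \(\tilde{u}\) near \(u\) the perturbed region still misses \(\mathcal{W}\) while still strictly exceeding \(V\) somewhere, so the perturbed \(b\) is again refining (being convex, the perturbed \(R_b\) cannot cross a wall of \(C\) without meeting \(\mathcal{W}\)). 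Consequently there is an open ball around \(u\) on which every \(b\in B\) is refining or weakly dominated — i.e.\ \(B\) is totally refining — and Lemma \ref{seqextremal} gives that \(\hat{V}^{\tilde{u}}-V\) is convex throughout that ball; Theorem \ref{moreconvex} restates this as the generic property.

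\textbf{Necessity.} I argue the contrapositive. Suppose some \(b\in B\) is neither strictly dominated nor strictly refining, so \(R_b\neq\emptyset\) and \(R_b\) either meets \(\mathcal{W}\) or has empty interior. In the first case I would produce \(\tilde{u}\) arbitrarily close to \(u\) with \(\hat{C}^{\tilde{u}}\not\succeq C\), whence Lemma \ref{finernecessitylemma} gives that \(\hat{V}^{\tilde{u}}-V\) is not convex, contradicting the generic property. Concretely: pick \(\mu^\ast\in R_b\cap C_i\cap C_j\) with \(i\neq j\), preferably on the relative boundary of \(R_b\) so that every action in \(A\cup B\) yields at most \(V(\mu^\ast)\) there; raise \(u(b,\cdot)\) by a small constant \(\varepsilon\) and, if other new actions obstruct \(b\) near \(\mu^\ast\), also lower their utilities by small amounts, so that in the perturbed problem \(b\) is the unique maximizer over \(A\cup B\) on a full-dimensional neighborhood of a point of \(C_i\cap C_j\); that neighborhood straddles \(C_i\) and \(C_j\), so the \(b\)-cell of \(\hat{C}^{\tilde{u}}\) meets \(\inter C_i\) and \(\inter C_j\) and \(\hat{C}^{\tilde{u}}\not\succeq C\). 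The stated corollary is then immediate: strictly refining implies refining and strictly dominated implies weakly dominated, so totally strictly refining implies totally refining.

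\textbf{Main obstacle.} The crux is the perturbation in the necessity direction — ensuring the perturbed \(b\) actually surfaces as a cell straddling a wall of \(C\). The difficulty is that at every wall point of \(R_b\) some \emph{other} new action may already exceed \(V\) by a fixed positive margin that no small perturbation can overcome; handling this forces a careful choice of \(\mu^\ast\) (hence the preference for a boundary point of \(R_b\)) and, in the worst case, an iterative thinning of the obstructing actions together with a check that their downward perturbations do not restore the refinement order elsewhere. A separate, genuinely borderline case is the second one above — \(b\) weakly but not strictly dominated with \(R_b\) nonempty but of empty interior and disjoint from \(\mathcal{W}\) (e.g.\ contact with \(V\) at a single point of \(\partial\Delta(\Theta)\) lying in only one cell) — in which no small perturbation breaks total refinement; the formal statement must either exclude this via a non-degeneracy hypothesis on \(\mathcal{D}\) or fold it into the definition of strict domination.
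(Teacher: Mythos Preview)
Your route is the paper's route: deduce the corollary from Proposition \ref{generic} (necessity direction) together with the trivial implication ``totally strictly refining \(\Rightarrow\) totally refining.'' The sufficiency paragraph is correct but superfluous for the one-sided corollary.

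Where you diverge from the paper is in the handling of the ``obstructing actions'' problem in the necessity argument, and this is the genuine gap in your write-up. You propose to raise \(u(b,\cdot)\) and simultaneously lower competing \(b'\in B\) until \(b\) surfaces as a full-dimensional cell straddling a wall of \(C\); you then flag, correctly, that this may require iterated thinning and may disturb the picture elsewhere. The paper sidesteps all of this with one observation: if \(b\) is not strictly dominated and not strictly refining and \(\mu^\ast\in R_b\cap\mathcal{W}\), then any \(b'\in B\) that is optimal in \(A\cup B\) at \(\mu^\ast\) satisfies \(\mathbb{E}_{\mu^\ast}u(b',\cdot)\geq\mathbb{E}_{\mu^\ast}u(b,\cdot)\geq V(\mu^\ast)\), so \(\mu^\ast\in R_{b'}\cap\mathcal{W}\) as well, and \(b'\) is itself neither dominated nor strictly refining. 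Replacing \(b\) by this optimal \(b'\) and then bumping \(u(b',\theta')\) upward by \(\varepsilon\) at a single state \(\theta'\in\supp\mu^\ast\) makes \(b'\) strictly optimal in \(A\cup B\) on a full-dimensional neighborhood of \(\mu^\ast\); that neighborhood meets \(\inter C_i\) and \(\inter C_j\), so \(\hat{C}^{\tilde u}\not\succeq C\) and Lemma \ref{finernecessitylemma} finishes. No other action needs to be perturbed.

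Your borderline case---\(b\) weakly but not strictly dominated with \(R_b\) disjoint from \(\mathcal{W}\)---is a legitimate concern for the ``if and only if'' in Proposition \ref{generic}, but it is irrelevant for the corollary: such a \(b\) is weakly dominated, hence already permitted by the definition of totally refining. For the corollary you only need to treat \(b\) that are undominated and not refining, and for those \(R_b\) is full-dimensional and convex, so it necessarily meets \(\mathcal{W}\). Restricting attention to this case and using the replacement trick above gives a clean, complete proof.
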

The example depicted in Figure \ref{counterfig2} is to some extent a mirror of that depicted in Figure \ref{counterfigure1}. Like in the Figure \ref{counterfigure1} example, the agent in Figure \ref{counterfig2} is no less reactive to information after the transformation. However, each of the two new actions in the latter example, if added by itself, would alter the subdivision, thereby effecting a negative change on the extensive margin and so would not make information more valuable. Proposition \ref{generic} reassures us that this example is special. If a collection of actions, not all of which are refining, makes the agent more sensitive to information, the actions must be fine-tuned to the decision problem at hand.

\subsection{Becoming Less Flexible}\label{lessflexsec}

The generosity of the previous subsections has come to an end. Now, we transform the agent's decision problem by \textit{removing} actions. We call this \emph{Making the Agent Less Flexible}. As in \(\S\)\ref{secmore} and \(\S\)\ref{secmuch}, \(A\) is finite and \(\hat{u} = u\). In contrast, in the transformed decision problem, the agent's new set of actions is \(\emptyset \neq \hat{A} \subset A\).

If there exists an action \(a_i \in \hat{A}\) that is not weakly dominated when the set of actions is \(A\)--there exists an \(\mu \in \Delta \left(\Theta\right)\) such that \(\mathbb{E}_{\mu}u\left(a_i,\theta\right) > \max_{a \in A \setminus \left\{a_i\right\}}\mathbb{E}_{\mu}u\left(a,\theta\right)\)--we say \textit{There Are Leftovers}. If \(\hat{V} \neq V\), we say the elimination is \textit{Consequential}. An elimination that is not consequential is \textit{Inconsequential}.
\begin{proposition}\label{lessflexprop}
    Given finite \(\mathcal{D}\) and \(\hat{\mathcal{D}}\), obtained by making the agent less flexible, if the transformation generates a greater value for information and generates no less information acquisition, there are no leftovers or the removal is inconsequential.
\end{proposition}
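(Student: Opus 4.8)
The plan is to reduce the hypothesis to a statement about the single function \(g \coloneqq \hat V - V\). By Theorem \ref{moreconvex}, the transformation generating a greater value for information (equivalently, generating no less information acquisition) is the same as \(g\) being convex on \(\Delta\left(\Theta\right)\), so it suffices to prove: if \(g\) is convex and there are leftovers, then the removal is inconsequential. If there are no leftovers there is nothing to prove, so fix an action \(a_i \in \hat A\) that is undominated when the action set is \(A\). Since \(\hat A \subseteq A\) and \(\hat u = u\), we have \(\hat V \leq V\) pointwise, hence \(g \leq 0\) on all of \(\Delta\left(\Theta\right)\); the goal is to upgrade this to \(g \equiv 0\), i.e.\ \(\hat V = V\).

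The first step is to show that \(g\) vanishes on the entire cell \(C_i\) of the subdivision \(C\) associated with \(a_i\). On \(C_i\) one has \(V\left(\mu\right) = \mathbb{E}_{\mu}u\left(a_i,\theta\right)\) by definition of the cell; since \(a_i \in \hat A\) one also has \(\hat V\left(\mu\right) \geq \mathbb{E}_{\mu}u\left(a_i,\theta\right)\); together with \(\hat V \leq V\) this forces \(\hat V = V\), and hence \(g = 0\), everywhere on \(C_i\). Because \(a_i\) is undominated, \(C_i\) is a full-dimensional subset of \(\Delta\left(\Theta\right)\), so \(\inter C_i\) is nonempty, and being the relative interior of a full-dimensional subset of \(\Delta\left(\Theta\right)\) it is contained in \(\inter \Delta\left(\Theta\right)\).

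The second step invokes the elementary fact that a convex function on a convex set that attains a maximum at a point of its relative interior is constant. Choosing any \(x_0 \in \inter C_i\), we have \(g\left(x_0\right) = 0 = \max_{\Delta\left(\Theta\right)} g\) with \(x_0 \in \inter \Delta\left(\Theta\right)\); convexity of \(g\) then forces \(g \equiv 0\) on \(\Delta\left(\Theta\right)\). Therefore \(\hat V = V\) and the removal is inconsequential, as required.

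I do not expect a genuine obstacle here: the analytical substance is carried entirely by Theorem \ref{moreconvex}, which may be assumed. The only points needing care are verifying that \(g\) vanishes on the \emph{whole} cell \(C_i\) (not merely at the single belief witnessing that \(a_i\) is undominated) and recalling that a full-dimensional cell necessarily contains beliefs interior to the simplex, so that the ``convex function maximised at an interior point is constant'' step actually applies.
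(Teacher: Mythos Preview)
Your argument is correct and is, if anything, cleaner than the paper's. Both proofs invoke Theorem \ref{moreconvex} to reduce to a claim about \(g=\hat V-V\), and both observe that a leftover action \(a_i\in\hat A\) forces \(g=0\) on the full-dimensional cell \(C_i\). The paper then argues by contraposition: from a consequential elimination it extracts a removed action \(a_j\) that is undominated in \(A\), picks \(\mu\in\inter C_j\) (where \(g<0\)) and \(\mu',\mu_0\in\inter C_i\) (where \(g=0\)) with \(\mu_0\) on the segment \([\mu,\mu']\), and exhibits an explicit failure of the convexity inequality for \(g\). You instead argue directly via the maximum principle: \(g\le 0\), \(g\) attains its maximum \(0\) at a relative-interior point of the simplex, and convexity forces \(g\equiv 0\). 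Your route avoids having to locate a specific undominated removed action (and the attendant worry about ties among removed actions), at the cost of quoting the ``convex function maximised at an interior point is constant'' fact; the paper's route is more constructive and makes the geometry of the two cells visible. Either way, the only substantive ingredients are \(\hat V\le V\) and the full dimensionality of \(C_i\), and you handle both correctly.
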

\begin{proof}
    The proof lies in Appendix \ref{lessflexpropproof}.
\end{proof}
Clearly, if the pruning of actions is inconsequential, making the agent less flexible cannot reduce her value for information: her value function is unchanged by the removal of dominated actions. Eliminating actions that are not dominated may yet lead to an increase in her value of information, but only if, in effect, the agent obtains an entirely new decision problem. An increase in her value for information can only occur (when some undominated actions are removed), if every undominated action is removed--there can be no leftovers.

Even worse is the situation in which \(A \setminus \hat{A}\) is totally refining with respect to \(\hat{A}\)--\textit{viz.,} starting with set \(\hat{A}\), each \(a \in A \setminus \hat{A}\) is refining (they are not weakly dominated, by assumption). Then, as Lemma \ref{seqextremal} implies \(V-\hat{V}\) is convex, 
\begin{corollary}Given finite \(\mathcal{D}\) and \(\hat{\mathcal{D}}\), obtained by making the agent less flexible, if \(A \setminus \hat{A}\) is totally refining with respect to \(\hat{A}\), the transformation generates a lower value for information and generates no more information acquisition.
\end{corollary}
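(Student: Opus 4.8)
The plan is to obtain the corollary from Lemma~\ref{seqextremal} and Theorem~\ref{moreconvex} by exchanging the roles of the two decision problems. The reverse of ``making the agent less flexible'' is ``making the agent more flexible'': regard $\hat{\mathcal{D}}$ (action set $\hat A$, value function $\hat V$) as the \emph{initial} problem and $\mathcal{D}$ (action set $A=\hat A\cup(A\setminus\hat A)$, same utility, value function $V$) as the \emph{transformed} one, so that the added set of actions is $B\coloneqq A\setminus\hat A$. The hypothesis ``$A\setminus\hat A$ is totally refining with respect to $\hat A$'' is then verbatim the hypothesis of Lemma~\ref{seqextremal} for this exchanged pair, and that lemma yields that $V-\hat V$ is convex.

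Next I would feed this convexity into Theorem~\ref{moreconvex}, again applied to the exchanged pair (with $\hat{\mathcal{D}}$ as ``initial'' and $\mathcal{D}$ as ``transformed''): $V-\hat V$ being convex is condition~(i) of the theorem for this pair, so conditions~(ii) and~(iii) hold as well --- the passage from $\hat{\mathcal{D}}$ to $\mathcal{D}$ does not generate less information acquisition and generates a greater value for information. It then remains only to re-read these conclusions from the standpoint of the transformation under study, $\mathcal{D}\to\hat{\mathcal{D}}$. ``The transformation from $\hat{\mathcal{D}}$ to $\mathcal{D}$ generates a greater value for information'' is, by definition, the inequality $\mathbb{E}_{\Phi}V(\mu)-V(\mu_0)\ge\mathbb{E}_{\Phi}\hat V(\mu)-\hat V(\mu_0)$ for all Bayes-plausible $\Phi$ and all interior $\mu_0$, which is precisely the statement that $\mathcal{D}\to\hat{\mathcal{D}}$ generates a \emph{lower} value for information; and ``the transformation from $\hat{\mathcal{D}}$ to $\mathcal{D}$ does not generate less information acquisition'' is precisely the statement that $\mathcal{D}\to\hat{\mathcal{D}}$ generates \emph{no more} information acquisition.

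This is essentially bookkeeping, so there is no real obstacle; the only point needing care is the last translation, i.e.\ pinning down that ``generates a lower value for information'' and ``generates no more information acquisition'' for $\mathcal{D}\to\hat{\mathcal{D}}$ are exactly the role-reversed forms of the two notions appearing in Theorem~\ref{moreconvex}, so that the theorem applies off the shelf to the exchanged pair with no change of quantifier order. Should one wish to avoid even that, one can instead invoke the role-reversed versions of Lemmas~\ref{lemmany} and~\ref{endog} directly: their proofs are symmetric in $V$ and $\hat V$ and use nothing beyond the convexity of the relevant difference, so $V-\hat V$ convex immediately delivers both the exogenous and the endogenous conclusions in the desired direction.
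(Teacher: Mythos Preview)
Your proposal is correct and follows the same approach as the paper: reverse the roles of the two decision problems, invoke Lemma~\ref{seqextremal} on the reversed pair to get that \(V-\hat V\) is convex, and then read off the conclusions from Theorem~\ref{moreconvex} applied to that reversed pair. The paper's argument is embedded in the text immediately preceding and following the corollary and is exactly this; your only addition is making the bookkeeping in the final translation explicit, which the paper leaves to the reader (confirming the meaning of ``generates no more information acquisition'' in the sentence after the corollary).
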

That is 
\[\mathbb{E}_{\Phi}\hat{V}\left(\mu\right) - \hat{V}\left(\mu_{0}\right) \leq \mathbb{E}_{\Phi}V\left(\mu\right) - V\left(\mu_{0}\right)\text{,}\] for all \(\Phi \in \mathcal{F}\left(\mu_{0}\right)\) and \(\mu_{0} \in \inter \Delta\left(\Theta\right)\);
and for any solution to Problem \ref{flexhat}, \(\hat{\Phi}^*\), there exists a solution to Problem \ref{flex}, \(\Phi^*\), that is not a strict MPC of \(\hat{\Phi}^*\).

It is easiest to understand Proposition \ref{lessflexprop} by reflecting upon the removal of a single action. For information to be more valuable, we know that the new subdivision must be finer than the original one--the agent must be more responsive to information. But if the action being removed was strictly optimal at some belief, its removal must make the agent strictly \textit{less} responsive to information. This can be seen most starkly when the initial set of actions contained just two undominated actions. If one is removed, any information is now worthless.

\subsection{Affine Transformations of the Agent's Utility Function}\label{transsec}

Now, the transformation leaves the set of actions available to the agent unchanged; that is, \(\hat{A} = A\). Instead, in the transformed decision problem, the agent's new utility is \(\hat{u} \coloneqq k u + s\), where \(k \in \mathbb{R}_{++}\) and \(s \in \mathbb{R}\). Such a transformation preserves the subdivision--\(\hat{C} = C\)--but how does it affect \(\hat{V} - V\)?
\begin{proposition}\label{affinetrans} 
Given \(\mathcal{D}\), an affine transformation of the agent's utility generates a greater value for information and does not generate less information acquisition if and only if \(k \geq 1\).\footnote{\cite{denti2022posterior} also notes this result, but does not give a proof.}
\end{proposition}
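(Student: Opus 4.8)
The plan is to reduce the entire statement to Theorem \ref{moreconvex} together with one elementary computation of \(\hat{V}\). Since \(k>0\) and \(s\) is a constant, the affine transformation commutes with the maximum defining the value function:
\[\hat{V}(\mu) = \max_{a \in A}\mathbb{E}_{\mu}\bigl[k\,u(a,\theta)+s\bigr] = k\max_{a \in A}\mathbb{E}_{\mu}u(a,\theta)+s = k\,V(\mu)+s\text{.}\]
Hence \(\hat{V}-V = (k-1)V + s\). By Theorem \ref{moreconvex}, the transformation generates a greater value for information and does not generate less information acquisition if and only if \(\hat{V}-V\) is convex, so it suffices to determine for which \(k\) the function \((k-1)V+s\) is convex.

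For the ``if'' direction I would note that when \(k\geq 1\), \((k-1)V\) is a nonnegative scalar multiple of the convex function \(V\) and is therefore convex; adding the constant \(s\) preserves convexity, so \(\hat{V}-V\) is convex and the conclusion follows from Theorem \ref{moreconvex}.

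For the ``only if'' direction I would argue by contraposition. Suppose \(k<1\). Then \((1-k)V - s\) is a strictly positive multiple of the convex function \(V\) plus a constant, hence convex; but it equals \(-(\hat{V}-V)\), so if \(\hat{V}-V\) were convex, \((1-k)V-s\) would be simultaneously convex and concave, hence affine, which forces \(V\) itself to be affine. Setting aside the degenerate case in which \(V\) is affine (there information has no value in either decision problem and both criteria in the proposition hold vacuously for every \(k\)), we conclude that \(\hat{V}-V\) is not convex whenever \(k<1\); Theorem \ref{moreconvex} then yields that neither criterion in the proposition is met.

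There is no real obstacle here: once one observes \(\hat{V}=kV+s\), the proposition is just the fact that scaling a convex function by a factor strictly below \(1\) destroys convexity unless the function was already affine. The only point requiring care is the affine (trivial) decision problem, which must be excluded — equivalently, the statement should be read as restricting attention to non-degenerate \(\mathcal{D}\). I would also remark that the same argument goes through verbatim if \(s\) is allowed to be state-dependent, since then \(\hat{V}-V = (k-1)V + (\text{affine in }\mu)\).
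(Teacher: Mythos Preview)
Your argument is correct and is essentially the same as the paper's: both reduce to Theorem \ref{moreconvex} and check when \((k-1)V+s\) is convex, with the same caveat about the degenerate affine-\(V\) case (which the paper handles implicitly by assuming points \(\mu_1,\mu_2\) with distinct optimal actions). The only cosmetic difference is that you state the identity \(\hat{V}=kV+s\) up front and then invoke closure of convexity under nonnegative scaling, whereas the paper writes out the convexity inequality for \(\hat{W}\) directly.
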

\begin{proof}
    The proof resides in Appendix \ref{affinetransproof}.
\end{proof}
When the agent's utility is affinely-transformed, the extensive-margin channel through which the transformation makes information more valuable is shut down. The agent stays precisely as sensitive to information as she had been before. Instead, an affine transformation makes information more valuable (or does not) via the intensive margin. Proposition \ref{affinetrans}, then, is obvious: the value of distinguishing between actions increases if and only if the slopes (in belief space) of the actions' payoffs become steeper.

Affine transformations can be effected in a number of ways. The first is a direct scaling of the payoffs. The second is by repeating the decision problem \(k \in \mathbb{N}\) times.
\begin{corollary}
    Repetition of a decision problem (with future payoffs possibly discounted) generates a greater value for information and does not generate less information acquisition.
\end{corollary}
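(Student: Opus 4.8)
The plan is to reduce the repeated problem to an affine transformation of the stage decision problem and then invoke Proposition \ref{affinetrans}. Suppose the agent faces the decision problem \(\mathcal{D}=\left(u,A,\Theta\right)\) in each of \(k\in\mathbb{N}\) consecutive periods, with the state \(\theta\) drawn once at the outset, per-period payoffs discounted by some factor \(\delta\in\left[0,1\right]\), and all information acquired up front, so that the agent's belief is common to every period. The repeated problem is itself a decision problem \(\tilde{\mathcal{D}}=\left(\tilde{u},A^{k},\Theta\right)\) with \(\tilde{u}\left(\left(a_{1},\dots,a_{k}\right),\theta\right)=\sum_{t=0}^{k-1}\delta^{t}u\left(a_{t+1},\theta\right)\).

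The key observation is that the value function of \(\tilde{\mathcal{D}}\) is a uniform rescaling of \(V\). Indeed, because \(\tilde{u}\) is additively separable across periods and the belief \(\mu\) is the same in all of them, the max over action profiles decouples period by period:
\[\hat{V}\left(\mu\right)=\max_{\left(a_{1},\dots,a_{k}\right)\in A^{k}}\mathbb{E}_{\mu}\tilde{u}=\sum_{t=0}^{k-1}\delta^{t}\max_{a\in A}\mathbb{E}_{\mu}u\left(a,\theta\right)=K\,V\left(\mu\right)\text{,}\]
where \(K\coloneqq\sum_{t=0}^{k-1}\delta^{t}\). Since \(k\geq 1\) and \(\delta\geq 0\), the \(t=0\) term contributes \(1\) and the remaining terms are nonnegative, so \(K\geq 1\).

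Thus the repeated problem has exactly the value function of the decision problem obtained from \(\mathcal{D}\) by the affine transformation \(\hat{u}=Ku\) (taking \(s=0\)) with multiplier \(K\geq 1\), and Proposition \ref{affinetrans} immediately yields that it generates a greater value for information and does not generate less information acquisition. Equivalently, \(\hat{V}-V=\left(K-1\right)V\) is convex as a nonnegative multiple of the convex function \(V\), so Theorem \ref{moreconvex} delivers the claim directly. The result is essentially a one-line consequence of earlier machinery; the only thing to be careful about is the reduction itself—that repetition with no interim information is literally a rescaling of the stage value function—and the bookkeeping that \(K\geq 1\) for every admissible discount factor. (If information could instead arrive between periods, the statement would need the separate, standard fact that the value of a dynamic acquisition problem is still convex in the prior; the corollary as stated concerns the one-shot acquisition of the paper's model.)
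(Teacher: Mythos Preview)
Your argument is correct and follows the paper's intended route: the corollary is stated immediately after Proposition \ref{affinetrans} with no separate proof, precisely because repetition is just the affine rescaling \(\hat{u}=Ku\) with \(K=\sum_{t=0}^{k-1}\delta^{t}\geq 1\). You have simply made explicit the reduction (that the value function of the repeated problem is \(KV\)) and the bookkeeping on \(K\), which the paper leaves to the reader.
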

A third relates to changes in wealth. The agent's monetary payoff from action \(a\) is some function \(f_a\left(\theta\right)\). Her utility function over terminal wealth \(w\) is of the constant absolute risk aversion (CARA) form: \(v\left(w\right) = - \exp\left(-\alpha w\right)\), where \(\alpha \in \mathbb{R}_{++}\) is a scaling parameter. Accordingly, the agent's utility as a function of her action and the state \(\theta\) is
\[u\left(a,\theta\right) = - \exp\left(-\alpha f_a\left(\theta\right)\right) \exp\left(-\alpha w\right)\text{.}\]  
Evidently, changing the agent's endowed wealth to \(\hat{w}\) produces linear transformation of the utility \(u \mapsto k u = \hat{u}\), where
\[k = \frac{\exp\left(-\alpha \hat{w}\right)}{\exp\left(-\alpha w\right)}\text{.}\]
Consequently, 
\begin{corollary}
    For an agent with CARA utility, decreased wealth generates a greater value for information and does not generate less information acquisition. 
\end{corollary}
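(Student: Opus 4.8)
The plan is to reduce this corollary directly to Proposition \ref{affinetrans}, treating the wealth change as a special case of an affine transformation of the agent's utility (with additive constant $s=0$). The set of actions is unchanged, so we are squarely in the setting of \(\S\)\ref{transsec}.

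First I would recall the computation recorded in the discussion immediately preceding the corollary: with CARA utility \(v(w) = -\exp(-\alpha w)\) and monetary payoff \(f_a(\theta)\), the state-contingent utility is \(u(a,\theta) = -\exp(-\alpha f_a(\theta))\exp(-\alpha w)\), so replacing the endowed wealth \(w\) by \(\hat w\) multiplies \(u\) pointwise by the positive scalar
\[
k \;=\; \frac{\exp(-\alpha \hat w)}{\exp(-\alpha w)} \;=\; \exp\bigl(\alpha(w-\hat w)\bigr)\text{.}
\]
Thus \(\hat u = k u\), an affine transformation of \(u\) with slope \(k \in \mathbb{R}_{++}\) and intercept \(s = 0\), and the hypotheses of Proposition \ref{affinetrans} are met.

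Second, I would observe that since \(\alpha > 0\), the map \(x \mapsto \exp(\alpha x)\) is strictly increasing and equals \(1\) at \(x=0\); hence \(k = \exp(\alpha(w-\hat w)) \geq 1\) if and only if \(w - \hat w \geq 0\), i.e., if and only if \(\hat w \leq w\)—precisely the case of (weakly) decreased wealth. Invoking Proposition \ref{affinetrans} with this \(k\) then yields that the transformation generates a greater value for information and does not generate less information acquisition, completing the argument.

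There is essentially no obstacle here—the content is entirely in Proposition \ref{affinetrans}, and what remains is the one-line sign bookkeeping that lower wealth corresponds to \(k \geq 1\). The only point worth flagging for the reader is conceptual rather than technical: because CARA payoffs are negative, scaling by \(k>1\) makes the payoff surface \emph{steeper} in belief space (a positive intensive-margin change) even though it pushes payoff levels down, which is exactly why decreased wealth—not increased wealth—is what raises the value of information.
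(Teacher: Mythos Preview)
Your proposal is correct and is exactly the paper's intended argument: the discussion preceding the corollary already records that the wealth change is the linear transformation \(\hat u = k u\) with \(k = \exp\bigl(\alpha(w-\hat w)\bigr)\), and the corollary is then an immediate application of Proposition \ref{affinetrans} once one notes that \(\hat w \leq w\) gives \(k \geq 1\). Your added sign bookkeeping is precisely the step the paper leaves implicit.
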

We could also assume that the agent's endowed wealth is random, modeled by a real valued, random variable \(Y\) that has a finite mean and is uncorrelated with the state. \(Y\) is distributed according to cumulative distribution function \(H\). We say that \emph{Aggregate Risk Increases} if the distribution of \(Y\) changes from \(H\) to an MPS of \(H\), \(\hat{H}\). As this increase in risk is equivalent to a decrease in wealth for the risk-averse agent, 
\begin{corollary}
    For an agent with CARA utility, increased aggregate risk generates a greater value for information and does not generate less information acquisition. 
\end{corollary}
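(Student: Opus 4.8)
The plan is to integrate the random endowed wealth out of the agent's utility, observe that an increase in aggregate risk then acts exactly as a positive rescaling of her utility, and invoke Proposition~\ref{affinetrans}. Write the agent's total wealth after taking action $a$ in state $\theta$ as $f_a(\theta)+Y$, so her CARA utility is $-\exp\!\big(-\alpha(f_a(\theta)+Y)\big)=-\exp(-\alpha f_a(\theta))\exp(-\alpha Y)$. Integrating over $Y$ first is legitimate because $Y$ is mean-independent of the state (the natural reading of ``uncorrelated with the state'' in this context, and certainly implied by independence), so that $\mathbb{E}[\exp(-\alpha Y)\mid\theta]$ does not vary with $\theta$ and factors out of the expectation over states. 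Hence the induced decision problem has state-dependent utility
\[
u_H(a,\theta)\;=\;-\,\mathbb{E}_H\!\big[\exp(-\alpha Y)\big]\,\exp(-\alpha f_a(\theta))\;=\;k_H\cdot\big(-\exp(-\alpha f_a(\theta))\big),
\]
where $k_H\coloneqq\mathbb{E}_H[\exp(-\alpha Y)]>0$. Replacing $H$ by $\hat H$ gives $u_{\hat H}=k_{\hat H}\cdot\big(-\exp(-\alpha f_a(\theta))\big)$, so the passage from $H$ to $\hat H$ is precisely an affine transformation of the agent's utility, $u_{\hat H}=(k_{\hat H}/k_H)\,u_H$, with multiplicative constant $k_{\hat H}/k_H$ and no additive shift --- exactly as decreasing the agent's deterministic wealth was in the preceding corollary.

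It then remains to check that $k_{\hat H}/k_H\geq 1$, i.e.\ $\mathbb{E}_{\hat H}[\exp(-\alpha Y)]\geq\mathbb{E}_H[\exp(-\alpha Y)]$. This is immediate from the definition of a mean-preserving spread: $\hat H$ an MPS of $H$ means $H$ is an MPC of $\hat H$, so $\int\phi\,dH\leq\int\phi\,d\hat H$ for every convex $\phi$; applying this to the (strictly) convex function $\phi(y)=\exp(-\alpha y)$ yields the inequality. Thus the transformation is an affine transformation of utility with scaling $k=k_{\hat H}/k_H\geq 1$ and shift $s=0$, and Proposition~\ref{affinetrans} delivers both conclusions at once: increased aggregate risk generates a greater value for information and does not generate less information acquisition.

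There is no genuine obstacle here; the only points requiring care are (i) the factorization of the $Y$-expectation, which needs $\mathbb{E}[\exp(-\alpha Y)\mid\theta]$ to be constant in $\theta$ --- hence the mean-independence reading of ``uncorrelated'' --- and (ii) the standing (implicit) assumption that $\mathbb{E}_H[\exp(-\alpha Y)]$ and $\mathbb{E}_{\hat H}[\exp(-\alpha Y)]$ are finite, so that the induced decision problems and the rescaling constant are well defined; if the spread $\hat H$ drove this moment to $+\infty$ the transformed problem would degenerate to utility identically $-\infty$, and one simply restricts attention to spreads for which finiteness is preserved.
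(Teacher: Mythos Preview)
Your proof is correct and follows essentially the same route as the paper. The paper dispatches the corollary in one clause---``this increase in risk is equivalent to a decrease in wealth for the risk-averse agent''---implicitly using the certainty-equivalent $w_H=-\tfrac{1}{\alpha}\log\mathbb{E}_H[\exp(-\alpha Y)]$ and then invoking the preceding decreased-wealth corollary; you unpack that step, compute the scaling $k_{\hat H}/k_H=\mathbb{E}_{\hat H}[\exp(-\alpha Y)]/\mathbb{E}_H[\exp(-\alpha Y)]\geq 1$ directly from convexity of $y\mapsto\exp(-\alpha y)$, and apply Proposition~\ref{affinetrans}. Your caveats about reading ``uncorrelated'' as (mean-)independence and about finiteness of the exponential moment are well taken and more careful than the paper itself.
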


\subsection{Increases (or Decreases) in Risk Aversion}\label{riskaverse}

We now turn our attention to two particular \textit{non-linear} transformations of the agent's utility function. We specialize to the scenario in which the set of actions available to the agent, \(A\), is finite. We say the agent \emph{Becomes More Risk Averse} if in the transformed decision problem \(\hat{u} = \phi \circ u\) for some strictly increasing, strictly concave \(\phi \colon \mathbb{R} \to \mathbb{R}\). Similarly, the agent \emph{Becomes More Risk Loving} if \(\hat{u} = \phi \circ u\) for some strictly increasing, strictly convex \(\phi \colon \mathbb{R} \to \mathbb{R}\).
\begin{proposition}\label{riskaverseprop}
    Given finite \(\mathcal{D}\), making the agent more risk averse or more risk loving does not generically generate a greater value for information and generically may generate less information acquisition.
\end{proposition}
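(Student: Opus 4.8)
By Theorem~\ref{moreconvex}, both assertions of the proposition reduce to the single claim that $\hat{V}-V$ is generically not convex, and this is what the proof should establish. I would first isolate the degenerate case: if $V$ is affine, then $\mathcal{D}$ has a weakly dominant action $a^{\dagger}$, i.e.\ $u(a^{\dagger},\theta)\geq u(a,\theta)$ for all $\theta$ and $a$; since $\phi$ is strictly increasing, $a^{\dagger}$ remains weakly dominant in $\hat{\mathcal{D}}$, so $\hat{V}$ is affine, $\hat{V}-V$ is affine, and the transformation is inert. So the real content is that, whenever $V$ is not affine, $\hat{V}-V$ fails to be convex for a dense set of the functions $\phi$ under consideration (strictly increasing and strictly concave, resp.\ strictly convex). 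Because strict concavity/convexity is an open property and because non-convexity of $\hat{V}-V$, being witnessed by a strict failure of the convexity inequality at some prior, is itself an open condition in $\phi$, "dense" upgrades to "generic."

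Suppose then that $V$ is not affine. Its subdivision $C$ then has two cells $C_i,C_j$ of undominated actions $a_i\neq a_j$ meeting along a facet $F$; at any $\mu^{*}$ in the relative interior of $F$ — which we may take in $\inter\Delta(\Theta)$, since a wall between two full-dimensional cells cannot lie in $\partial\Delta(\Theta)$ — the value function equals $\max\{\ell_i,\ell_j\}$ on a neighbourhood, where $\ell_a(\mu):=\mathbb{E}_\mu u(a,\theta)$, and since the payoff vectors $(u(a_i,\theta))_\theta$ and $(u(a_j,\theta))_\theta$ are distinct, $V$ has a genuine convex kink there along the hyperplane $H:=\{\ell_i=\ell_j\}$. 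In $\hat{\mathcal{D}}$ we have $\hat{V}=\max_a\hat{\ell}_a$ with $\hat{\ell}_a(\mu):=\mathbb{E}_\mu[\phi(u(a,\theta))]$ affine, and the kinks of $\hat{V}$ near $\mu^{*}$ lie along hyperplanes $\{\hat\ell_a=\hat\ell_b\}$ — for pairs $a,b$ optimal at $\mu^{*}$ in $\hat{\mathcal{D}}$ — with respective normals $(\phi(u(a,\theta))-\phi(u(b,\theta)))_\theta$. The plan is to show that, generically, either a single action is strictly optimal at $\mu^{*}$, so $\hat{V}$ is affine near $\mu^{*}$, or every such normal is non-proportional (modulo the all-ones vector) to the normal $(u(a_i,\theta)-u(a_j,\theta))_\theta$ of $H$: both an accidental tie at $\mu^{*}$ and an accidental proportionality are destroyed by bumping $\phi$ near one of the finitely many values $u(a,\theta)$, which keeps $\phi$ strictly concave (resp.\ convex). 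In either case no convex kink of $\hat{V}$ lies along $H$ at $\mu^{*}$, so along a short segment through $\mu^{*}$ transverse to $H$ the function $\hat{V}-V$ has a strict concave kink, violating convexity; Theorem~\ref{moreconvex} then gives that the transformation generates no greater value for information and, for appropriate priors and UPS costs, strictly less information acquisition.

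I expect the main obstacle to be the bookkeeping around "generic." Three things need care: (i) fixing the parameter space — I would topologize the admissible $\phi$ by, say, $C^{2}$ convergence (or uniform convergence on the compact range of $u$) and check that small perturbations stay admissible; (ii) verifying that each accidental configuration excluded above — two actions tying at $\mu^{*}$ in $\hat{\mathcal{D}}$, a tie-hyperplane of $\hat{V}$ aligning with $H$, and, in the residual aligned case, the size of $\hat{V}$'s kink exceeding that of $V$'s along that facet — is contained in a closed set with empty interior, so that its union over the finitely many facets of $C$ stays nowhere dense; and (iii) checking that after the perturbation the actions other than $a_i,a_j$ remain strictly suboptimal at $\mu^{*}$ in $\hat{\mathcal{D}}$ (true for small perturbations, as they were strictly suboptimal there), failing which one reruns the kink argument with whichever actions are optimal at $\mu^{*}$. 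Everything else is a direct application of results already established.
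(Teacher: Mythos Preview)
Your high-level strategy matches the paper's: reduce via Theorem~\ref{moreconvex} to the non-convexity of $\hat V-V$, then invoke Lemma~\ref{finernecessitylemma} and argue that the subdivision $\hat C$ generically fails to refine $C$ because the indifference locus of $\hat V$ generically does not sit on a facet of $C$. Where you diverge is in the \emph{parameter you perturb}. You treat ``generic'' as a statement about $\phi$ (topologizing the admissible concave/convex transformations and bumping $\phi$ near selected values $u(a,\theta)$). The paper, by contrast, fixes $\phi$ and perturbs the primitive utility $u\in\mathbb{R}^{A\times\Theta}$; this is the sense of ``generically'' carried over from \S\ref{secmuch}, and the proof makes it explicit (both $C$ and $\hat C$ move under the perturbation). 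So your argument, even if completed, proves a cognate but formally different statement---``for generic $\phi$'' rather than ``for generic $u$ given any $\phi$.''

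Beyond that, the paper's execution is more concrete than yours: it reduces to a single edge of the simplex (two states), labels the cells left-to-right, and does an explicit case split on whether the first indifference point $\mu_1$ of $C$ coincides with the first indifference point $\hat\mu_1$ of $\hat C$ (and, if so, whether the same pair of actions is optimal there after transformation). In each case a one-coordinate perturbation of $u$---e.g.\ subtracting $\varepsilon$ from $u(a_1,0)$---is shown, via the three-chord lemma for strictly concave $\phi$, to move the two indifference points apart, breaking $\hat C\succeq C$. This buys an elementary, formula-level argument with no topology on function spaces and no hyperplane-alignment bookkeeping; your route is more geometric and works directly in higher dimension, but requires the care you flag in (i)--(iii), and in particular you would still need to handle the residual case (your (ii), third clause) where $\hat V$'s kink \emph{does} lie along $H$ but is weaker than $V$'s---in two states this is exactly the paper's Case~1a and is where the strict concavity of $\phi$ is actually used.
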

We prove this result by showing that if making the agent more risk averse (or risk loving) leads to a finer subdivision, we can always find an arbitrarily small perturbation of the agent's utility that destroys this refinement. This result is to be expected: as revealed in \cite{weinstein2016effect} and \cite*{battigalli2016note}, previously dominated actions may become undominated as an agent becomes more risk averse. In another related paper, \cite{safety} characterize precisely when the set of beliefs at which one action is preferred to another must increase in size (in a set-inclusion sense)--thereby altering the subdivision--when an agent is made more risk averse.

As with the earlier transformations (except for affine ones), we see that whether increased (or decreased) risk aversion makes information more valuable generically depends entirely on the extensive margin change produced by the transformation. Furthermore, our answer is negative: for generic decision problems a change in the agent's risk aversion cannot lead to an improvement along the extensive margin. The agent cannot become more sensitive to information except in knife-edge instances.

\section{Fixed-Mean Comparisons}

Our two earlier conceptions of making information more valuable were demanding: we required information to become more valuable no matter what the prior was and regardless of what the information or the information-acquisition technology was. Now we require that information become more valuable for a fixed prior.

\begin{definition}[Exogenous Information]
    Fix a prior \(\mu_0 \in \inter \Delta\left(\Theta\right)\). Given \(\mathcal{D}\) and \(\hat{\mathcal{D}}\), we say that \emph{The Transformation \(\mu_0\)-Generates a Greater Value for Information} if \[\mathbb{E}_{\Phi}\hat{V}\left(\mu\right) - \hat{V}\left(\textcolor{OrangeRed}{\mu_0}\right) \geq \mathbb{E}_{\Phi}V\left(\mu\right) - V\left(\textcolor{OrangeRed}{\mu_0}\right)\text{,}\] for all \(\Phi \in \mathcal{F}\left(\textcolor{OrangeRed}{\mu_0}\right)\).
\end{definition}

This definition still requires that the value of information increases for the agent no matter the information (for any Bayes-plausible \(\Phi\)), but now the prior is fixed.

\begin{definition}[Endogenous Information]
    Fix a prior \(\mu_0 \in \inter \Delta\left(\Theta\right)\). Given \(\mathcal{D}\) and \(\hat{\mathcal{D}}\), we say that \emph{The Transformation Does Not \(\mu_0\)-Generate Less Information Acquisition} if for any UPS cost functional \(D\), and solution to the agent's information acquisition problem in the initial decision problem (Problem \ref{flex}), \(\Phi^{*}\), there exists an solution to the agent's information acquisition problem in the transformed decision problem (Problem \ref{flexhat}), \(\hat{\Phi}^{*}\), that is not a strict mean-preserving contraction (MPC) of \(\Phi^{*}\).
\end{definition}
\begin{figure}
    \centering
    \includegraphics[width=\textwidth]{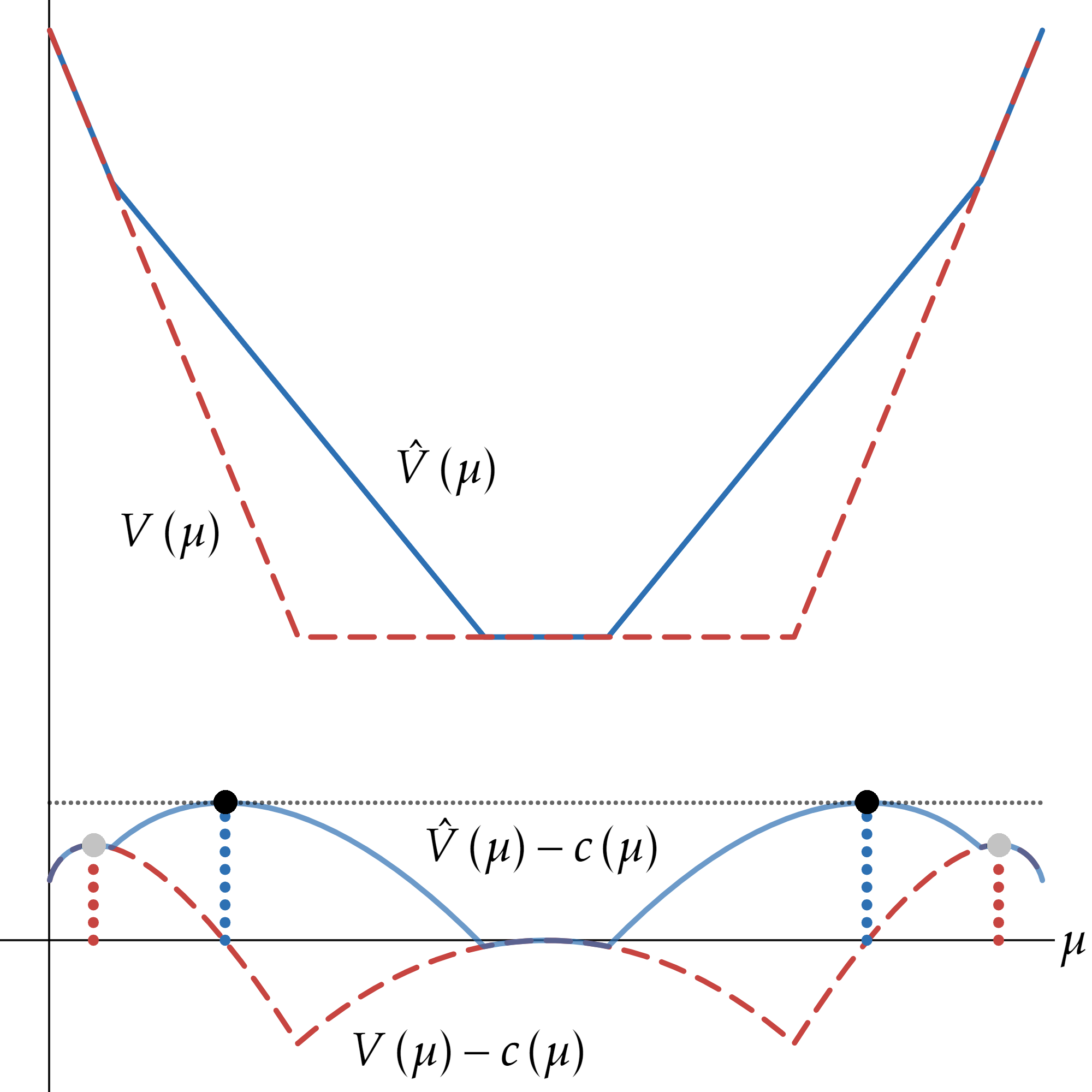}
    \caption{When her prior equals \(1/2\), the agent strictly acquires less information as a result of the transformation.}
    \label{figcounterfix}
\end{figure}

Our first observation is that the equivalence between the endogenous- and exogenous-information formalizations of making information more valuable stated for arbitrary priors in Theorem \ref{moreconvex} does not persist in the fixed-prior environment of this section. Figure \ref{figcounterfix} illustrates \(V\) and \(V - c\) (dashed red) and \(\hat{V}\) and \(\hat{V}-c\) (solid blue), where \(c\) is a strictly convex function. \(\hat{V}\) is obtained by adding two non-refining actions to the action set in \(\mathcal{D}\). 

The agent's prior is \(1/2\). The black points indicate the support of the agent's uniquely optimal information acquisition in \(\hat{\mathcal{D}}\), \(\hat{\Phi}\). The light grey points indicate the support of the agent's uniquely optimal information acquisition in \(\mathcal{D}\), \(\Phi\). As both supports are affinely independent and \(\supp \hat{\Phi} \subset \inter \conv \supp \Phi\) (the support of \(\hat{\Phi}\) is a strict subset of the relative interior of the convex hull of the support of \(\Phi\)), the agent acquires strictly more information pre-transformation. Yet, the transformation \(\mu_0\)-generates a greater value for information: \(\hat{V}\left(\mu_0\right) = V(\mu_0)\) and \(\hat{V}\) lies above \(V\) pointwise, making any Bayes-plausible \(\Phi\) more valuable in the transformed decision problem.

This example also highlights that the fixed- and free-mean notions of information becoming more valuable are not equivalent, even when information is exogenous. In the example \(\hat{V}-V\) is not convex, so the transformation does not generate a greater value for information, yet it \(\mu_0\)-generates a greater value for information when \(\mu_0 = 1/2\). The example suggests that pointwise dominance of \(\hat{V}\) over \(V\) is important, and as we will now show, a generalized version of this is key.

We say that value function \(\hat{V}\) \emph{Shift-Majorizes} value function \(V\) if there exists an affine function \(\ell\left(\mu\right) \coloneqq \lambda \cdot \mu + \tau\) (where \(\lambda \in \mathbb{R}^{n-1}\) and \(\tau \in \mathbb{R}\)) satisfying
\begin{enumerate}[label={(\roman*)},noitemsep,topsep=0pt]
    \item Equality at the prior: \(\hat{V}\left(\mu_{0}\right) + \ell\left(\mu_{0}\right) = V\left(\mu_{0}\right)\); and
    \item Pointwise dominance on \(\Delta\left(\Theta\right)\): \(\hat{V}\left(\mu\right) + \ell\left(\mu\right) \geq V\left(\mu\right)\) for all \(\mu \in \Delta\left(\Theta\right)\).
\end{enumerate}

\begin{theorem}\label{mugenerate}
    Fix a prior \(\mu_0 \in \inter \Delta\left(\Theta\right)\). Given \(\mathcal{D}\) and \(\hat{\mathcal{D}}\), the transformation \(\mu_0\)-generates a greater value for information if and only if \(\hat{V}\) shift-majorizes \(V\).
\end{theorem}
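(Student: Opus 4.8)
\emph{The plan} is to prove the two implications separately: sufficiency of shift-majorization is a one-line computation, while necessity is a supporting-hyperplane argument applied to the gap function \(g \coloneqq \hat{V} - V\).

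For the sufficiency direction I would fix an affine witness \(\ell\) of shift-majorization and an arbitrary \(\Phi \in \mathcal{F}\left(\mu_0\right)\), integrate the pointwise bound \(\hat{V}\left(\mu\right) + \ell\left(\mu\right) \geq V\left(\mu\right)\) against \(\Phi\), and use that \(\ell\) is affine and \(\Phi\) Bayes-plausible, so \(\mathbb{E}_{\Phi}\ell\left(\mu\right) = \ell\left(\mu_0\right)\). Substituting the equality-at-the-prior condition \(\ell\left(\mu_0\right) = V\left(\mu_0\right) - \hat{V}\left(\mu_0\right)\) and rearranging gives exactly \(\mathbb{E}_{\Phi}\hat{V}\left(\mu\right) - \hat{V}\left(\mu_0\right) \geq \mathbb{E}_{\Phi}V\left(\mu\right) - V\left(\mu_0\right)\), which is the definition of the transformation \(\mu_0\)-generating a greater value for information.

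For necessity I would first rewrite the hypothesis: with \(g \coloneqq \hat{V} - V\), the transformation \(\mu_0\)-generating a greater value for information is equivalent to \(\mathbb{E}_{\Phi}g\left(\mu\right) \geq g\left(\mu_0\right)\) for every \(\Phi \in \mathcal{F}\left(\mu_0\right)\). Since \(V\) and \(\hat{V}\) are continuous on the compact simplex, \(g\) is bounded, so its lower convex envelope \(\check{g}\) is a finite convex function on \(\Delta\left(\Theta\right)\) with \(\check{g} \leq g\), and the inequality above (applied to all finitely supported splittings of \(\mu_0\)) forces \(\check{g}\left(\mu_0\right) = g\left(\mu_0\right)\). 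Because \(\mu_0 \in \inter \Delta\left(\Theta\right)\), the convex function \(\check{g}\) admits a subgradient there, i.e.\ there is an affine \(h\) on \(\Delta\left(\Theta\right)\) with \(h \leq \check{g} \leq g\) and \(h\left(\mu_0\right) = g\left(\mu_0\right)\). Setting \(\ell \coloneqq -h\), which is again affine, one obtains \(\hat{V}\left(\mu\right) + \ell\left(\mu\right) = \hat{V}\left(\mu\right) - h\left(\mu\right) \geq \hat{V}\left(\mu\right) - g\left(\mu\right) = V\left(\mu\right)\) for all \(\mu\), with equality at \(\mu_0\); that is, \(\hat{V}\) shift-majorizes \(V\).

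\emph{The hard part} will be the passage from ``\(g\) agrees with its convex envelope at \(\mu_0\)'' to the existence of a genuinely affine supporting function touching \(g\) at \(\mu_0\). Phrased as separation, I would support the closed convex hull of the epigraph of \(g\) at its boundary point \(\left(\mu_0, g\left(\mu_0\right)\right)\); one then has to exclude a vertical supporting hyperplane, and this is precisely where \(\mu_0 \in \inter \Delta\left(\Theta\right)\) is used, since a vertical hyperplane would exhibit \(\mu_0\) as a maximizer of a nonzero linear functional over \(\Delta\left(\Theta\right)\). A minor bookkeeping point is that identifying \(\check{g}\left(\mu_0\right)\) with \(\inf_{\Phi \in \mathcal{F}\left(\mu_0\right)} \mathbb{E}_{\Phi}g\left(\mu\right)\) is cleanest over finitely supported \(\Phi\), the general case following from Jensen's inequality for the convex function \(\check{g}\).
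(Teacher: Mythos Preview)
Your proposal is correct and follows essentially the same route as the paper's proof: sufficiency via integrating the pointwise bound and using Bayes-plausibility of \(\Phi\), and necessity via the lower convex envelope of \(\hat{V}-V\), equality of the envelope with the function at \(\mu_0\), and a supporting affine minorant there (the paper invokes the supporting-hyperplane theorem on \(\epi \breve{W}\), you invoke subgradient existence for the convex \(\check{g}\) at an interior point---these are the same fact). Your sign bookkeeping with \(\ell=-h\) is actually tidier than the paper's, which calls both the supporting hyperplane and the shift-majorization witness \(\ell\).
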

\begin{proof}
    The proof lies in Appendix \ref{mugenerateproof}.
\end{proof}
Shift-majorization is easy to check. In particular, if both \(V\) and \(\hat{V}\) are differentiable at \(\mu_0\)--which is a generic property of convex functions (Theorem 25.5 in \cite{rockafellar2015convex})--\(\ell\) is simply the difference of the tangent hyperplanes to \(\hat{V}\) and \(V\) at \(\mu_0\) (which are unique). Moreover, when information is endogenous, shift-majorization of \(\hat{V}\) over \(V\) is necessary, though, as Figure \ref{figcounterfix} reveals, it is not sufficient.
\begin{theorem}\label{muless}
Fix a prior \(\mu_0 \in \inter \Delta\left(\Theta\right)\). Given \(\mathcal{D}\) and \(\hat{\mathcal{D}}\), the transformation does not \(\mu_0\)-generate less information acquisition only if \(\hat{V}\) shift-majorizes value function \(V\).
\end{theorem}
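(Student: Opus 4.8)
The plan is to argue by contraposition, in the spirit of the proof of Lemma~\ref{newendoginfo}: assuming that $\hat V$ does not shift-majorize $V$, I will exhibit a single UPS cost functional $D$ and a solution $\Phi^{*}$ of Problem~\ref{flex} such that \emph{every} solution of Problem~\ref{flexhat} is a strict MPC of $\Phi^{*}$.

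The first step is to translate shift-majorization into a statement about concavification. Write $g \coloneqq V - \hat V$. Both $V$ and $\hat V$ are continuous on $\Delta\left(\Theta\right)$ (theorem of the maximum, using compactness of the action sets and continuity of $u$), so $g$ is continuous; its concave closure $\operatorname{cav} g$ is then a finite concave function on $\Delta\left(\Theta\right)$ that admits a supergradient at every point of $\inter\Delta\left(\Theta\right)$. Unwinding the definition, $\hat V$ shift-majorizes $V$ iff there is an affine $\ell$ with $\ell \geq g$ on $\Delta\left(\Theta\right)$ and $\ell\left(\mu_0\right) = g\left(\mu_0\right)$; I claim this holds iff $\operatorname{cav} g\left(\mu_0\right) = g\left(\mu_0\right)$. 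Indeed, any such $\ell$ is concave and dominates $g$, hence dominates $\operatorname{cav} g$, forcing $g\left(\mu_0\right) = \ell\left(\mu_0\right) \geq \operatorname{cav} g\left(\mu_0\right) \geq g\left(\mu_0\right)$; conversely a supergradient of $\operatorname{cav} g$ at $\mu_0 \in \inter\Delta\left(\Theta\right)$ supplies the desired $\ell$. So the negation of shift-majorization is exactly $\operatorname{cav} g\left(\mu_0\right) > g\left(\mu_0\right)$, and by the concavification identity there is a finitely supported, non-degenerate $\Phi^{\dagger} \in \mathcal F\left(\mu_0\right)$ with $\varepsilon_0 \coloneqq \int g\, d\Phi^{\dagger} - g\left(\mu_0\right) > 0$.

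The crux is the choice of cost. Fix any continuous, strictly convex $c_0\colon\Delta\left(\Theta\right)\to\mathbb R$ (negative entropy will do), and let $\delta_0 \coloneqq \int c_0\,d\Phi^{\dagger} - c_0\left(\mu_0\right)$, which is strictly positive by strict convexity and non-degeneracy of $\Phi^{\dagger}$. Choose $\varepsilon \in \bigl(0,\varepsilon_0/\delta_0\bigr)$ and set $c \coloneqq \hat V + \varepsilon c_0$. As $\hat V$ is convex and $\varepsilon c_0$ is strictly convex, $c$ is strictly convex and continuous, so $D\left(\Phi\right)\coloneqq \int c\,d\Phi - c\left(\mu_0\right)$ is a legitimate UPS cost functional. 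In $\hat{\mathcal D}$, the objective of Problem~\ref{flexhat} is $\int\left(\hat V - c\right)d\Phi + c\left(\mu_0\right) = -\varepsilon\int c_0\,d\Phi + c\left(\mu_0\right)$, which (by strict convexity of $c_0$) is uniquely maximized at $\hat\Phi^{*} = \delta_{\mu_0}$: the agent with value function $\hat V$ optimally acquires no information. In $\mathcal D$, the objective of Problem~\ref{flex} is $\int\left(V - c\right)d\Phi + c\left(\mu_0\right)$ with $V - c = g - \varepsilon c_0$; comparing $\Phi^{\dagger}$ with $\delta_{\mu_0}$ gives a gain of $\varepsilon_0 - \varepsilon\delta_0 > 0$, so $\delta_{\mu_0}$ is \emph{not} optimal in $\mathcal D$. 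A maximizer $\Phi^{*}$ of Problem~\ref{flex} exists by weak$^{*}$-compactness of $\mathcal F\left(\mu_0\right)$ and continuity of $V - c$, and it is necessarily non-degenerate, whence $\delta_{\mu_0}$ is a strict MPC of $\Phi^{*}$ (test against any strictly convex function). Since $\delta_{\mu_0}$ is the \emph{only} solution of Problem~\ref{flexhat}, every solution there is a strict MPC of $\Phi^{*}$, which contradicts ``does not $\mu_0$-generate less information acquisition'' and closes the contrapositive.

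The main obstacle is the double duty the cost must perform: it has to be strictly convex (so that it is UPS and makes $\delta_{\mu_0}$ \emph{uniquely} optimal against $\hat V$) yet not so curved that it also destroys the agent's incentive to learn against $V$. Writing $c = \hat V + \varepsilon c_0$ dispatches the first requirement for free and reduces the second to the scalar inequality $\varepsilon\delta_0 < \varepsilon_0$, which is precisely what the concavification reformulation in the second step makes available. Everything else—existence of the two maximizers, and the verification that $\delta_{\mu_0}$ is a \emph{strict} rather than merely weak MPC of a non-degenerate posterior distribution—is routine given continuity of the value functions and the availability of a strictly convex function on $\Delta\left(\Theta\right)$.
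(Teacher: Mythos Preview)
Your proposal is correct and follows essentially the same route as the paper: argue by contraposition, reformulate failure of shift-majorization as $\operatorname{cav}(V-\hat V)(\mu_0)>(V-\hat V)(\mu_0)$ (equivalently, the paper's $\breve W(\mu_0)<\hat W(\mu_0)$), then take the UPS cost $c=\hat V+\varepsilon c_0$ so that $\delta_{\mu_0}$ is uniquely optimal in $\hat{\mathcal D}$ while the distribution $\Phi^{\dagger}$ witnessing the concavification gap beats $\delta_{\mu_0}$ in $\mathcal D$ for $\varepsilon$ small. Your version is in fact more explicit—pinning down the quantitative threshold $\varepsilon<\varepsilon_0/\delta_0$ and spelling out existence of $\Phi^{*}$ and why $\delta_{\mu_0}$ is a \emph{strict} MPC—but the architecture is the paper's.
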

\begin{proof}
    Please see Appendix \ref{mulessproof}.
\end{proof}

Combining these theorems produces
\begin{corollary}\label{corrgen}
    Fix a prior  \(\mu_0 \in \inter \Delta\left(\Theta\right)\). Given \(\mathcal{D}\) and \(\hat{\mathcal{D}}\), if the transformation does not \(\mu_{0}\)-generate less information acquisition, it \(\mu_0\)-generates a greater value for information.
\end{corollary}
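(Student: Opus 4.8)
The plan is to obtain the corollary as an immediate composition of Theorem~\ref{muless} and Theorem~\ref{mugenerate}, with shift-majorization serving as the bridge. Concretely: assume the transformation does not $\mu_0$-generate less information acquisition. Theorem~\ref{muless} then delivers that $\hat V$ shift-majorizes $V$, i.e.\ there is an affine $\ell(\mu)=\lambda\cdot\mu+\tau$ with $\hat V(\mu_0)+\ell(\mu_0)=V(\mu_0)$ and $\hat V(\mu)+\ell(\mu)\ge V(\mu)$ for every $\mu\in\Delta(\Theta)$. Feeding this very shift-majorization relation into the sufficiency (``if'') direction of Theorem~\ref{mugenerate} yields that the transformation $\mu_0$-generates a greater value for information, which is the claim. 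So the write-up is essentially two lines: invoke Theorem~\ref{muless}, then invoke Theorem~\ref{mugenerate}.

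I expect no genuine obstacle, since the corollary carries no independent content---the substance lives inside the two cited theorems (the delicate part being the extraction of the affine shift $\ell$ from an information-acquisition dominance in Theorem~\ref{muless}, and the verification of sufficiency in Theorem~\ref{mugenerate}). For the record, the reason shift-majorization mediates both notions is that an affine $\ell$ drops out of the exogenous comparison: since any $\Phi\in\mathcal F(\mu_0)$ is Bayes-plausible, $\mathbb{E}_\Phi\,\ell(\mu)=\ell(\mu_0)$, hence $\mathbb{E}_\Phi[\hat V+\ell]-[\hat V+\ell](\mu_0)=\mathbb{E}_\Phi \hat V-\hat V(\mu_0)$. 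Thus pointwise dominance of $\hat V+\ell$ over $V$ together with equality at $\mu_0$ is exactly what is needed to compare $\mathbb{E}_\Phi\hat V-\hat V(\mu_0)$ with $\mathbb{E}_\Phi V-V(\mu_0)$.

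If one preferred not to cite Theorem~\ref{mugenerate} as a black box, one could splice the displayed identity above directly after the application of Theorem~\ref{muless} and conclude in-line; but either route gives a proof of only a couple of lines, so I would simply present it as the chaining of the two theorems.
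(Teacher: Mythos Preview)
Your proposal is correct and matches the paper's own argument exactly: the corollary is obtained by chaining Theorem~\ref{muless} (which yields shift-majorization from the endogenous-information premise) with the sufficiency direction of Theorem~\ref{mugenerate} (which converts shift-majorization into the exogenous-information conclusion). The paper states precisely this, writing simply that ``combining these theorems produces'' the corollary.
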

That is, increasing the value of endogenous information (for a fixed prior) always increases the value of exogenous information for a fixed prior. Before going on to study adding and subtracting actions in the fixed-mean paradigm, let us add one further result, which connects the fixed- and free-mean results.
\begin{definition}
    Fix a prior \(\mu_0 \in \inter \Delta\left(\Theta\right)\). Given \(\mathcal{D}\) and \(\hat{\mathcal{D}}\), we say that \emph{The Transformation \(\mu_0\)-Generates a Greater Incremental Value for Information} if \[\mathbb{E}_{\Phi}\hat{V}\left(\mu\right) - \mathbb{E}_{\Upsilon}\hat{V}\left(\mu\right) \geq \mathbb{E}_{\Phi}V\left(\mu\right) - \mathbb{E}_{\Upsilon}\textcolor{OrangeRed}{V}\left(\mu\right)\text{,}\] for all \(\Phi, \Upsilon \in \mathcal{F}\left(\textcolor{OrangeRed}{\mu_0}\right)\) with \(\Upsilon \in MPC(\Phi)\).
\end{definition}
Then,
\begin{corollary}\label{incremental}
    Fix a prior \(\mu_0 \in \inter \Delta\left(\Theta\right)\). Given \(\mathcal{D}\) and \(\hat{\mathcal{D}}\), the transformation \(\mu_0\)-generates a greater incremental value for information if and only if \(\hat{V}-V\) is convex.
\end{corollary}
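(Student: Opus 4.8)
The plan is to prove the two equivalent implications separately, with the ``if'' direction being an immediate appeal to an earlier lemma and the ``only if'' direction handled by contraposition plus a short ``padding'' construction that anchors an arbitrary mean-preserving spread at the fixed prior.

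\textbf{The ``if'' direction is immediate.} Suppose \(\hat V - V\) is convex. Lemma \ref{lemmany}, invoked at the prior \(\mu_0\), asserts exactly that \(\mathbb{E}_{\Phi}\hat V - \mathbb{E}_{\Upsilon}\hat V \geq \mathbb{E}_{\Phi}V - \mathbb{E}_{\Upsilon}V\) for every \(\Phi,\Upsilon \in \mathcal{F}\left(\mu_0\right)\) with \(\Upsilon \in MPC\left(\Phi\right)\); rearranged, this is precisely the statement that the transformation \(\mu_0\)-generates a greater incremental value for information. Nothing further is needed here.

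\textbf{The ``only if'' direction: contraposition plus padding.} Write \(W \coloneqq \hat V - V\) and suppose \(W\) is not convex, so there exist \(\mu_1,\mu_2 \in \Delta\left(\Theta\right)\) and \(\lambda \in \left(0,1\right)\) with \(W\left(\bar\mu\right) > \lambda W\left(\mu_1\right) + \left(1-\lambda\right)W\left(\mu_2\right)\), where \(\bar\mu \coloneqq \lambda\mu_1 + \left(1-\lambda\right)\mu_2\). The naive certifying pair — \(\Phi\) splitting \(\bar\mu\) into \(\left\{\mu_1,\mu_2\right\}\) with weights \(\left(\lambda,1-\lambda\right)\), and \(\Upsilon = \delta_{\bar\mu}\) — has barycenter \(\bar\mu\), not \(\mu_0\), so I would repair it as follows. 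Since \(\mu_0 \in \inter\Delta\left(\Theta\right)\), for all sufficiently small \(p \in \left(0,1\right)\) the point \(\nu \coloneqq \left(\mu_0 - p\bar\mu\right)/\left(1-p\right)\) lies in \(\Delta\left(\Theta\right)\) (indeed \(\nu \to \mu_0\) as \(p \to 0\)). Set \(\Phi \coloneqq p\lambda\,\delta_{\mu_1} + p\left(1-\lambda\right)\delta_{\mu_2} + \left(1-p\right)\delta_{\nu}\) and \(\Upsilon \coloneqq p\,\delta_{\bar\mu} + \left(1-p\right)\delta_{\nu}\). Both have barycenter \(p\bar\mu + \left(1-p\right)\nu = \mu_0\), so \(\Phi,\Upsilon \in \mathcal{F}\left(\mu_0\right)\); and \(\Upsilon\) is a mean-preserving contraction of \(\Phi\), since it merely pools the atoms \(\mu_1,\mu_2\) into their barycenter \(\bar\mu\) while leaving the atom at \(\nu\) untouched (for any convex \(\phi\), Jensen at the \(\bar\mu\)-atom gives \(\int \phi\, d\Upsilon \leq \int \phi\, d\Phi\)). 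Then
\[\mathbb{E}_{\Phi}W - \mathbb{E}_{\Upsilon}W = p\bigl(\lambda W\left(\mu_1\right) + \left(1-\lambda\right)W\left(\mu_2\right) - W\left(\bar\mu\right)\bigr) < 0\text{,}\]
equivalently \(\mathbb{E}_{\Phi}\hat V - \mathbb{E}_{\Upsilon}\hat V < \mathbb{E}_{\Phi}V - \mathbb{E}_{\Upsilon}V\), so the transformation does not \(\mu_0\)-generate a greater incremental value for information. This is the contrapositive.

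\textbf{Where the work is.} Essentially all of it sits in the padding step: adding the common atom at \(\nu\) converts an arbitrary mean-preserving spread (here a single binary split of \(\bar\mu\)) into one anchored at the fixed prior \(\mu_0\), and verifying \(\nu \in \Delta\left(\Theta\right)\) is exactly where the hypothesis \(\mu_0 \in \inter\Delta\left(\Theta\right)\) enters; the barycenter identity, the MPC check, and the cancellation of the \(\nu\)-terms are bookkeeping. (An alternative packaging: the same padding shows the \(\mu_0\)-incremental condition implies the \(\mu_0'\)-incremental condition at every prior \(\mu_0'\), in particular that the transformation ``generates a greater value for information'' in the free-prior sense, whence \(\hat V - V\) is convex by Lemma \ref{newexoginfo}; but the direct argument above is shorter.)
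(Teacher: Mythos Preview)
Your proof is correct and follows essentially the same approach as the paper's: the ``if'' direction is Lemma \ref{lemmany}, and for the converse the paper takes the non-convexity witnesses \(\mu,\mu',\mu_0^{*}\) from the proof of Lemma \ref{newexoginfo} and pads with an extra atom \(\mu''\) to make the pair Bayes-plausible at the fixed prior \(\mu_0\)---precisely your ``padding'' construction with \(\nu\). Your write-up is a bit more self-contained (you explicitly verify \(\nu\in\Delta\left(\Theta\right)\) using \(\mu_0\in\inter\Delta\left(\Theta\right)\) and check the MPC property), but the idea is identical.
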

\begin{proof}
    The result is an easy consequence of Lemmas \ref{lemmany} and \ref{newexoginfo}. The detailed proof lies in Appendix \ref{incrementalproof}.
\end{proof}

\subsection{Becoming More or Less Flexible, Redux}

We now specialize to the situations in which we are either adding or subtracting actions, \textit{viz.}, making the agent more or less flexible, respectively. We stipulate that the initial set of actions, \(A\), available to the agent is finite and either add finitely more actions \(B \subseteq \mathcal{A} \setminus A\) or remove finitely many actions.

Let \(A\left(\mu_{0}\right)\) denote the set of actions that are optimal at the prior \(\mu_{0}\) in \(\mathcal{D}\). Analogously, let \(\hat{A}\left(\mu_{0}\right)\) denote the set of actions that are optimal at the prior \(\mu_{0}\) in \(\hat{\mathcal{D}}\). We say that \emph{Some Action Remains Prior-Optimal} if \(A\left(\mu_{0}\right) \cap \hat{A}\left(\mu_{0}\right) \neq \emptyset\). Thus, \emph{No Action Remains Prior-Optimal} if \(A\left(\mu_{0}\right) \cap \hat{A}\left(\mu_{0}\right) = \emptyset\). If the optimal action at \(\mu_0\) in \(\mathcal{D}\) is unique, we denote it \(a_{\mu_{0}}\). Likewise, if the optimal action at \(\mu_0\) in \(\hat{\mathcal{D}}\) is unique, we denote it \(\hat{a}_{\mu_{0}}\).
\begin{proposition}\label{prioroptimal}
    Fix a prior \(\mu_0 \in \inter \Delta\left(\Theta\right)\) and take finite \(\mathcal{D}\) and \(\hat{\mathcal{D}}\).
    \begin{enumerate}[label={(\roman*)},noitemsep,topsep=0pt]
        \item\label{statement1} Suppose \(\hat{\mathcal{D}}\) is obtained by making the agent more flexible. The transformation \(\mu_0\)-generates a greater value for information if some action remains prior-optimal.
        \item Suppose \(\hat{\mathcal{D}}\) is obtained by making the agent less flexible and the removal is consequential.
        \begin{enumerate}[noitemsep,topsep=0pt]
        \item\label{statement2} The transformation \(\mu_0\)-generates a greater value for information only if no action remains prior-optimal.
        \item\label{statement3} The transformation does not \(\mu_0\)-generate less information acquisition only if no action remains prior-optimal.
    \end{enumerate}
    \end{enumerate}
\end{proposition}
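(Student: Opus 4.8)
The plan is to reduce all three parts to the shift-majorization criteria of Theorems \ref{mugenerate} and \ref{muless}, exploiting one elementary observation throughout: since $\hat u = u$ in both the more-flexible and the less-flexible scenarios, an action $a^* \in A(\mu_0)\cap\hat A(\mu_0)$ satisfies $V(\mu_0) = \mathbb{E}_{\mu_0}u(a^*,\theta) = \hat V(\mu_0)$. Hence ``some action remains prior-optimal'' is equivalent to ``$V$ and $\hat V$ agree at $\mu_0$'' (in the less-flexible case one combines $\hat V \leq V$ with $a^*\in\hat A(\mu_0)$ to get this; it is immediate). I would also record the monotonicity facts that adding actions yields $\hat V \geq V$ pointwise and removing actions yields $\hat V \leq V$ pointwise.

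For statement \ref{statement1} I would check that the zero function $\ell \equiv 0$ witnesses shift-majorization of $V$ by $\hat V$: pointwise dominance is just $\hat V \geq V$, and equality at the prior is $\hat V(\mu_0) = V(\mu_0)$, which is exactly what a surviving prior-optimal action provides; Theorem \ref{mugenerate} then finishes the case. For the two less-flexible statements I would prove a single auxiliary claim: \emph{if the removal is consequential and some action remains prior-optimal, then $\hat V$ does not shift-majorize $V$}. Arguing by contradiction, suppose $\ell$ is an affine function exhibiting shift-majorization. Since $\hat V(\mu_0) = V(\mu_0)$, equality at the prior forces $\ell(\mu_0) = 0$; since $\hat V \leq V$ everywhere, pointwise dominance rearranges to $\ell(\mu) \geq V(\mu) - \hat V(\mu) \geq 0$ on $\Delta(\Theta)$. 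But an affine function nonnegative on $\Delta(\Theta)$ and vanishing at the relative-interior point $\mu_0$ must be constant on $\Delta(\Theta)$, hence identically $0$; then $\hat V \geq V$, so $\hat V = V$, contradicting consequentiality. Statement \ref{statement2} is then the contrapositive of this claim via the ``iff'' in Theorem \ref{mugenerate}, and statement \ref{statement3} is its contrapositive via the ``only if'' in Theorem \ref{muless}.

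I do not anticipate a real obstacle: the substance of the proposition is carried by Theorems \ref{mugenerate} and \ref{muless}, which I treat as given, and the rest is bookkeeping. The one load-bearing elementary fact needing a one-line justification is that an affine function nonnegative on the simplex and zero at an interior point is constant — moving from $\mu_0$ along any feasible direction and its negative both preserve $\ell \geq 0$, which kills the gradient on the affine hull of $\Delta(\Theta)$. I would also note in passing that the displayed inequality in the fixed-prior exogenous-information definition appears to contain a typo (it should read $\hat V(\mu_0)$ and $V(\mu_0)$, with $\Phi \in \mathcal{F}(\mu_0)$); routing everything through Theorem \ref{mugenerate} rather than through that inequality sidesteps any ambiguity.
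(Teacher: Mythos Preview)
Your proposal is correct and follows essentially the same route as the paper: reduce all three parts to Theorems \ref{mugenerate} and \ref{muless} via the pointwise inequality $\hat V\gtrless V$ together with $\hat V(\mu_0)=V(\mu_0)$ from the surviving prior-optimal action. The paper's own proof is terser and simply asserts that $\hat V$ fails to shift-majorize $V$ in the less-flexible case; your explicit step---showing that any candidate affine $\ell$ must satisfy $\ell\geq 0$ on $\Delta(\Theta)$ with $\ell(\mu_0)=0$, hence be identically zero on the simplex---is exactly the justification that argument needs, so your write-up is, if anything, more complete.
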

\begin{proof}
    Appendix \ref{prioroptimalproof} contains the proof.\end{proof}
If no action remains prior-optimal, the effect of making the agent more flexible is ambiguous, though, as we will now see, the deleterious effect of deleting actions observed in \(\S\)\ref{lessflexsec} persists. We say that prior \(\mu_0\) is \emph{Generic} if \(\mu_0 \in \textcolor{OrangeRed}{\inter \left(C_i\right) \cap \inter \left(\hat{C}_j\right)}\) for some \(C_i \in C\) and \(\hat{C}_j \in \hat{C}\). If \(\mu_0\) is generic, the optimal actions in \(\mathcal{D}\) and \(\hat{\mathcal{D}}\) at \(\mu_0\) are unique. Our next result states that is only if the prior-optimal action \(\hat{a}_{\mu_{0}}\) in \(\hat{\mathcal{D}}\) is weakly dominated by the previously prior-optimal action, \(a_{\mu_0}\), in \(\mathcal{D}\) that removing actions can increase the value of information for generic priors. 
\begin{proposition}\label{badremoval}
    Fix a generic prior \(\mu_0 \in \inter \Delta\left(\Theta\right)\), and take finite \(\mathcal{D}\) and \(\hat{\mathcal{D}}\), obtained by making the agent less flexible. If \(\hat{V}\) shift-majorizes \(V\), the elimination is inconsequential (so \(\hat{V} = V\)) or \(\hat{a}_{\mu_{0}}\) is weakly dominated by \(a_{\mu_0}\) in \(\mathcal{D}\).
\end{proposition}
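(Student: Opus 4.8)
The plan is to prove the implication directly: I would show that shift-majorization forces the affine shift \(\ell\) to coincide with the gap between the payoff hyperplanes of the two prior-optimal actions, after which the conclusion is read off immediately. Write \(g\left(\mu\right) \coloneqq \mathbb{E}_{\mu}u\left(a_{\mu_0},\theta\right)\) and \(\hat{g}\left(\mu\right) \coloneqq \mathbb{E}_{\mu}u\left(\hat{a}_{\mu_0},\theta\right)\); these are affine on \(\Delta\left(\Theta\right)\), and both \(a_{\mu_0}\) and \(\hat{a}_{\mu_0}\) belong to \(A\) — the latter because \(\hat{A}\subset A\). Since \(\hat{A}\subset A\), we have \(\hat{V}\leq V\) pointwise, as well as \(V\geq g\) and \(\hat{V}\geq\hat{g}\) pointwise. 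Because \(\mu_0\) is generic, \(a_{\mu_0}\) is the unique \(\mathcal{D}\)-optimal action and \(\hat{a}_{\mu_0}\) the unique \(\hat{\mathcal{D}}\)-optimal action at \(\mu_0\); by continuity of the finitely many affine maps \(\mu\mapsto\mathbb{E}_{\mu}u\left(a,\theta\right)\), there is a relatively open neighborhood \(N\subseteq\inter\Delta\left(\Theta\right)\) of \(\mu_0\) on which \(V=g\) and \(\hat{V}=\hat{g}\).

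The key step is to pin down \(\ell\). Let \(\ell\) be an affine function witnessing that \(\hat{V}\) shift-majorizes \(V\). From \(\hat{V}\left(\mu\right)+\ell\left(\mu\right)\geq V\left(\mu\right)\geq\hat{V}\left(\mu\right)\) we obtain \(\ell\geq 0\) on \(\Delta\left(\Theta\right)\). On \(N\), the globally nonnegative function \(\hat{V}+\ell-V\) agrees with the affine function \(\hat{g}+\ell-g\); hence \(\hat{g}+\ell-g\) is nonnegative on the relatively open set \(N\) and vanishes at the interior point \(\mu_0\). An affine function on \(\Delta\left(\Theta\right)\) that is nonnegative on a relatively open subset and zero at an interior point of that subset must be identically zero — otherwise its restriction to the affine hull of \(\Delta\left(\Theta\right)\) would have nonzero gradient and could be strictly decreased by moving from \(\mu_0\) in a feasible direction. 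Therefore \(\ell=g-\hat{g}\).

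It then remains to read off the conclusion. Combining the two facts, \(g-\hat{g}=\ell\geq 0\) on all of \(\Delta\left(\Theta\right)\); that is, \(\mathbb{E}_{\mu}u\left(\hat{a}_{\mu_0},\theta\right)\leq\mathbb{E}_{\mu}u\left(a_{\mu_0},\theta\right)\) for every \(\mu\in\Delta\left(\Theta\right)\). If \(a_{\mu_0}\neq\hat{a}_{\mu_0}\), this is exactly the statement that \(\hat{a}_{\mu_0}\) is weakly dominated by \(a_{\mu_0}\) in \(\mathcal{D}\). If instead \(a_{\mu_0}=\hat{a}_{\mu_0}\), then \(g=\hat{g}\), so \(\ell\equiv 0\), and the shift-majorization inequality reduces to \(\hat{V}\geq V\); together with \(\hat{V}\leq V\) this yields \(\hat{V}=V\), so the elimination is inconsequential. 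In either case the claimed dichotomy holds.

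The substance of the argument is concentrated in the middle step: shift-majorization only posits the existence of \emph{some} affine \(\ell\), but at a generic prior the local structure of \(V\) and \(\hat{V}\) (each piecewise affine and equal to a single linear piece near \(\mu_0\)) forces \(\ell\) to be the unique affine function \(g-\hat{g}\). Everything else is bookkeeping; the one place to be slightly careful is the elementary lemma that an affine function which is nonnegative near, and zero at, a relative-interior point of the simplex vanishes identically, together with the standard identifications ensuring that \(\mu_0\in\inter C_i\cap\inter\hat{C}_j\) really does deliver the local equalities \(V=g\) and \(\hat{V}=\hat{g}\).
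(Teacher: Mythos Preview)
Your proof is correct and follows essentially the same core idea as the paper: use genericity of \(\mu_0\) to pin down the affine shift \(\ell\) as \(g-\hat{g}\), then show \(\ell\geq 0\), which is precisely the weak-dominance statement. The differences are minor and in your favor: you argue directly rather than by contraposition; you obtain \(\ell\geq 0\) in one line from \(\hat{V}+\ell\geq V\geq\hat{V}\), whereas the paper extracts it via a chain of \(\max\)-function inequalities; and you handle the case \(a_{\mu_0}=\hat{a}_{\mu_0}\) within the same framework, while the paper invokes the earlier Proposition \ref{prioroptimal} for that case.
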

\begin{proof}
    The proof lies in Appendix \ref{badremovalproof}.
\end{proof}
This pair of propositions is reminiscent of Proposition \ref{lessflexprop}, which studies the removal of actions when the prior is not fixed. There, we find that any consequential removal of actions--that does not result in effectively a brand new decision problem--cannot lead to a more convex value function and so cannot increase the value of information. Moreover, when actions are removed, things are ``doubly'' bad as the agent's payoff is necessarily lower (she is optimizing over a smaller set). This direct consequence is more important with a fixed prior and we see that it leads to a similarly stark outcome: for a generic prior, any consequential removal of actions must lead to a lower value for information, except in the special case when there is a new action that is now optimal at the prior in \(\hat{\mathcal{D}}\) that had been previously dominated by the prior-optimal action in \(\mathcal{D}\). In short, except in special cases, irrespective of whether the prior is fixed, reducing the agent's ability to react to the world in a meaningful way lowers her value for information.

\section{Two Applications}\label{section4}

\subsection{Delegation}

In a delegation setting, in which an agent acquires information before taking an action, \cite*{clearadvice} studies how a principal prefers to constrain the agent's set of actions even though their interests are perfectly aligned \textit{ex post}. In particular, \citeauthor{clearadvice} shows that it is optimal for the principal to eliminate ``intermediate'' actions, which improves incentives for information acquisition. In that spirit, here we note that when the agent chooses whether to buy a fixed experiment, the principal always finds it optimal to increase the agent's flexibility by giving him additional refining actions.

Suppose the principal and agent share the same utility function, a common prior, and that initially the set of actions available to the agent is the finite set \(A\). The agent can acquire information by paying some cost \(\gamma > 0\) to see the realization of some signal. After acquiring information, the agent takes an action. The principal can give the agent access to an additional finite set of actions, \(B\), before she acquires information.
\begin{remark}
The principal prefers to give the agent access to an additional set of actions, \(B\), if it is totally refining.
\end{remark}
However, when \(A\) is such that no action is weakly dominated, Propositions \ref{lessflexprop}, \ref{prioroptimal}, and \ref{badremoval} reveal that regardless of whether the agent has private information (the principal knows her prior), the principal cannot increase the value of information for the agent by removing actions. Moreover, the direct effect of removing actions is also negative. Thus, any removal of actions, even intermediate ones, may result in a strictly lower payoff for the principal.

This observation seemingly contradicts the thesis of \cite{clearadvice}. Not so: the crucial features of \cite{clearadvice} are that the principal knows the agent's information-acquisition technology (which takes a specific fixed form), he knows her prior, and the set of actions is rich (an interval).\footnote{As the action set is not finite, Proposition \ref{badremoval} does not apply.} A key aspect of our formalization of an increased value for information is that any kind of exogenous information must be more valuable for the agent or that the agent cannot prefer to learn less no matter her posterior-separable cost. In \cite{clearadvice}, the removal of actions can be fine-tuned to the cost and the prior; whereas, here, the cost can, in a sense, be fine-tuned to the removal.

\subsection{Selling Information}\label{sell}

Theorem \ref{moreconvex} suggests a natural analog of increasing differences in an informational setting in which an agent's private type is her value for information. Here we apply this to a monopolistic screening problem. There are \(n\) states and an agent has one of two types, \(\omega_1\) and \(\omega_2\), with respective value functions \(V_{1}\) and \(V_{2}\), where \(V_{1} - V_{2}\) is convex. The principal and agent share a common prior \(\mu_{0} \in \inter{\Delta\left(\Theta\right)}\) and the principal can ``produce'' any distribution over posteriors \(\Phi\) subject to a UPS cost \(D\left(\Phi\right)\). By the revelation principle, she offers a contract \(\left(\left(t_1, \Phi_1\right), \left(t_2, \Phi_2\right)\right)\).

Naturally, in the first-best problem, the principal solves
\[\max_{\Phi_1 \in \mathcal{F}\left(\mu_{0}\right)}\left\{\int_{\Delta\left(\Theta\right)}V_1\left(\mu\right)d\Phi_1\left(\mu\right) - D\left(\Phi_{1}\right)\right\} \text{,} \quad \text{and} \quad \max_{\Phi_2 \in \mathcal{F}\left(\mu_{0}\right)}\left\{\int_{\Delta\left(\Theta\right)}V_2\left(\mu\right)d\Phi_2\left(\mu\right) - D\left(\Phi_{2}\right)\right\} \text{,}\]
and charges each type a price produced by that type's binding participation constraint. Echoing the basic monopolistic screening model, Theorem \ref{moreconvex} indicates that in the first-best solution type \(\omega_1\) is provided with ``no worse quality;'' that is, \(\Phi_{1,FB}\) is not a strict MPC of \(\Phi_{2,FB}\). In addition, Theorem \ref{moreconvex} tells us that \(t_1 \geq t_2\). Naturally, if there are just two states, \(\omega_1\) is provided with ``higher quality'' than type \(\omega_2\): \(\Phi_{1,FB}\) is an MPS of \(\Phi_{2,FB}\)

In the second-best problem, following standard logic, \(IR_{2}\) and \(IC_{1}\) bind, whereas \(IR_{1}\) and \(IC_{2}\) are slack. The principal's objective is therefore (eliminating constants)
\[\left(1-\rho\right)\left(\frac{1}{1-\rho}\int_{\Delta\left(\Theta\right)}\left(V_2\left(\mu\right) - \rho V_1\left(\mu\right)\right)d\Phi_2\left(\mu\right) - D\left(\Phi_{2}\right)\right) + \rho \left(\int_{\Delta\left(\Theta\right)}V_1\left(\mu\right)d\Phi_1\left(\mu\right) - D\left(\Phi_{1}\right)\right)\text{,}\]
where \(\rho \coloneqq \mathbb{P}\left(\omega_1\right)\). Since 
\[V_2 - \frac{V_2 - \rho V_1}{1-\rho} = \rho\frac{V_1 - V_2}{1-\rho}\] is convex, Theorem \ref{moreconvex} tells us that \(\Phi_{2,SB}\) cannot be a strict MPC of \(\Phi_{2,FB}\) (and for two states, Proposition \ref{twostates} reveals that \(\Phi_{2,SB}\) must be an MPC of \(\Phi_{2,FB}\)). Evidently, \(\Phi_{1,SB} = \Phi_{1,FB}\). As we have argued, all of the standard insights go through:
\begin{remark}
In the ``selling information'' example of this section, in the second-best (screening) solution, there is no output (quality of information) distortion at the top and semi-downward distortion for the ``low'' type relative to the first-best optimum (strictly more information cannot be provided).
\end{remark}

\section{Related Work \& Discussion}\label{section5}

This paper is related to \cite*{curellosinander}, who explore in a single-dimensional setting--either corresponding to a mean-measurable problem with an continuum of states or a binary state--what changes to her indirect payoff lead to greater (or no less) information provision by a persuader. When there are just two states, their first proposition implies Proposition \ref{twostates}. The questions they study, as well as those studied here, are fundamentally comparative statics questions, which connects this work to, e.g., \cite{monotonecomp} and \cite{quahstru}. The example of \(\S\)\ref{sell} is related to \cite*{sinander}, in which the author exploits his novel converse envelope theorem to show that in a information-sales setting, any Blackwell-increasing information allocation is implementable.

Beyond this, economists (and biologists) have been interested in the value of information for decision-makers since \cite*{ram}. This early foray was subsequently followed by the works of Blackwell (\cite*{blackwell, blackwell2}) and \cite*{athey}. The list of other related papers studying the value of information includes \cite*{bergstrom}, who explore the ``fitness value of information'' from a evolutionary perspective; \cite*{lara} who study the value of information using tools from convex analysis; \cite*{radstig}, \cite*{delara}, and \cite*{Chade} who study the marginal valuation of information; and \cite*{azrieli}, who study preference orders over information structures induced by decision problems.

There are other works related on a technical level. \cite*{kleiner2022extreme} study applications of (regular polyhedral) subdivisions to economic settings, with a particular focus on information and mechanism design. \cite{greenosband} are (to my knowledge) the first to connect decision problems to subdivisions. \cite{lambert} shows
that subdivisions are synonymous with elicitability of forecasts. This idea is also present in \cite{frongillo2021general}.

Finally, this paper is also related to the rational inattention literature pioneered by \cite{sims2003implications} and furthered by, e.g., \cite{caplin2022rationally}, \cite{chambers2020costly}, \cite{denti1}, and \cite{denti2022posterior}. \cite{caplinmartin} is especially similar in spirit to this paper. There, they formulate a (binary) relation between joint distributions over actions and states: one such joint distribution dominates another if for every utility function, every experiment consistent with the former is more valuable than every experiment consistent with the latter. In this paper, we construct a binary relation between (equivalence classes) of decision problems--one dominates another if information is more valuable in the former. In this spirit, our paper suggests an easy test for Bayesian rationality: give an experimental participant with some initial endowment of bets an additional refining bet; they should be willing to pay more for information as a result.

\bibliography{sample.bib}

\appendix

\section{Omitted Proofs}\label{theappendix}

\subsection{Lemma \ref{finernecessitylemma}}\label{finernecessitylemmaproof}
\begin{proof}
Suppose for the sake of contraposition \(\hat{C} \not\succeq C\). This implies for some \(\hat{C}_j \in \hat{C}\), there is no \(C_i\) such that \(\hat{C}_j \succeq C_i\). This means there exist \(\mu, \mu' \in \hat{C}_j\) and some \(C_i \in C\) with \(\mu \in C_i\) and \(\mu' \notin C_i\). By definition \(\hat{V}\) is affine on \(\hat{C}_j\), so for all \(\lambda \in \left[0,1\right]\),
\[\lambda \hat{V}\left(\mu\right) + \left(1-\lambda\right)\hat{V}\left(\mu'\right) = \hat{V}\left(\lambda \textcolor{OrangeRed}{\mu} + \left(1-\lambda\right) \mu'\right)\text{.}\]
By construction, for all \(\lambda \in \left(0,1\right)\),
\[\lambda V\left(\mu\right) + \left(1-\lambda\right)V\left(\mu'\right) > V\left(\lambda \textcolor{OrangeRed}{\mu} + \left(1-\lambda\right) \mu'\right)\text{.}\]
Combining these, and maintaining the convention \(\hat{W} = \hat{V} - V\), we obtain that 
\[\lambda \hat{W}\left(\mu\right) + \left(1-\lambda\right)\hat{W}\left(\mu'\right) < \hat{W}\left(\lambda \textcolor{OrangeRed}{\mu} + \left(1-\lambda\right) \mu'\right)\text{,}\]
so \(\hat{V}-V\) is not convex. \end{proof}

\subsection{Lemma \ref{newendoginfo} Proof}\label{newendoginfoproof}
\begin{proof}
    Suppose for the sake of contraposition that \(\hat{V}-V\) is not convex on \(\Delta \left(\Theta\right)\). As \(\hat{V} - V\), being the difference of two continuous functions, is continuous, it is not convex on \(\inter \Delta\left(\Theta\right)\).
    
    Let \(\rho\left(\mu\right)\) be some strictly convex function on \(\Delta\left(\Theta\right)\) and for an arbitrary \(\varepsilon > 0\) define function \(c_{\varepsilon}\left(\mu\right) \coloneqq \varepsilon \rho\left(\mu\right) + \hat{V}\left(\mu\right)\). By construction, for all \(\varepsilon > 0\), \(c_{\varepsilon}\) is strictly convex.

    Moreover, \(\hat{V} - c_{\varepsilon} = - \varepsilon \rho\) is strictly concave for all \(\varepsilon > 0\), so in the agent's flexible information acquisition problem for the transformed decision problem, the unique solution is for her to acquire the degenerate distribution on her prior, \(\delta_{\mu_{0}}\), for any prior \(\mu_{0} \in \inter\Delta \left(\Theta\right)\).
    
    In contrast, in the initial decision problem, \(\mathcal{D}\), the agent's objective in her flexible information acquisition problem is 
    \[V - c_{\varepsilon} = V - \hat{V} - \varepsilon \rho \text{.}\]
    As \(W \coloneqq V - \hat{V}\) is not concave on \(\inter \Delta\left(\Theta\right)\), for all sufficiently small \(\varepsilon > 0\), 
    \[\tag{\(A1\)}\label{eq1}\lambda \left(W\left(\mu\right) - \varepsilon \rho\left(\mu\right)\right) + \left(1-\lambda\right) \left(W(\mu') - \varepsilon \rho\left(\mu'\right)\right) >  W\left(\lambda \mu + \left(1-\lambda\right)\mu'\right) - \varepsilon \rho\left(\lambda \mu + \left(1-\lambda\right)\mu'\right)\]
    for some \(\lambda \in \left(0,1\right)\) and \(\mu, \mu' \in \inter \Delta\left(\Theta\right)\). Accordingly, as we may set \(\mu_{0} = \lambda \mu + \left(1-\lambda\right)\mu'\), there exists a \(\mu_{0} \in \inter \Delta \Theta\) such that the agent strictly prefers acquiring some information to learning nothing. That is, for any optimal \(\Phi^*\), \(\hat{\Phi}^* = \delta_{\mu_{0}}\) is a strict MPC of \(\Phi^*\).
\end{proof}

\subsection{Lemma \ref{newexoginfo} Proof}\label{newexoginfoproof}
\begin{proof}
    Suppose again for the sake of contraposition that \(\hat{V}-V\) is not convex, without loss of generality on \(\inter \Delta \left(\Theta\right)\). Inequality \ref{eq1} implies there exist \(\lambda \in \left(0,1\right)\), \(\mu, \mu' \in \inter \Delta\left(\Theta\right)\), and \(\mu_{0}^{*} = \lambda \mu + \left(1-\lambda\right) \mu'\) such that
    \[\lambda \left(V\left(\mu\right) - \hat{V}\left(\mu\right)\right) + \left(1-\lambda\right) \left(V(\mu') - \hat{V}(\mu')\right) >  V\left(\mu_{0}^{*}\right) - \hat{V}\left(\mu_{0}^{*}\right)\text{,}\]
    which holds if and only if 
    \[\mathbb{E}_\Phi V\left(\mu\right) - V\left(\mu_{0}^{*}\right) > \mathbb{E}_\Phi \hat{V}\left(\mu\right) - \hat{V}\left(\mu_{0}^{*}\right)\text{,}\]
    where \(\Phi\) is the binary Bayes-plausible distribution with support \(\left\{\mu,\mu'\right\}\).
\end{proof}

\subsection{Proposition \ref{nomoreprop} Proof}\label{nomorepropproof}
The proof builds off of the following lemma, which may be of independent interest. Let \(C\) be the subdivision corresponding to \(\mathcal{D}\), and let \(\Phi\) be a distribution of posteriors whose (finite) support is such that every point \(\mu_i \in \supp \Phi\) lies in the relative interior of a cell \(C_i \in C\) and each \(\mu_i\) is in a distinct \(C_i\). We call such a \(\Phi\) \emph{Non-Redundant}, as it does not contain multiple posteriors that justify the same action.

We say that a UPS cost \(D\) \emph{Generates Learning} \(\Phi\) if it is optimal for an agent with cost \(D\) to acquire \(\Phi\) and \(\Phi\) is uniquely optimal in the sense that any optimally acquired distribution over posteriors is supported on a subset of \(\supp \Phi\). Then,
\begin{lemma}\label{nonredundancy}
    If \(\Phi\) is non-redundant, there exists a cost \(D\) that generates it.
\end{lemma}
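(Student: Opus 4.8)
The plan is to recast ``generates learning'' as a property of the concave closure of net utility and then construct the cost directly. For a UPS cost with strictly convex kernel \(c\), the agent's problem \(\max_{\Phi\in\mathcal F(\mu_0)}\mathbb E_\Phi[V-c]\) attains the concave closure of \(V-c\) at \(\mu_0\). Hence it suffices to produce a strictly convex \(c\colon\Delta(\Theta)\to\mathbb R\) and an affine \(\ell\) with \(V-c\le\ell\) on \(\Delta(\Theta)\), with equality exactly on \(\supp\Phi=:\{\mu_1,\dots,\mu_k\}\). Given such a pair, write \(\mu_0=\sum_i q_i\mu_i\) with every \(q_i>0\) (this uses Bayes-plausibility and \(\mu_0\in\inter\Delta(\Theta)\)); then \(\mathbb E_\Phi[V-c]=\mathbb E_\Phi\ell=\ell(\mu_0)\), which equals the concave closure since \(\ell\) is an affine majorant of \(V-c\), so \(\Phi\) is optimal; and any optimizer \(\Phi'\) satisfies \(\mathbb E_{\Phi'}[\ell-(V-c)]=0\) with the integrand nonnegative and vanishing only on \(\{\mu_i\}\), so \(\supp\Phi'\subseteq\supp\Phi\). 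Thus \(D(\Phi):=\int c\,d\Phi-c(\mu_0)\) generates \(\Phi\), and the lemma reduces entirely to the construction of \((c,\ell)\).

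The construction begins from the local constraint. Since \(\mu_i\in\inter C_i\), on a neighborhood of \(\mu_i\) we have \(V=\ell_i\), where \(\ell_i(\mu):=\mathbb E_\mu u(a_i,\theta)\); so \(c\ge V-\ell\) with equality at \(\mu_i\) forces \(\ell_i-\ell\) to be the (unique) tangent hyperplane of the strictly convex \(c\) at \(\mu_i\), i.e. \(\nabla c(\mu_i)=\nabla\ell_i-\nabla\ell\) and \(c(\mu_i)=\ell_i(\mu_i)-\ell(\mu_i)\). The key observation is that these data are strictly monotone: for \(i\ne j\), unique optimality of \(a_i\) on \(\inter C_i\) and of \(a_j\) on \(\inter C_j\) give \(\ell_i(\mu_i)>\ell_j(\mu_i)\) and \(\ell_j(\mu_j)>\ell_i(\mu_j)\), which sum to \((\nabla\ell_i-\nabla\ell_j)\cdot(\mu_i-\mu_j)>0\); since they also form a subset of the cyclically monotone graph of \(\partial V\) (translated by \(-\nabla\ell\)), they extend to the gradient of a smooth strictly convex function. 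This produces a candidate \(c\) with the correct germ at every \(\mu_i\) (with \(\ell\) then determined, up to the irrelevant additive constant, as the common value of the affine functions \(\ell_i-T_i\), where \(T_i\) is the tangent of \(c\) at \(\mu_i\)); near each \(\mu_i\), strict convexity already delivers \(c>V-\ell\) away from \(\mu_i\).

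The remaining --- and main --- step is to secure the global domination \(c\ge V-\ell\) on all of \(\Delta(\Theta)\), strict off \(\{\mu_1,\dots,\mu_k\}\), while keeping \(c\) strictly convex. The geometric fact that makes this feasible is that the \(\mu_i\) sit in distinct full-dimensional cells, so for \(i\ne j\) the segment \([\mu_i,\mu_j]\) crosses a boundary of the subdivision and \(V\) is not affine along it; equivalently, the one-sided derivatives of \(V-\ell\) along the segment at its endpoints satisfy a strict inequality, which is exactly what is needed to fit a strictly convex function above \(V-\ell\) on the segment with matching endpoint values. Concretely I would first pin down \(c\) (with its prescribed germs) on a small neighborhood of \(\conv\{\mu_1,\dots,\mu_k\}\) lying inside the region where this slack is available, then extend it to a strictly convex function on \(\Delta(\Theta)\) dominating \(V-\ell\) via a convex-extension and regularization argument, and finally verify that no contact with \(\ell\) is created outside \(\{\mu_i\}\). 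The hard part is this patching: checking that the slack from the kinks of \(V\) between cells genuinely accommodates one globally strictly convex \(c\), and that the extension never touches \(\ell\) again; the reduction, the local analysis, and the monotonicity observation are routine by comparison.
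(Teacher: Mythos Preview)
Your reduction is right and is exactly what the paper uses (with \(\ell\equiv 0\)): a strictly convex \(c\) with \(V-c\le 0\) on \(\Delta(\Theta)\) and equality precisely on \(\supp\Phi\) makes the UPS cost \(D\) generate \(\Phi\), by the concavification argument you give. But your construction of \(c\) is left as a plan, not a proof. You yourself flag the global step as the ``hard part'': cyclic monotonicity of the data \((\mu_i,\nabla\ell_i)\) does guarantee \emph{some} strictly convex interpolant with the right germs, but nothing in that construction---or in a generic convex-extension/regularization argument---delivers the pointwise domination \(c\ge V-\ell\) on all of \(\Delta(\Theta)\), nor precludes new contact points. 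That is the substantive gap.

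The paper sidesteps the whole difficulty with an explicit one-line formula. For each undominated action \(a_i\) with affine payoff \(\alpha_i\cdot\mu+\beta_i\), set
\[
t_i(\mu)\;=\;\alpha_i\cdot\mu+\beta_i+\varepsilon\,\lVert\mu-\mu_{\sigma(i)}\rVert^{2},
\]
where \(\sigma(i)=i\) if \(\mu_i\in\supp\Phi\) and \(\sigma(i)=1\) otherwise, and let \(c(\mu)=\max_i t_i(\mu)\). Then \(c\) is strictly convex (a maximum of strictly convex functions); \(c\ge V\) pointwise since each \(t_i\) dominates its own affine piece; and for \(\varepsilon\) small enough, \(c(\mu)=V(\mu)\) iff \(\mu\in\supp\Phi\): at \(\mu_j\in\supp\Phi\) we have \(t_j(\mu_j)=V(\mu_j)\), while every other \(t_i(\mu_j)<V(\mu_j)\) because \(\mu_j\in\inter C_j\) gives strict slack \(\alpha_i\cdot\mu_j+\beta_i<V(\mu_j)\); and off \(\supp\Phi\), whichever \(i^\ast\) is optimal has \(\mu\ne\mu_{\sigma(i^\ast)}\), so \(t_{i^\ast}(\mu)>V(\mu)\). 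This yields \((c,\ell)=(c,0)\) directly---no interpolation, no extension, no patching. Your local analysis and monotonicity observation are correct but unnecessary once one sees that perturbing each affine piece of \(V\) by a quadratic centered at its designated posterior already does the job.
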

\begin{proof}
    The proof and discussion of this lemma lie in the supplementary appendix. \end{proof}
Now let us prove the proposition.
\begin{proof}[Proof of Proposition \ref{nomoreprop}.]
    \(\left(\Leftarrow\right)\) If \(V\) is affine, \(V - c\) is strictly concave, so it is uniquely optimal to acquire no information in Problem \ref{flex}. Consequently, any solution to Problem \ref{flex} is an MPC of the solution to Problem \ref{flexhat}. If \(\hat{V} - V\) is affine, the solutions to Problems \ref{flex} and \ref{flexhat} are identical.

    \bigskip

    \noindent \(\left(\Rightarrow\right)\) We prove the result by contraposition. Suppose neither \(V\) nor \(\hat{V} - V\) is affine. If \(\hat{V} - V\) is not convex, by Theorem \ref{moreconvex}, we are done, so let \(\hat{V}-V \eqqcolon \hat{W}\) be convex. As neither \(\hat{W}\) nor \(V\) is affine and as we can add affine functions to value functions without altering decision problems, we specify without loss of generality that the sets \(Z \coloneqq \left\{\mu \in \Delta \left(\Theta\right) \colon \hat{W}\left(\mu\right) \leq 0\right\}\) and \(Y \coloneqq \left\{\mu \in \Delta \left(\Theta\right) \colon \hat{W}\left(\mu\right) > 0\right\}\) are such that
    \begin{enumerate}[label={(\roman*)},noitemsep,topsep=0pt]
        \item \(Z\) is a polytope of full dimension in \(\mathbb{R}^{n-1}\); 
        \item Neither \(V\) nor \(\hat{W}\) is affine on \(Z\)
        \item \(Y\) is of full dimension in \(\mathbb{R}^{n-1}\); and
        \item There exist two points \(\mu_1, \mu_{2} \in \inter Y\) and \(\lambda \in \left(0,1\right)\) for which \(\mu_{0} = \lambda \mu_{1} + \left(1-\lambda\right) \mu_{2} \in \inter Z\) and \(\hat{V}\) is locally affine around both \(\mu_1\) and \(\mu_2\).\footnote{We discuss this imposition further in the supplementary appendix.}
    \end{enumerate}
    
    Define \(\tilde{W}\left(\mu\right) \coloneqq \max\left\{V\left(\mu\right), \hat{V}\left(\mu\right)\right\}\) and let \(\tilde{C}\) be the corresponding subdivision. By construction, there are at least two cells \(\tilde{C}_1, \tilde{C}_2 \subseteq Z\). Take \(\mu_1, \mu_{2}\) and \(\mu_{0}\) as previously specified. By construction, we can find \(\tilde{\mu}_1 \in \inter \tilde{C}_1\), \(\tilde{\mu}_2 \in \inter \tilde{C}_2\) and \(\tilde{\lambda} \in \left(0,1\right)\) such that \(\mu_{0} = \tilde{\lambda} \tilde{\mu}_1 + \left(1-\tilde{\lambda}\right) \tilde{\mu}_2\) and \(\tilde{\mu}_1\) and \(\tilde{\mu}_2\) do not lie on the line segment between \(\mu_1\) and \(\mu_2\).

    Let \(\Phi\) be a distribution with support \(\left\{\mu_{1},\mu_{2}, \tilde{\mu}_1,\tilde{\mu}_2\right\}\). By construction, \(\Phi\) is non-redundant when the agent's value function is \(\tilde{W}\). Thus, Lemma \ref{nonredundancy} implies there is a cost function, \(D\), that generates \(\Phi\). Moreover, for prior \(\mu_{0}\), \(D\) renders the Bayes-plausible binary distribution with support \(\left\{\mu_{1},\mu_{2}\right\}\) uniquely optimal when the agent's value function is \(\hat{V}\) and the Bayes-plausible binary distribution with support \(\left\{\tilde{\mu}_1,\tilde{\mu}_2\right\}\) uniquely optimal when the agent's value function is \(V\). These two distributions are Blackwell-incomparable. \end{proof}

\subsection{Lemma \ref{convexandfiner} Proof}\label{convexandfinerproof}
\begin{proof}
    Suppose \(\hat{C} \succeq C\). That is, for every \(\hat{C}_j \in \hat{C}\), there exists a \(C_i \in C\) such that \(\hat{C}_j \subseteq C_i\). Define \[\hat{W}\left(\mu\right) \coloneqq \hat{V}\left(\mu\right) - V\left(\mu\right) = \max\left\{\hat{\alpha} \cdot \mu + \hat{\beta}, V\left(\mu\right)\right\} - V\left(\mu\right)\text{,}\] where \(\hat{\alpha} \cdot \mu + \hat{\beta}\) (\(\hat{\beta} \in \mathbb{R}\), \(\hat{\alpha} \in \mathbb{R}^{n-1}\)) is the expected payoff from taking the new action \(\hat{a}\). By assumption, \[\hat{C}_{j} \coloneqq \left\{\mu \in \Delta \colon \hat{\alpha} \cdot \mu + \hat{\beta} \geq V\left(\mu\right)\right\} \subseteq C_{j}\] for some \(j \in \left\{1,\dots,m\right\}\). Moreover, by the continuity of \(V\) and \(\hat{V}\), for all \(\mu \in \cup_{i \neq j} C_i\), \(\hat{W}\left(\mu\right) = V\left(\mu\right) - V\left(\mu\right) = 0\).
    
    For all \(\mu \in C_j\), \(V\left(\mu\right) = \alpha \cdot \mu + \beta\) (\(\beta \in \mathbb{R}\), \(\alpha \in \mathbb{R}^{n-1}\)). Accordingly, for all \(\mu \in C_j\), 
    \[\hat{W}\left(\mu\right) = \max\left\{\hat{\alpha} \cdot \mu + \hat{\beta}, \alpha \cdot \mu + \beta\right\} - \left(\alpha \cdot \mu + \beta\right) = \max\left\{0,\left(\hat{\alpha} - \alpha\right) \cdot \mu + \hat{\beta} - \beta\right\}\text{,}\]
    which means that for all \(\mu \in \Delta\left(\Theta\right)\), \[\hat{W}\left(\mu\right) = \max\left\{0,\left(\hat{\alpha} - \alpha\right) \cdot \mu + \hat{\beta} - \beta\right\}\text{,}\]
    which is convex, being the maximum of two affine functions. \end{proof}

\subsection{Proposition \ref{generic} Proof}\label{genericproof}
\begin{proof} \(\left(\Rightarrow\right)\) This direction mostly repeats the proof of Lemma \ref{seqextremal}. Fix an arbitrary \(b \in B\) and let \(\hat{V}_b^u\) denote the agent's value function when the set of actions is \(A \cup \left\{b\right\}\) and the utilities from the new actions are \(u \in \mathbb{R}^{B \times \Theta}\). Since \(b\) is either strictly refining or strictly dominated, \(\hat{V}_b^u - V\) is convex. Moreover, if \(b\) is strictly dominated, \(\mathbb{E}_{\mu} u\left(b,\theta\right) < V\left(\mu\right)\) for all \(\mu \in \Delta \left(\Theta\right)\). Consequently, there exists an open ball around \(u\) such that for all \(\tilde{u}\) in that open ball, \(\mathbb{E}_{\mu} u\left(b,\theta\right) < V\left(\mu\right)\) for all \(\mu \in \Delta \left(\Theta\right)\). Now let \(b\) be strictly refining. Let \(C_i\) be the cell of \(C\) for which the following inclusion holds:
\[\left\{\mu \in \Delta\left(\Theta\right) \ \vert \ \mathbb{E}_{\mu}u\left(\hat{a},\theta\right) \geq V\left(\mu\right)\right\} \subseteq C_i\text{.}\]
By the definition of strictly refining, for all \(\mu \notin C_i\), \(\mathbb{E}_{\mu} u\left(b,\theta\right) < V\left(\mu\right)\). Accordingly, there exists an open ball around \(u\) such that for all \(\tilde{u}\) in that open ball, \(\mathbb{E}_{\mu} u\left(b,\theta\right) < V\left(\mu\right)\) for all \(\mu \in \Delta \left(\Theta\right) \setminus C_i\). Combining these two observations, we see that for all \(\tilde{u}\) in some open ball around \(u\), \(b\) is strictly refining, so \(\hat{V}_b^{\tilde{u}} - V\) is convex. Finally, \(\hat{V}^{\tilde{u}} - V = \max\left(V_b^{\tilde{u}}\right)_{b \in B} - V = \max\left\{\left(V_b^{\tilde{u}} - V\right)_{b \in B}\right\}\) is convex, being the maximum of convex functions.

\bigskip

    \noindent \(\left(\Leftarrow\right)\) Suppose for the sake of contraposition that \(B\) is not totally strictly refining. That is, there exists a \(b \in B\), an \(\mu \in \Delta\), and two undominated (in \(A\)) actions \(a_1, a_2 \in A\) for which \[\mathbb{E}_{\mu}u\left(b,\theta\right) \geq \mathbb{E}_{\mu}u\left(a_1,\theta\right) = \mathbb{E}_{\mu}u\left(a_2,\theta\right) = V\left(\mu\right)\text{.}\]
    Without loss of generality, we may assume \(\hat{V}\left(\mu\right) = \mathbb{E}_{\mu}u\left(b,\theta\right)\) as any \(b' \neq b\) for which 
    \(\mathbb{E}_{\mu}u\left(b',\theta\right) = \hat{V}\left(\mu\right) > \mathbb{E}_{\mu}u\left(b,\theta\right)\) is neither dominated nor strictly refining itself, so we could just replace \(b\) in the proof with \(b'\).
    
    Now, pick some \(\theta' \in \Theta\) that occurs with positive probability under \(\mu\). Define \(\tilde{u}\left(b,\theta\right) = u\left(b,\theta\right)\) for all \(\theta \neq \theta'\) and \(\tilde{u}\left(b,\theta'\right) = u\left(b,\theta'\right) + \varepsilon\). Then, for all \(\varepsilon > 0\),
\[\label{eq2}\tag{\(A2\)}\mathbb{E}_{\mu}\tilde{u}\left(b,\theta\right) > \hat{V}\left(\mu\right) \geq \mathbb{E}_{\mu}u\left(a_1,\theta\right) = \mathbb{E}_{\mu}u\left(a_2,\theta\right) = V\left(\mu\right)\text{.}\]
By the continuity of each of the four functions, \(\mathbb{E}_{\mu}\tilde{u}\left(b,\theta\right)\), \(\hat{V}\left(\mu\right)\), \(\mathbb{E}_{\mu}u\left(a_1,\theta\right)\), and \(\mathbb{E}_{\mu}u\left(a_2,\theta\right)\) in \(\mu\), for all \(\mu'\) in some open ball (understanding \(\Delta\left(\Theta\right)\) as a subset of \(\mathbb{R}^{n-1}\), equipped with the Euclidean metric) around \(\mu\),
\[\mathbb{E}_{\mu'}\tilde{u}\left(b,\theta\right) > \max\left\{\hat{V}\left(\mu'\right), \mathbb{E}_{\mu'}u\left(a_1,\theta\right), \mathbb{E}_{\mu'}u\left(a_2,\theta\right), V\left(\mu'\right)\right\}\text{.}\]
Moreover, neither \(a_1\) nor \(a_2\) is weakly dominated so for any open ball \(B_{\eta}\left(\mu\right)\) around \(\mu\), there exist \(\mu_1, \mu_{2} \in B_{\eta}\left(\mu\right)\) such that \(\mathbb{E}_{\mu_{1}}u\left(a_1,\theta\right) > \max_{a \in A \setminus \left\{a_1\right\}} \mathbb{E}_{\mu_{1}}u\left(a,\theta\right)\) and \(\mathbb{E}_{\mu_{2}}u\left(a_2,\theta\right) > \max_{a \in A \setminus \left\{a_2\right\}} \mathbb{E}_{\mu_{2}}u\left(a,\theta\right)\).

This implies there is no \(C_i \in C\) such that 
\[\left\{\mu \in \Delta\left(\Theta\right) \ \vert \ \mathbb{E}_{\mu}\tilde{u}\left(b,\theta\right) \geq \max_{a \in A \cup B}\mathbb{E}_{\mu}u\left(a,\theta\right)\right\} \subseteq C_i\text{,}\]
so \(C \not\succeq \hat{C}^{\tilde{u}}\), where \(\hat{C}^{\tilde{u}}\) denotes the subdivision in the perturbed transformed decision problem. The contraposition of Lemma \ref{finernecessitylemma}, therefore, produces the result. \end{proof}

\subsection{Proposition \ref{lessflexprop} Proof}\label{lessflexpropproof}
\begin{proof}
    Suppose for the sake of contraposition there are leftovers and the elimination is not inconsequential. Namely, there is some \(a_i \in \hat{A}\) for which there exists an \(\mu \in \Delta \left(\Theta\right)\) such that \[\mathbb{E}_{\mu}u\left(a_i,\theta\right) > \max_{a \in A \setminus \left\{a_i\right\}}\mathbb{E}_{\mu}u\left(a,\theta\right)\text{;}\]
    and there is some \(a_j \in A \setminus \hat{A}\) for which there exists an \(\mu \in \Delta \left(\Theta\right)\) such that \[\mathbb{E}_{\mu}u\left(a_j,\theta\right) > \max_{a \in A \setminus \left\{a_j\right\}}\mathbb{E}_{\mu}u\left(a,\theta\right)\text{.}\]
    
    By construction, for all \(\mu \in C_i\), \(\mathbb{E}_{\mu} u\left(a_i,\theta\right) = V\left(\mu\right) = \hat{V}\left(\mu\right)\) and for all \(\mu \in \inter C_i\), \[\mathbb{E}_{\mu} u\left(a_i,\theta\right) = V\left(\mu\right) = \hat{V}\left(\mu\right) > \max_{a \in A\setminus\left\{a_i\right\}}\mathbb{E}_{\mu} u\left(a,\theta\right) \geq \max_{a \in \hat{A}\setminus\left\{a_i\right\}}\mathbb{E}_{\mu} u\left(a,\theta\right)\text{.}\]
    Let \(a_j \in A \setminus \hat{A}\); namely, \(a_j\) is one of the actions taken away. For all \(\mu \in \inter C_j\), 
    \[\mathbb{E}_{\mu} u\left(a_j,\theta\right) = V\left(\mu\right) > \hat{V}\left(\mu\right)\text{.}\]

    By construction, there exist points \(\mu', \mu_{0} \in \inter C_i\), \(\mu \in \inter C_j\), and weight \(\lambda \in \left(0,1\right)\) such that \(\mu_{0} = \lambda \mu + \left(1-\lambda\right) \mu'\), and
    \[\lambda \underbrace{\left(\hat{V}\left(\mu\right)-V\left(\mu\right)\right)}_{ < 0} + \left(1-\lambda\right)\underbrace{\left(\hat{V}(\mu')-V(\mu')\right)}_{= 0} - \underbrace{\left(\hat{V}\left(\mu_{0}\right)-V\left(\mu_{0}\right)\right)}_{= 0} < 0 \text{,}\]
    so \(\hat{V}-V\) is not convex.
\end{proof}

\subsection{Proposition \ref{affinetrans} Proof}\label{affinetransproof}
\begin{proof}
If \(k = 1\), the result is immediate, so let \(k \neq 1\). For any pair of points \(\mu_1 \neq \mu_{2} \in \Delta\left(\Theta\right)\) with corresponding optimal actions \(a_1 \neq a_2\), \(\lambda \in \left[0,1\right]\), and optimal action at \(\mu^\dagger \coloneqq \lambda \mu_{1} + \left(1-\lambda\right) \mu_{2}\), \(a^{\dagger}\),
\[\begin{split}
    \lambda \hat{W}\left(\mu_1\right) + \left(1-\lambda\right)\hat{W}\left(\mu_2\right) &> \hat{W}\left(\mu^{\dagger}\right) \quad \Leftrightarrow\\
    \left(k-1\right)\left[\lambda \mathbb{E}_{\mu_{1}}u\left(a_1,\theta\right) + \left(1-\lambda\right) \mathbb{E}_{\mu_{2}}u\left(a_2,\theta\right)\right] &> \left(k-1\right)\mathbb{E}_{\mu^{\dagger}}u\left(a^{\dagger},\theta\right) \quad \Leftrightarrow\\
    k &> 1 \text{,}
\end{split}\]
recalling that \(\hat{W}\left(\mu\right) \coloneqq \hat{V}\left(\mu\right)-V\left(\mu\right)\). \end{proof}

\subsection{Proposition \ref{riskaverseprop} Proof}
\begin{proof}
    Recall that \(A\) is finite. We need only establish the result when we make the agent more risk averse, as the proof for when we make the agent more risk loving is virtually identical \textit{mutatis mutandis}.\footnote{In fact, it is easier to tackle the scenario in which the agent becomes more risk loving, due to the fact that previously dominated actions cannot become undominated when an agent's love for risk increases (Proposition 1 in both \cite*{battigalli2016note} and \cite{weinstein2016effect}).} Understanding the agent's utility \(u\) to be an element of the Euclidean space \(\mathbb{R}^{A \times \Theta}\), we will show that for any strictly increasing, strictly concave \(\phi\), if \(\hat{C} \succeq C\), then for any sufficiently small \(\varepsilon > 0\), there exists a perturbed utility \(u^{\dagger} \in B_{\varepsilon}(u)\) for which \(\hat{C}^{\dagger} \not\succeq C^{\dagger}\), where \(\hat{C}^{\dagger}\) and \(C^{\dagger}\) are the perturbed subdivisions in the transformed and initial decision problems, respectively.

    We may do the following three things without loss of generality. First, we specify that there are just two states, \(\Theta = \left\{0,1\right\}\). This is because when there are more than two states, we could just specialize to an arbitrary edge of \(\Delta\left(\Theta\right)\). Second, we assume that there are no ``duplicate actions,'' i.e., that there do not exist \(a, a' \in A\) for which \(u(a,\theta) = u(a',\theta)\) for all \(\theta \in \Theta\). Third, we impose that in both the initial and transformed decision problems, at any belief \(\mu \in \left[0,1\right]\), there are at most two actions optimal at this belief. That is, there are no weakly dominated actions that are not strictly dominated. Genericity facilitates this: reducing the payoffs in both states of any weakly dominated action by any \(\varepsilon > 0\) makes it strictly dominated.
    
    The subdivision in the initial decision problem is \[C = \left\{C_1,\dots,C_m\right\}\text{,}\]
    labeled so that
    \[C_1 = \left[0,\mu_1\right], C_2 = \left[\mu_1,\mu_2\right], \dots , C_m = \left[\mu_m,1\right]\text{,}\]
    where
    \[0 < \mu_1 < \cdots < \mu_m < 1\text{.}\]
    
    For \(i \in \left\{1,2\right\}\), let \(\alpha_i \coloneqq u(a_i,0)\) and \(\beta_i \coloneqq u(a_i,1)\), where, by construction, \(\alpha_1 > \alpha_2\) and \(\beta_2 > \beta_1\). The belief at which the agent is indifferent between the two actions in the initial decision problem is
    \[\mu_1 = \frac{\alpha_1 - \alpha_2}{\alpha_1 - \alpha_2 + \beta_2 - \beta_1}\text{.}\]
    By Proposition 1 in \cite{battigalli2016note} and Proposition 1 in \cite{weinstein2016effect}, no action in the set of actions that are not strictly dominated in the initial decision problem can become strictly dominated when the agent becomes more risk averse. Note further that the strict monotonicity of \(\phi\) ensures that action \(a_1\) must remain optimal for some interval of beliefs \(\left[0,\hat{\mu}_1\right]\) where \(\hat{\mu}_1 > 0\); that is, the ``first element'' of subdivision \(\hat{C}\), going left-to-right along \(\left[0,1\right]\), must still correspond to action \(a_1\).

    Consequently, a transformation that results in a weakly finer subdivision must either be such that
    \begin{enumerate}[noitemsep,topsep=0pt]
        \item \textbf{Case 1.} \(\mu_1 = \hat{\mu}_1\); or
        \item \textbf{Case 2.} \(\mu_1 > \hat{\mu}_1\).
    \end{enumerate}
    In the supplemental appendix, we show that in either case, the preservation or refinement of the subdivision is non-generic.\end{proof}

\subsection{Theorem \ref{mugenerate} Proof}\label{mugenerateproof}
\begin{proof}
    \(\left(\Leftarrow\right)\) Suppose \(\hat{V}\) shift-majorizes \(V\) and fix \(\mu_0 \in \inter \Delta\left(\Theta\right)\). Then for any \(F \in \mathcal{F}_{\mu_{0}}\),
\[\begin{split}
    \mathbb{E}_{F} \hat{V} - \hat{V}\left(\mu_{0}\right) &\geq \mathbb{E}_{F} \left[V - \ell\right] - \hat{V}\left(\mu_{0}\right)\\
    &= \mathbb{E}_F V - \ell\left(\mu_{0}\right)- \hat{V}\left(\mu_{0}\right)\\
    &= \mathbb{E}_F V - V\left(\mu_{0}\right)\text{.}
\end{split}\]

\medskip

\noindent \(\left(\Rightarrow\right)\) Fix \(\mu_0 \in \inter \Delta\left(\Theta\right)\) and suppose that for any \(F \in \mathcal{F}_{\mu_{0}}\)
\[\mathbb{E}_{F} \left[\hat{V} - V\right] \geq \hat{V}\left(\mu_{0}\right) - V\left(\mu_{0}\right)\text{.}\]
Recall our convention that \(\hat{W} \coloneqq \hat{V}-V\), and let \(\breve{W}\) be the lower convex envelope of \(\hat{W}\):
\[\breve{W}\left(\mu\right) \coloneqq \sup\left\{\left.g\left(\mu\right) \right| g \text{ is convex and } g\left(\mu\right) \leq \hat{W}\left(\mu\right) \text{ for all } \mu \in \Delta\left(\Theta\right)\right\}\text{.}\]
As the supremum of convex functions, \(\breve{W}\) is convex; and by construction \(\breve{W}\) lies below \(\hat{W}\) pointwise, which implies that \(\breve{W}\left(\mu_{0}\right) \leq \hat{W}\left(\mu_{0}\right)\).
\begin{claim}\label{equalclaim}
    \(\breve{W}\left(\mu_{0}\right) = \hat{W}\left(\mu_{0}\right)\).
\end{claim}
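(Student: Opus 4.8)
The plan is to combine the already-established inequality \(\breve{W}(\mu_{0}) \leq \hat{W}(\mu_{0})\) with the reverse inequality, which will follow from the standard ``convexification'' identity for lower convex envelopes. Concretely, I would first record that for a continuous function \(f\) on the compact convex set \(\Delta(\Theta)\), its lower convex envelope satisfies \(\breve{f}(\mu_{0}) = \min_{F \in \mathcal{F}_{\mu_0}} \mathbb{E}_{F} f\), the minimum being attained. This is the mirror image of the concavification result underlying \cite{kam}: the inequality ``\(\leq\)'' holds because \(\breve f\) is the pointwise supremum of affine minorants \(\ell\) of \(f\), and each such \(\ell\) obeys \(\ell(\mu_{0}) = \mathbb{E}_{F}\ell \leq \mathbb{E}_{F}f\) for every \(F \in \mathcal{F}_{\mu_0}\); the inequality ``\(\geq\)'' together with attainment follows because the point \((\mu_{0}, \breve{f}(\mu_{0}))\) lies in \(\conv\bigl(\operatorname{graph} f\bigr)\), which is compact since \(\Delta(\Theta)\) is compact and \(f = \hat V - V\) is continuous, so Carathéodory's theorem lets us write it as \(\sum_{i=1}^{n}\lambda_{i}\,(\mu_{i}, f(\mu_{i}))\) with \(\sum_{i}\lambda_{i}\mu_{i} = \mu_{0}\), i.e.\ as \(\mathbb{E}_{F^{\star}}f\) for the finitely-supported \(F^{\star} = \sum_{i}\lambda_{i}\delta_{\mu_{i}} \in \mathcal{F}_{\mu_0}\).

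Next I would apply this identity to \(f = \hat{W}\). Let \(F^{\star} \in \mathcal{F}_{\mu_0}\) attain \(\breve{W}(\mu_{0}) = \mathbb{E}_{F^{\star}}\hat{W}\). The standing hypothesis of the \((\Rightarrow)\) direction is precisely that \(\mathbb{E}_{F}\bigl[\hat{V} - V\bigr] \geq \hat{V}(\mu_{0}) - V(\mu_{0})\), that is \(\mathbb{E}_{F}\hat{W} \geq \hat{W}(\mu_{0})\), for every \(F \in \mathcal{F}_{\mu_0}\). Specializing this to \(F = F^{\star}\) gives \(\breve{W}(\mu_{0}) = \mathbb{E}_{F^{\star}}\hat{W} \geq \hat{W}(\mu_{0})\). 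Together with \(\breve{W}(\mu_{0}) \leq \hat{W}(\mu_{0})\), already noted since \(\breve{W}\) lies below \(\hat{W}\) pointwise, this yields \(\breve{W}(\mu_{0}) = \hat{W}(\mu_{0})\), which is the claim.

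There is no substantive obstacle here; the only point requiring care is the convexification identity and, in particular, the \emph{attainment} of the minimum over \(\mathcal{F}_{\mu_0}\) (as opposed to merely an infimum), which is why I would either invoke the concavification machinery of \cite{kam} directly or spell out the short self-contained Carathéodory argument above. Everything else is a one-line consequence of the hypothesis.
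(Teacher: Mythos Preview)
Your proof is correct and takes essentially the same approach as the paper: both hinge on the convexification identity \(\breve{W}(\mu_{0}) = \min_{F \in \mathcal{F}_{\mu_{0}}} \mathbb{E}_{F}\hat{W}\), which the paper obtains by invoking Corollary 2 of \cite{kam} and you derive via the Carath\'eodory argument. The only cosmetic difference is that the paper phrases it as a proof by contradiction (if \(\breve{W}(\mu_{0}) < \hat{W}(\mu_{0})\) then some \(F\) has \(\mathbb{E}_{F}\hat{W} < \hat{W}(\mu_{0})\), contradicting the hypothesis), whereas you argue directly; the underlying content is identical.
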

\begin{proof}
    Suppose for the sake of contradiction that \(\breve{W}\left(\mu_{0}\right) < \hat{W}\left(\mu_{0}\right)\). As \(\hat{W}\) is continuous on \(\Delta\left(\Theta\right)\), Corollary 2 from \cite{kam} implies that there exists \(F \in \mathcal{F}_{\mu_{0}}\) for which
    \[\mathbb{E}_F \hat{W} < \hat{W}\left(\mu_{0}\right)\text{,}\]
    a contradiction.\end{proof}
\(\left(\mu_0, \breve{W}\left(\mu_{0}\right)\right)\) lies on the boundary of \(\epi \breve{W}\), which is a convex set. Thus, by the supporting hyperplane theorem there exists a supporting hyperplane \(\ell\) such that \(\breve{W}\left(\mu\right) \geq \ell\left(\mu\right)\) for all \(\mu \in \Delta\left(\Theta\right)\) and \(\breve{W}\left(\mu_{0}\right) = \ell\left(\mu_{0}\right)\). Finally, pointwise dominance of \(\hat{W}\) over \(\breve{W}\) implies \(\hat{W}\left(\mu\right) \geq \breve{W}\left(\mu\right) \geq \ell\left(\mu\right)\) for all \(\mu \in \Delta\left(\Theta\right)\), and Claim \ref{equalclaim} implies \(\hat{W}\left(\mu_{0}\right) = \breve{W}\left(\mu_{0}\right) = \ell\left(\mu_{0}\right)\).
\end{proof}

\subsection{Theorem \ref{muless} Proof}\label{mulessproof}
\begin{proof}
    Suppose for the sake of contraposition that \(\hat{V}\) does not shift-majorize \(V\). Take \(\hat{W} \coloneqq \hat{V} - V\) and recall that \(\breve{W}\) is the lower convex envelope of \(\hat{W}\). As \(\hat{V}\) does not shift-majorize \(V\), \(\breve{W}(\mu_0) < \hat{W}(\mu_0)\). This means that there exists a Bayes-plausible distribution over posteriors, \(\Phi^{\dagger}\), such that \(\mathbb{E}_{\Phi^{\dagger}} \hat{W} < \hat{W}(\mu_0)\).

    Let \(\rho\left(\mu\right)\) be some strictly convex function on \(\Delta\left(\Theta\right)\) and for an arbitrary \(\varepsilon > 0\) define function \(c_{\varepsilon}\left(\mu\right) \coloneqq \varepsilon \rho\left(\mu\right) + \hat{V}\left(\mu\right)\). By construction, for all \(\varepsilon > 0\), \(c_{\varepsilon}\) is strictly convex. Evidently, \(\hat{V} - c_{\varepsilon}\) is strictly concave for all \(\varepsilon > 0\), so \(\delta_{\mu_0}\) is the uniquely optimal solution to the information acquisition problem in \(\hat{\mathcal{D}}\). Now consider
    \[\mathbb{E}_{\Phi^{\dagger}}\left[V - c_{\varepsilon}\right] = \mathbb{E}_{\Phi^{\dagger}}\left[V - \hat{V} - \varepsilon \rho\right] = -\mathbb{E}_{\Phi^{\dagger}}\hat{W} - \varepsilon \mathbb{E}_{\Phi^{\dagger}} \rho\text{.}\]
    This expression is continuous in \(\varepsilon\) and is strictly greater than \(-\hat{W}\left(\mu_0\right)\) when \(\varepsilon = 0\). Accordingly, for all sufficiently small \(\varepsilon > 0\), this expression is strictly greater than \(-\hat{W}\left(\mu_0\right)\).\end{proof}

    \subsection{Corollary \ref{incremental} Proof}\label{incrementalproof}
    \begin{proof}
    \(\left(\Leftarrow\right)\) Lemma \ref{lemmany} implies this direction.

    \medskip

    \(\left(\Rightarrow\right)\) This is an easy implication of Lemma \ref{newexoginfo}: just let \(\Upsilon\) be a binary Bayes-plausible distribution for which one point of support is the \(\mu_0^{*}\) specified in the proof of Lemma \ref{newexoginfo} and the other point of support is some \(\mu'' \in \Delta\left(\Theta\right)\) and \(\Phi\) is the mean-preserving spread of \(\Upsilon\) supported on \(\left\{\mu,\mu',\mu''\right\}\), where \(\mu\) and \(\mu'\) are the posteriors specified in the proof of Lemma \ref{newexoginfo}. Intuitively, \(\Phi\) is obtained from \(\Upsilon\) by first observing the realized posterior from \(\Upsilon\) then learning further if the realization is \(\mu_{0}^*\). Evidently,
    \[\mathbb{E}_{\Phi}\hat{V}\left(\mu\right) - \mathbb{E}_{\Upsilon}\hat{V}\left(\mu\right) < \mathbb{E}_{\Phi}V\left(\mu\right) - \mathbb{E}_{\Upsilon}\textcolor{OrangeRed}{V}\left(\mu\right)\text{,}\]
    by construction.\end{proof}

        \subsection{Proposition \ref{prioroptimal} Proof}\label{prioroptimalproof}
    \begin{proof}
    Making the agent more flexible implies that \(\hat{V}\) lies weakly above \(V\) pointwise. Likewise, as some action remains prior-optimal \(V\left(\mu_{0}\right) = \hat{V}_{\mu_0}\). Therefore, Theorem \ref{mugenerate} produces the first statement (\ref{statement1}). To prove the the second and third statements  (\ref{statement2} and \ref{statement3}), suppose for the sake of contradiction that some action remains prior-optimal. As the agent is less flexible in \(\hat{D}\) and the removal is consequential, \(\hat{V}\) lies weakly below \(V\) pointwise, and there exists \(\mu' \in \Delta(\Theta)\) at which \(\hat{V}(\mu') < V(\mu')\). Thus, \(\hat{V}\) does not shift-majorize \(V\), and so Theorems \ref{mugenerate} and \ref{muless} allow us to conclude the results.\end{proof}

    \subsection{Proposition \ref{badremoval} Proof}\label{badremovalproof}

    \begin{proof}
    We shall prove the contrapositive statement. Let \(\hat{V} \neq V\) and \(\hat{a}_{\mu_{0}}\) not be weakly dominated by \(a_{\mu_{0}}\) in \(\mathcal{D}\). If \(\hat{a}_{\mu_{0}} = a_{\mu_{0}}\), Proposition \ref{prioroptimal} implies \(\hat{V}\) does not shift-majorize \(V\).
    
    Now suppose \(\hat{a}_{\mu_{0}} \neq a_{\mu_0}\). Let \(\alpha \cdot \mu + \beta\) and \(\hat{\alpha} \cdot \mu + \hat{\beta}\) (\(\alpha, \hat{\alpha} \in \mathbb{R}^{n-1}\) and \(\beta, \hat{\beta} \in \mathbb{R}\)) denote the expected payoffs to actions \(a_{\mu_{0}}\) and \(\hat{a}_{\mu_{0}}\). We want to show that \(\hat{V}\) does not shift-majorize \(V\). Evidently, the generic prior means that the only candidate affine modifier is \(\ell\left(\mu\right) = \left(\alpha -\hat{\alpha}\right)  \cdot \mu + \beta- \hat{\beta}\). 
    
    Suppose for the sake of contradiction that \(\hat{V} + \ell\) dominates \(V\) pointwise on \(\Delta\left(\Theta\right)\); \textit{viz.}, for all \(\mu \in \Delta\left(\Theta\right)\),
    \[\hat{V}\left(\mu\right) + \left(\alpha -\hat{\alpha}\right)  \cdot \mu + \beta- \hat{\beta} \geq V\left(\mu\right)\text{.}\]
    Observe that we can write
    \[\hat{V}\left(\mu\right) = \max\left\{T\left(\mu\right), \hat{\alpha} \cdot \mu + \hat{\beta}\right\} \quad \text{and} \quad V\left(\mu\right) = \max\left\{S\left(\mu\right), T\left(\mu\right), \alpha \cdot \mu + \beta, \hat{\alpha} \cdot \mu + \hat{\beta}\right\}\text{,}\]
    where \(T\) and \(S\) are convex, piecewise-affine functions (\(S\) is generated by the actions other than \(a_{\mu_0}\) that are removed from the decision problem).

    Therefore, for all \(\mu \in \Delta\left(\Theta\right)\),
    \[\begin{split}
        \max\left\{T\left(\mu\right), \hat{\alpha} \cdot \mu + \hat{\beta}\right\} + \left(\alpha -\hat{\alpha}\right)  \cdot \mu + \beta- \hat{\beta} &\geq \max\left\{S\left(\mu\right),T\left(\mu\right), \alpha \cdot \mu + \beta, \hat{\alpha} \cdot \mu + \hat{\beta}\right\}\\
         \max\left\{T\left(\mu\right), \hat{\alpha} \cdot \mu + \hat{\beta}\right\} + \left(\alpha -\hat{\alpha}\right)  \cdot \mu + \beta- \hat{\beta} &\geq \max\left\{T\left(\mu\right), \alpha \cdot \mu + \beta, \hat{\alpha} \cdot \mu + \hat{\beta}\right\}\\
         \left(\alpha -\hat{\alpha}\right)  \cdot \mu + \beta- \hat{\beta} &\geq \max\left\{\alpha \cdot \mu + \beta - \max\left\{T\left(\mu\right), \hat{\alpha} \cdot \mu + \hat{\beta}\right\}, 0\right\}\\ &\geq 0
        \text{,}
    \end{split}\]
so \(a_{\mu_{0}}\) dominates \(\hat{a}_{\mu_{0}}\) in \(\mathcal{D}\), a contradiction. \end{proof}

\end{document}